\begin{document}
\theoremstyle{plain}
\newtheorem{thm}{Theorem}[section]
\newtheorem{lem}[thm]{Lemma}
\newtheorem{prop}[thm]{Proposition}
\newtheorem{cor}[thm]{Corollary}
\theoremstyle{definition}
\newtheorem{assum}[thm]{Assumption}
\newtheorem{notation}[thm]{Notation}
\newtheorem{defn}[thm]{Definition}
\newtheorem{clm}[thm]{Claim}
\newtheorem{ex}[thm]{Example}
\theoremstyle{remark}
\newtheorem{rem}[thm]{Remark}
\newcommand{\unit}{\mathbb I}
\newcommand{\ali}[1]{{\mathfrak A}_{[ #1 ,\infty)}}
\newcommand{\alm}[1]{{\mathfrak A}_{(-\infty, #1 ]}}
\newcommand{\nn}[1]{\lV #1 \rV}
\newcommand{\br}{{\mathbb R}}
\newcommand{\dm}{{\rm dom}\mu}
\newcommand{\lb}{l_{\bb}(n,n_0,k_R,k_L,\lal,\bbD,\bbG,Y)}
\newcommand{\Ad}{\mathop{\mathrm{Ad}}\nolimits}
\newcommand{\Proj}{\mathop{\mathrm{Proj}}\nolimits}
\newcommand{\RRe}{\mathop{\mathrm{Re}}\nolimits}
\newcommand{\RIm}{\mathop{\mathrm{Im}}\nolimits}
\newcommand{\Wo}{\mathop{\mathrm{Wo}}\nolimits}
\newcommand{\Prim}{\mathop{\mathrm{Prim}_1}\nolimits}
\newcommand{\Primz}{\mathop{\mathrm{Prim}}\nolimits}
\newcommand{\ClassA}{\mathop{\mathrm{ClassA}}\nolimits}
\newcommand{\Class}{\mathop{\mathrm{Class}}\nolimits}
\newcommand{\diam}{\mathop{\mathrm{diam}}\nolimits}
\def\qed{{\unskip\nobreak\hfil\penalty50
\hskip2em\hbox{}\nobreak\hfil$\square$
\parfillskip=0pt \finalhyphendemerits=0\par}\medskip}
\def\proof{\trivlist \item[\hskip \labelsep{\bf Proof.\ }]}
\def\endproof{\null\hfill\qed\endtrivlist\noindent}
\def\proofof[#1]{\trivlist \item[\hskip \labelsep{\bf Proof of #1.\ }]}
\def\endproofof{\null\hfill\qed\endtrivlist\noindent}
\newcommand{\caA}{{\mathcal A}}
\newcommand{\caB}{{\mathcal B}}
\newcommand{\caC}{{\mathcal C}}
\newcommand{\caD}{{\mathcal D}}
\newcommand{\caE}{{\mathcal E}}
\newcommand{\caF}{{\mathcal F}}
\newcommand{\caG}{{\mathcal G}}
\newcommand{\caH}{{\mathcal H}}
\newcommand{\caI}{{\mathcal I}}
\newcommand{\caJ}{{\mathcal J}}
\newcommand{\caK}{{\mathcal K}}
\newcommand{\caL}{{\mathcal L}}
\newcommand{\caM}{{\mathcal M}}
\newcommand{\caN}{{\mathcal N}}
\newcommand{\caO}{{\mathcal O}}
\newcommand{\caP}{{\mathcal P}}
\newcommand{\caQ}{{\mathcal Q}}
\newcommand{\caR}{{\mathcal R}}
\newcommand{\caS}{{\mathcal S}}
\newcommand{\caT}{{\mathcal T}}
\newcommand{\caU}{{\mathcal U}}
\newcommand{\caV}{{\mathcal V}}
\newcommand{\caW}{{\mathcal W}}
\newcommand{\caX}{{\mathcal X}}
\newcommand{\caY}{{\mathcal Y}}
\newcommand{\caZ}{{\mathcal Z}}
\newcommand{\bbA}{{\mathbb A}}
\newcommand{\bbB}{{\mathbb B}}
\newcommand{\bbC}{{\mathbb C}}
\newcommand{\bbD}{{\mathbb D}}
\newcommand{\bbE}{{\mathbb E}}
\newcommand{\bbF}{{\mathbb F}}
\newcommand{\bbG}{{\mathbb G}}
\newcommand{\bbH}{{\mathbb H}}
\newcommand{\bbI}{{\mathbb I}}
\newcommand{\bbJ}{{\mathbb J}}
\newcommand{\bbK}{{\mathbb K}}
\newcommand{\bbL}{{\mathbb L}}
\newcommand{\bbM}{{\mathbb M}}
\newcommand{\bbN}{{\mathbb N}}
\newcommand{\bbO}{{\mathbb O}}
\newcommand{\bbP}{{\mathbb P}}
\newcommand{\bbQ}{{\mathbb Q}}
\newcommand{\bbR}{{\mathbb R}}
\newcommand{\bbS}{{\mathbb S}}
\newcommand{\bbT}{{\mathbb T}}
\newcommand{\bbU}{{\mathbb U}}
\newcommand{\bbV}{{\mathbb V}}
\newcommand{\bbW}{{\mathbb W}}
\newcommand{\bbX}{{\mathbb X}}
\newcommand{\bbY}{{\mathbb Y}}
\newcommand{\bbZ}{{\mathbb Z}}
\newcommand{\str}{^*}
\newcommand{\lv}{\left \vert}
\newcommand{\rv}{\right \vert}
\newcommand{\lV}{\left \Vert}
\newcommand{\rV}{\right \Vert}
\newcommand{\la}{\left \langle}
\newcommand{\ra}{\right \rangle}
\newcommand{\ltm}{\left \{}
\newcommand{\rtm}{\right \}}
\newcommand{\lcm}{\left [}
\newcommand{\rcm}{\right ]}
\newcommand{\ket}[1]{\lv #1 \ra}
\newcommand{\bra}[1]{\la #1 \rv}
\newcommand{\lmk}{\left (}
\newcommand{\rmk}{\right )}
\newcommand{\al}{{\mathcal A}}
\newcommand{\md}{M_d({\mathbb C})}
\newcommand{\id}{\mathop{\mathrm{id}}\nolimits}
\newcommand{\Tr}{\mathop{\mathrm{Tr}}\nolimits}
\newcommand{\Ran}{\mathop{\mathrm{Ran}}\nolimits}
\newcommand{\Ker}{\mathop{\mathrm{Ker}}\nolimits}
\newcommand{\Aut}{\mathop{\mathrm{Aut}}\nolimits}
\newcommand{\spn}{\mathop{\mathrm{span}}\nolimits}
\newcommand{\Mat}{\mathop{\mathrm{M}}\nolimits}
\newcommand{\UT}{\mathop{\mathrm{UT}}\nolimits}
\newcommand{\DT}{\mathop{\mathrm{DT}}\nolimits}
\newcommand{\GL}{\mathop{\mathrm{GL}}\nolimits}
\newcommand{\spa}{\mathop{\mathrm{span}}\nolimits}
\newcommand{\supp}{\mathop{\mathrm{supp}}\nolimits}
\newcommand{\rank}{\mathop{\mathrm{rank}}\nolimits}
\newcommand{\idd}{\mathop{\mathrm{id}}\nolimits}
\newcommand{\ran}{\mathop{\mathrm{Ran}}\nolimits}
\newcommand{\dr}{ \mathop{\mathrm{d}_{{\mathbb R}^k}}\nolimits} 
\newcommand{\dc}{ \mathop{\mathrm{d}_{\cc}}\nolimits} \newcommand{\drr}{ \mathop{\mathrm{d}_{\rr}}\nolimits} 
\newcommand{\zin}{\mathbb{Z}}
\newcommand{\rr}{\mathbb{R}}
\newcommand{\cc}{\mathbb{C}}
\newcommand{\ww}{\mathbb{W}}
\newcommand{\nan}{\mathbb{N}}\newcommand{\bb}{\mathbb{B}}
\newcommand{\aaa}{\mathbb{A}}\newcommand{\ee}{\mathbb{E}}
\newcommand{\pp}{\mathbb{P}}
\newcommand{\wks}{\mathop{\mathrm{wk^*-}}\nolimits}
\newcommand{\he}{\hat {\mathbb E}}
\newcommand{\ikn}{{\caI}_{k,n}}
\newcommand{\mk}{{\Mat_k}}
\newcommand{\mnz}{\Mat_{n_0}}
\newcommand{\mn}{\Mat_{n}}
\newcommand{\mkk}{\Mat_{k_R+k_L+1}}
\newcommand{\mnzk}{\mnz\otimes \mkk}
\newcommand{\hbb}{H^{k,\bb}_{m,p,q}}
\newcommand{\gb}[1]{\Gamma^{(R)}_{#1,\bb}}
\newcommand{\cgv}[1]{\caG_{#1,\vv}}
\newcommand{\gv}[1]{\Gamma^{(R)}_{#1,\vv}}
\newcommand{\gvt}[1]{\Gamma^{(R)}_{#1,\vv(t)}}
\newcommand{\gbt}[1]{\Gamma^{(R)}_{#1,\bb(t)}}
\newcommand{\cgb}[1]{\caG_{#1,\bb}}
\newcommand{\cgbt}[1]{\caG_{#1,\bb(t)}}
\newcommand{\gvp}[1]{G_{#1,\vv}}
\newcommand{\gbp}[1]{G_{#1,\bb}}
\newcommand{\gbpt}[1]{G_{#1,\bb(t)}}
\newcommand{\Pbm}[1]{\Phi_{#1,\bb}}
\newcommand{\Pvm}[1]{\Phi_{#1,\bb}}
\newcommand{\mb}{m_{\bb}}
\newcommand{\E}[1]{\widehat{\mathbb{E}}^{(#1)}}
\newcommand{\lal}{{\boldsymbol\lambda}}
\newcommand{\rar}{{\boldsymbol r}}
\newcommand{\oo}{{\boldsymbol\omega}}
\newcommand{\vv}{{\boldsymbol v}}
\newcommand{\bbm}{{\boldsymbol m}}
\newcommand{\kl}[1]{{\mathcal K}_{#1}}
\newcommand{\wb}[1]{\widehat{B_{\mu^{(#1)}}}}
\newcommand{\ws}[1]{\widehat{\psi_{\mu^{(#1)}}}}
\newcommand{\wsn}[1]{\widehat{\psi_{\nu^{(#1)}}}}
\newcommand{\wv}[1]{\widehat{v_{\mu^{(#1)}}}}
\newcommand{\wbn}[1]{\widehat{B_{\nu^{(#1)}}}}
\newcommand{\wo}[1]{\widehat{\omega_{\mu^{(#1)}}}}
\newcommand{\dist}{\dc}
\newcommand{\hpu}{\hat P^{(n_0,k_R,k_L)}_R}
\newcommand{\hpd}{\hat P^{(n_0,k_R,k_L)}_L}
\newcommand{\pu}{ P^{(k_R,k_L)}_R}
\newcommand{\pd}{ P^{(k_R,k_L)}_L}
\newcommand{\puuz}{P_{R}^{(n_0-1,n_0-1)}\otimes P^{(k_R,k_L)}_R}
\newcommand{\pddz}{P_{L}^{(n_0-1,n_0-1)}\otimes P^{(k_R,k_L)}_L}
\newcommand{\puu}{\tilde P_R}
\newcommand{\pdd}{\tilde P_L}
\newcommand{\qu}[1]{ Q^{(k_R,k_L)}_{R, #1}}
\newcommand{\qd}[1]{ Q^{(k_R,k_L)}_{L,#1}}
\newcommand{\hqu}[1]{ \hat Q^{(n_0,k_R,k_L)}_{R, #1}}
\newcommand{\hqd}[1]{ \hat Q^{(n_0,k_R,k_L)}_{L,#1}}
\newcommand{\eij}[1] {E^{(k_R,k_L)}_{#1}}
\newcommand{\eijz}[1] {E^{(n_0-1,n_0-1)}_{#1}}
\newcommand{\heij}[1] {\hat E^{(k_R,k_L)}_{#1}}
\newcommand{\cn}{\mathop{\mathrm{CN}(n_0,k_R,k_L)}\nolimits}
\newcommand{\ghd}[1]{\mathop{\mathrm{GHL}(#1,n_0,k_R,k_L,\bbG)}\nolimits}
\newcommand{\ghu}[1]{\mathop{\mathrm{GHR}(#1,n_0,k_R,k_L,\bbD)}\nolimits}
\newcommand{\ghdb}[1]{\mathop{\mathrm{GHL}(#1,n_0,k_R,k_L,\bbG)}\nolimits}
\newcommand{\ghub}[1]{\mathop{\mathrm{GHR}(#1,n_0,k_R,k_L,\bbD)}\nolimits}
\newcommand{\hfu}[1]{{\mathfrak H}_{#1}^R}
\newcommand{\hfd}[1]{{\mathfrak H}_{#1}^L}
\newcommand{\hfui}[1]{{\mathfrak H}_{#1,1}^R}
\newcommand{\hfdi}[1]{{\mathfrak H}_{#1,1}^L}
\newcommand{\hfuz}[1]{{\mathfrak H}_{#1,0}^R}
\newcommand{\hfdz}[1]{{\mathfrak H}_{#1,0}^L}
\newcommand{\CN}{\overline{\hpd}\lmk\mnzk \rmk\overline{\hpu}}
\newcommand{\cnz}[1] {\chi_{#1}^{(n_0)}}
\newcommand{\eu}{\eta_{R}^{(k_R,k_L)}}
\newcommand{\ezu}{\eta_{R}^{(n_0-1,n_0-1)}}
\newcommand{\ed}{\eta_{L}^{(k_R,k_L)}}
\newcommand{\ezd}{\eta_{L}^{(n_0-1,n_0-1)}}
\newcommand{\fii}[1]{f_{#1}^{(k_R,k_L)}}
\newcommand{\fiir}[1]{f_{#1}^{(k_R,0)}}
\newcommand{\fiil}[1]{f_{#1}^{(0,k_L)}}
\newcommand{\fiz}[1]{f_{#1}^{(n_0-1,n_0-1)}}
\newcommand{\zeij}[1] {e_{#1}^{(n_0)}}
\newcommand{\CL}{\ClassA}
\newcommand{\CLn}{\Class_2(n,n_0,k_R,k_L)}
\newcommand{\braket}[2]{\left\langle#1,#2\right\rangle}
\newcommand{\abs}[1]{\left\vert#1\right\vert}
\newtheorem{nota}{Notation}[section]
\def\qed{{\unskip\nobreak\hfil\penalty50
\hskip2em\hbox{}\nobreak\hfil$\square$
\parfillskip=0pt \finalhyphendemerits=0\par}\medskip}
\def\proof{\trivlist \item[\hskip \labelsep{\bf Proof.\ }]}
\def\endproof{\null\hfill\qed\endtrivlist\noindent}
\def\proofof[#1]{\trivlist \item[\hskip \labelsep{\bf Proof of #1.\ }]}
\def\endproofof{\null\hfill\qed\endtrivlist\noindent}
\newcommand{\ZZ}{\bbZ_2\times\bbZ_2}
\newcommand{\SSS}{\mathcal{S}}
\newcommand{\cs}{s}
\newcommand{\ct}{t}
\newcommand{\hS}{S}
\newcommand{\new}[1]{#1}

\title{A $\bbZ_2$-index of symmetry protected topological phases with time reversal symmetry for quantum spin chains}
\author{Yoshiko Ogata \thanks{ Graduate School of Mathematical Sciences
The University of Tokyo, Komaba, Tokyo, 153-8914, Japan. 
Supported in part by
the Grants-in-Aid for
Scientific Research, JSPS.}}
\maketitle

\begin{abstract}
We introduce a $\bbZ_2$-index for time reversal invariant Hamiltonians with unique gapped ground state on quantum spin chains. 
We show this is an invariant of a $C^1$-classification of gapped Hamiltonians.
Analogous results hold
for more general on-site finite group symmetry, 
with the 2-cohomology class
as the invariant. 
\end{abstract}

\section{Introduction}
The notion of symmetry protected topological (SPT) phases was introduced by Gu and Wen [GW].
It is defined as follows:
we consider the set of all Hamiltonians with some symmetry, 
which have a unique gapped ground state in the bulk.
We regard two of such Hamiltonians are equivalent, if
there is a continuous path inside that family, connecting them.
By this equivalence relation, we may classify the Hamiltonians 
in this family.
A Hamiltonian which has only on-site interaction can be regarded as 
a trivial one. The set of Hamiltonians equivalent to such trivial ones represents a
trivial phase.
If a phase is nontrivial, it is a SPT phase.
A typical nontrivial example of an SPT phase is the Haldane phase \cite{Haldane1983a}\cite{Haldane1983b} in quantum spin chains with odd integer spin. 
Whether the Haldane phase is SPT or not has been studied substantially and
produced a fruitful theory of SPT phase 
\cite{Affleck:1988vr},
\cite{denNijsRommelse},\cite{Kennedy1990}
\cite{kt},\cite{kt2},\cite{Perez-Garcia2008},\cite{GuWen2009}, 
\cite{po},\cite{po2}, \cite{ChenGuWEn2011}.

A natural question to ask is what are invariants of this classification.
Following an earlier attempt in [PWSVC], 
Pollmann, Turner, Berg, and Oshikawa [PTBO1,PTBO2] introduced $\bbZ_2$-indices for injective matrix products states which have either $\bbZ_2\times \bbZ_2$ on-site symmetry(dihedral group of $\pi$-rotations about $x$, $y$, and $z$-axes), reflection symmetry, or time reversal symmetry.  
The $\bbZ_2$-index beyond the framework of matrix product state was recently introduced by Tasaki
for systems satisfying on-site $U(1)$-symmetry together with one of $\bbZ_2\times\bbZ_2$-onsite symmetry/reflection symmetry/time reversal symmetry \cite{ta}.
He showed that these are actually invariant of the classification.
In \cite{bn}, an operator called {\it excess spin} was introduced for two one-dimensional models with continuous symmetry,
and was shown to be related to the classification of gapped Hamiltonians on the half infinite chain.

In this paper, we focus on SPT phases of quantum spin chains in the bulk, with the time reversal symmetry.
We introduce a $\bbZ_2$-index for the time reversal invariant Hamiltonians with unique gapped ground state. The key ingredient is the projective representation associated to the unique bulk gapped ground state.
As the time reversal symmetry is discrete and anti-linear, something like excess spin looks hard to define. However, by considering the associated projective representation, we may define the $\bbZ_2$-index.
It turns out that this $\bbZ_2$-index is an invariant of the $C^1$-classification: suppose that
there is a $C^1$-path of interactions, and suppose that if we associate some suitable boundary condition,
it gives local Hamiltonians which are gapped for an
increasing sequence of finite boxes. (See Definition \ref{boundary}.)
Then the $\bbZ_2$-index does not change along the path.
These boundary conditions can be arbitrary, as long as they guarantee the gap. We may take it as periodic boundary condition, for example.
Furthermore, the boundary condition itself does not need to be time reversal invariant.
As stated above, for time reversal invariant injective matrix product state,
a $\bbZ_2$-index was introduced in \cite{po}.
It turns out that this is a special case of our $\bbZ_2$-index.

{\new { The key ingredient for the definition of the index is the {\it split property} that is satisfied by unique gapped ground states.
This important property of unique gapped ground states was proven by T. Matsui in \cite{Matsui2}.
Unfortunately, this significant result  has not been paid enough attention to so far 
(up to our recent paper \cite{ot} on  Lieb-Schultz-Mattis type theorems), because
it looked to have no new physical application. 
In this paper, we would like to emphasize that it {\it does}
have a very important application to a major question in physics, namely, to the classification problem of SPT phases.
The use of split property for classification problem is one important new suggestion from this paper, which turns out to be a very strong tool.
}}

Analogous argument can be carried out
for on-site finite group symmetry, in particular the $\bbZ_2\times\bbZ_2$-symmetry,
and the cohomology classs of projective representation 
is an invariant along the analogous path of such Hamiltonians. (See Appendix \ref{onsite}.)
The projective representation for on-site symmetry has been known for some time. In particular,  Matsui developed a mathematical theory for quantum spin chains based on projective representations of on-site group symmetry, for general pure states which satisfy the split property \cite{Matsui1}.

{\new {
What is new in this paper is identifying the 2-cohomology of the projective representation
as the index of SPT-phases, and showing that it is actually an invariant of the classification.
The projective representation shows up as follows.
Let us consider the unique gapped ground state
$\omega$ of an interaction satisfying the symmetry.
We consider the GNS triple 
$(\caH_R,\pi_R,\Omega_R)$
of the restriction $\left. \omega\right\vert_{\caA_R}$ of $\omega$ to the right infinite chain $\caA_R$.
The split property of $\omega$ is by definition that $\pi_R(\caA_R)''$
is $*$-isomorphic to $B(\caK)$, for some Hilbert space $\caK$.
By this $*$-isomorphism, the action of the symmetry on $\pi_R(\caA_R)''$
(which exists because of the invariance of $\omega$ under the symmetry) can be translated to
the action on $B(\caK)$.
However, by the Wigner Theorem, the action on $B(\caK)$ can be given by
unitary/ anti-unitary.
This gives the projective representation associated to $\omega$.
And our index is the 2-cohomology associated to this projective representation.

But how can one prove the stability of such objects like second cohomology class?
The index shows up rather abstractly in the context of the GNS
representation. In particular, it is not an observable.
Our idea is to carry out the quasi-equivalence argument
combined with the factorization property of quasi-local automorphisms.
More precisely, it is known \cite{bmns} that if $\Phi_0$ and $\Phi_1$ are interactions
connected by the path of gapped Hamiltonians, then there is
a quasi-local automorphism $\alpha$
which connects the corresponding unique gapped ground states as
$\omega_{\Phi_1}=\omega_{\Phi_0}\circ\alpha$.
We show that this $\alpha$ satisfies a factorization property, namely,
there are automorphisms $\alpha_R$ on $\caA_R$
and  $\alpha_L$ on $\caA_L$
such that $\alpha\circ \lmk \alpha_L^{-1}\otimes \alpha_R^{-1}\rmk$
is inner.
(When system satisfies the symmetry, $\alpha_R$ can be taken 
to commute with (anti-)automorphisms implementing the symmetry.)
From this and the split property,
we show that $\left.{\omega_{\Phi_0}}\right\vert_{\caA_R}\circ\alpha_R$
and ${\omega_{\Phi_1}}_{\caA_R}$
are quasi-equivalent.
This simple observation turns out to be the key for our analysis.
Using the $*$-isomorphism coming from this quasi-equivalence
and the $*$-isomorphisms coming from the split property
of $\omega_{\Phi_0}$, $\omega_{\Phi_1}$, we can prove the
stability of the index. (See section \ref{split}.)
Although the automorphic equivalence is a well-developed subject, 
we believe
the use of it combined with quasi-equivalence like this,
cannot be found in the literatures. This is a new technical suggestion from this paper, which is simple,
but turns out to be very useful.

}}

{\new{
We also would like to emphasize the importance of considering $\left. \omega\right\vert_{\caA_R}$
instead of $\omega$.
Of course we do have an action of the symmetry on the GNS triple
$(\caH,\pi,\Omega)$ of $\omega$.
But this action is given by a genuine unitary action $U$. It does not
give any nontrivial cohomology class.
One may argue that we still can investigate this genuine unitary representation $U$
as an invariant, like by investigating which irreducible representation is included in it. 
But in the paper \cite{OGS}, we show  that at least for
on-site linear action of finite group, {\it every} irreducible representation is contained
in $U$.
Because of this result, we do not expect that $U$ can play an important role in
the classification problem of SPT-phases.
We also would like to remark that importance of the cut is not specialized 
to our theory.
It is actually in the sprit of the original paper \cite{po} \cite{po2},
and of all the works considering the entanglement entropy. 
In the index theory of \cite{bdf}, it is also
very important to consider the half of the system.
}}
{\new{
The property of entanglement entropy can be addressed as well in our framework.
If the index is $-1$, 
the entanglement entropy is
larger than or equal to $\log 2$.
(See the end of Section \ref{index}.)
}}

The physical impact of our result is as follows.
In a word, we here develop a mathematical index theorem, with which the observations in the physics literature such as [CG,PTOB1,PTOB2,CGW] about the indices and phase structures in quantum spin chains with time reversal 
symmetry are 
made rigorous.
To see an important example, consider $S=1$ quantum spin chains with the AKLT interaction [AKLT]
\begin{equation}
\Phi_{\rm AKLT}(X)=
\begin{cases}
\sum_{\nu=1}^3S^\nu_jS^\nu_{j+1}+(\sum_{\nu=1}^3S^\nu_jS^\nu_{j+1})^2/3&\text{if $X=\{j,j+1\}$ with $j\in\bbZ$}\\
0&\text{otherwise},
\end{cases}
\end{equation}
and the trivial on-site intereaction
\begin{equation}
\Phi_{\rm trivial}(X)=
\begin{cases}
(S^3_j)^2&\text{if $X=\{j\}$ with $j\in\bbZ$}\\
0&\text{otherwise}.
\end{cases}
\end{equation}
See section \ref{index} for notations.
Both of them are time reversal invariant.
It is known that these models have a unique gapped ground state, which we denote by $\varphi_{\rm AKLT}$ and $\varphi_{\rm trivial}$. Both of $\varphi_{\rm AKLT}$ and $\varphi_{\rm trivial}$
are matrix product states.
{\new{What is believed in physics community is that 
AKLT and trivial interaction should belong to different phases, if we require time reversal symmetry
to be preserved.
For example, if we interpolate them as $s\Phi_{\rm AKLT}+(1-s)\Phi_{\rm trivial}$, with an interpolation parameter
$s\in[0,1]$, it is expected that there is a point $s\in (0,1)$
that the interaction does not have a unique gapped ground state.}}
Our theorem, along with results in [PTOB1,PTOB2,CGW,Tas2] about matrix product states, shows that these ground states are characterized by the indices $\sigma_{\varphi_{\rm AKLT}}=-1$ and $\sigma_{\varphi_{\rm trivial}}=1$.(See section \ref{oshikawa}, for our index in matrix product states.)
We then get the following.
\begin{cor}
The two interactions $\Phi_{\rm AKLT}$ and $\Phi_{\rm trivial}$ can never be connected by a $C^1$-path of time reversal invariant 
interactions satisfying the  {\it Condition B}.
\end{cor}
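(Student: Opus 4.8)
The plan is to argue by contradiction, feeding the two models into the invariance theorem. Suppose there were a $C^1$-path $s\mapsto\Phi(s)$, $s\in[0,1]$, of time reversal invariant interactions satisfying {\it Condition B} with $\Phi(0)=\Phi_{\rm trivial}$ and $\Phi(1)=\Phi_{\rm AKLT}$. By {\it Condition B}, each $\Phi(s)$ has a unique gapped ground state $\omega_{\Phi(s)}$, and since each $\Phi(s)$ is time reversal invariant the $\bbZ_2$-index $\sigma_{\omega_{\Phi(s)}}\in\{\pm1\}$ is defined through the projective representation on the GNS space of $\left.\omega_{\Phi(s)}\right\vert_{\caA_R}$. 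The main theorem then forces $s\mapsto\sigma_{\omega_{\Phi(s)}}$ to be constant on $[0,1]$, so in particular $\sigma_{\varphi_{\rm trivial}}=\sigma_{\omega_{\Phi(0)}}=\sigma_{\omega_{\Phi(1)}}=\sigma_{\varphi_{\rm AKLT}}$.

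It remains to evaluate the two endpoint indices and see they disagree. Both $\Phi_{\rm trivial}$ and $\Phi_{\rm AKLT}$ are known to have unique gapped ground states, $\varphi_{\rm trivial}$ and $\varphi_{\rm AKLT}$, each a matrix product state; so I would invoke the matrix product state computation of Section \ref{oshikawa}, which identifies our abstract GNS index with the $\bbZ_2$-index of \cite{po}\cite{po2} in the injective MPS setting, and then read off from [PTOB1,PTOB2,CGW,Tas2] the values $\sigma_{\varphi_{\rm trivial}}=1$ and $\sigma_{\varphi_{\rm AKLT}}=-1$. Combined with the previous paragraph this gives $1=-1$, the desired contradiction; hence no such path exists.

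Within this corollary the deduction is thus a short syllogism, and the real work is carried by results established (or cited) earlier: (i) the invariance of $\sigma$ along {\it Condition B} paths, whose proof is the heart of the paper -- it rests on the factorization property of the quasi-local automorphism connecting the ground states together with the quasi-equivalence-plus-split-property argument of Section \ref{split}; and (ii) the matching of the abstract index with the transfer-operator formula for matrix product states of Section \ref{oshikawa}, after which the distinct values for AKLT and trivial are a finite-dimensional check. The one point to be careful about here is that the endpoints genuinely lie in the class where the index is meaningful, i.e. that $\Phi_{\rm trivial}$ and $\Phi_{\rm AKLT}$ each have a unique gapped ground state and are time reversal invariant, both of which are classical facts for these models, so that the theorem may legitimately be applied to a path through them. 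The main obstacle is therefore not in this corollary at all, but in the invariance theorem it quotes.
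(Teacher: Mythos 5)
Your proposal is correct and follows exactly the paper's intended argument: the corollary is the combination of Theorem \ref{c1t} (invariance of $\hat\sigma_\Phi$ along \textit{Condition B} paths), Theorem \ref{poo} identifying the abstract index with the MPS index of \cite{po}, and the known values $\sigma_{\varphi_{\rm AKLT}}=-1$, $\sigma_{\varphi_{\rm trivial}}=1$, yielding the contradiction $1=-1$.
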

See section \ref{c1main} for the definition of {\it Condition B}.
Analogous result can be shown for $\ZZ$-symmetry.
This roughly means that there must be a gapless model in between the models with $\Phi_{\rm AKLT}$ and $\Phi_{\rm trivial}$, provided that the time reversal or $\ZZ$ on-site symmetry is preserved.\footnote{%
We show that, for any $C^1$-path of interactions with the required symmetry and for any $C^1$-path of boundary conditions without symmetry, there must be a point at which the energy gap of a finite chain vanishes as the length of the chain increases.
}
The conjuecture that the AKLT model is in a nontrivial SPT phase has been established for the cases with time reversal symmetry and on-site $\ZZ$ symmetry.
Note that [Tas1] proves similar result, but with extra assumption that the interactions are also U(1) invariant.
However, his path does not need to be $C^1$.

This article is organized as follows. In Section \ref{index}, we introduce the $\bbZ_2$-index
Actually, this index is defined not only for pure ground states of the gapped Hamiltonians, but more generally, it is defined for pure states satisfying the split property. By \cite{Matsui2}, it is known that
a pure ground state of the gapped Hamiltonian satisfies the split property.
We show that this $\bbZ_2$-index is the same for two pure split states,
if they are automorphic equivalent via a time reversal invariant automorphism
which allows a time reversal invariant factorization. (See Definition \ref{aet}, and Definieion \ref{adt}.)
The proof is given in Section \ref{split}.
For a unique gapped ground state of 
a time reversal invariant Hamiltonian, 
as it is a special type of a pure split state, we may associate the $\bbZ_2$-index.
It turns out that
for a path of gapped Hamiltonians given in Definition \ref{boundary},
two ends ground states are automorphic equivalent via a time reversal invariant automorphism
which allows a time reversal invariant factorization. (Proposition \ref{c1t}.)
Therefore, our $\bbZ_2$-index is an invariant of this $C^1$-classification.
The proof is given in Section \ref{c1}.
In Section \ref{oshikawa}, we show that
the $\bbZ_2$-index in \cite{po}
is a special case of our $\bbZ_2$-index.

\section{The $\bbZ_2$-index}\label{index}
We start by summarizing standard setup of quantum spin chains on the infinite chain \cite{BR1,BR2}.
Let $S$ be an element of $\frac 12 \bbN$ and let $\SSS=\{-S,-S+1,\ldots,S-1,S\}$.
We denote the algebra of $(2S+1)\times (2S+1)$ matrices by $\Mat_{2S+1}$.
We denote the standard basis of $\cc^{2S+1}$ by $\{\psi_\mu\}_{\mu\in\SSS}$, and 
set $e_{\mu,\nu}=\ket{\psi_\mu}\bra{\psi_\nu}$ for each $\mu,\nu\in\SSS$.
Let $\hS_1, \hS_2, \hS_3\in\Mat_{2S+1}$ be the standard spin operators 
on $\bbC^{2S+1}$.
They satisfy $(\hS_1)^2+(\hS_2)^2+(\hS_3)^2=S(S+1)$ and the commutation relations $[\hS_1,\hS_2]=i\hS_3$, $[\hS_2,\hS_3]=i\hS_1$, and $[\hS_3,\hS_1]=i\hS_2$. (See \cite{TasakiBook}.)

We denote the set of all finite subsets in ${\bbZ}$ by ${\mathfrak S}_{\bbZ}$,
and the set of all finite intervals in ${\bbZ}$ by ${\mathfrak I}_{\bbZ}$.
For each $X\in {\mathfrak S}_{\bbZ}$, $\diam(X)$ denotes the diameter of $X$.
For $X,Y\subset \bbZ$, we denote by $d(X,Y)$, the distance between them.
The number of elements in a finite set $\Lambda\subset {\bbZ}$ is denoted by
$|\Lambda|$. For each $n\in\bbN$, we denote $[-n,n]\cap \bbZ$ by $\Lambda_n$.
The complement of $\Lambda$ in $\bbZ$ is denoted by $\Lambda^c$.

For each $z\in\bbZ$,  let $\caA_{\{z\}}$ be an isomorphic copy of $\Mat_{2S+1}$, and for any finite subset $\Lambda\subset\bbZ$, let $\caA_{\Lambda} = \otimes_{z\in\Lambda}\caA_{\{z\}}$, which is the local algebra of observables in $\Lambda$. 
For finite $\Lambda$, the algebra $\caA_{\Lambda} $ can be regarded as the set of all bounded operators acting on
the Hilbert space $\otimes_{z\in\Lambda}{\bbC}^{2S+1}$.
We use this identification freely.
If $\Lambda_1\subset\Lambda_2$, the algebra $\caA_{\Lambda_1}$ is naturally embedded in $\caA_{\Lambda_2}$ by tensoring its elements with the identity. 
The algebra $\caA_{R}$ (resp. $\caA_L$) representing the half-infinite chain
is given as the inductive limit of the algebras $\caA_{\Lambda}$ with $\Lambda\in{\mathfrak S}_{\bbZ}$, $\Lambda\subset[0,\infty)$
(resp. $\Lambda\subset(-\infty -1]$). 
The algebra $\caA$, representing the two sided infinite chain
is given as the inductive limit of the algebras $\caA_{\Lambda}$ with $\Lambda\in{\mathfrak S}_{\bbZ}$. 
Note that $\caA_{\Lambda}$ for $\Lambda\in {\mathfrak S}_{\bbZ}$,
and $\caA_R$ can be regarded naturally as subalgebras of
$\caA$.
Under this identification, for each $z\in \bbZ$,
we denote the 
spin operators in $\caA_{\{z\}}\subset\caA$ by $\hS_{1}^{(z)}, \hS_{2}^{(z)}, \hS_{3}^{(z)}$.
We denote the set of local observables by $\caA_{\rm loc}=\bigcup_{\Lambda\in{\mathfrak S}_\bbZ}\caA_{\Lambda}
$.
We denote by $\beta_x$ the automorphisms on $\caA$ representing the space translation by  $x\in\bbZ$.

Time reversal is the unique  antilinear unital $*$-automorphism $\Xi$ on $\caA$ satisfying
\begin{align*}
\Xi(S_j^{(z)}) = -S_j^{(z)},\quad j=1,2,3,\quad z\in\bbZ.
\end{align*}
(See Appendix B of \cite{ot} for the existence of such an automorphism.)
Note that $\Xi$ commutes with $\beta_x$ for any $x\in\bbZ$.
As $\Xi(\caA_R)=\caA_R$ (resp. $\Xi(\caA_L)=\caA_L$), the restriction $\Xi_R:=\left. \Xi\right \vert_{\caA_R}$ (resp. $\Xi_L:=\left. \Xi\right \vert_{\caA_L}$)
is an antilinear unital $*$-automorphism on $\caA_R$ (resp. $\caA_L$). 
For a state $\varphi$ on $\caA$, its time reversal $\hat \varphi$ is
given by 
\[
\hat \varphi\lmk A\rmk
=\varphi\lmk\Xi\lmk A^*\rmk\rmk,\quad
A\in \caA.
\]
We say $\varphi$ is time reversal invariant if we have $\varphi=\hat\varphi$.
%
%

We introduce $\bbZ_2$-index for pure time reversal invariant state satisfying the split property.
Let us first recall the definition of the split property.
We here give the following definition of the split property, which is most suitable for our purpose.
It corresponds to the standard definition \cite{dl} in our setting (see \cite{Matsui2}).
\begin{defn}
Let $\varphi$ be a pure state on $\caA$. Let $\varphi_R$ be the restriciton of
$\varphi$ to $\caA_R$, and $(\caH_{\varphi_R},\pi_{\varphi_R},\Omega_{\varphi_R})$ be the GNS triple of $\varphi_R$.
We say $\varphi$ satisfies the split property with respect to $\caA_L$ and $\caA_R$,
if the von Neumann algebra $\pi_{\varphi_R}(\caA_{R})''$ is a type I factor.
\end{defn}
Recall that a type I factor is isomorphic to $B(\caK)$, the set of all bounded operators 
on a  Hilbert space $\caK$. 
See \cite{takesaki}.

\begin{thm}\label{sp}
Let $\varphi$ be a time reversal invariant pure state on $\caA$, which satisfies the split property. Let $\varphi_R$ be the restriciton of
$\varphi$ to $\caA_R$, and $(\caH_{\varphi_R},\pi_{\varphi_R},\Omega_{\varphi_R})$ be the GNS triple of $\varphi_R$.

Then there are a Hilbert space $\caK_\varphi$, a $*$-isomorphism $\iota_\varphi : \pi_{\varphi_R}\lmk \caA_R\rmk{''}\to B(\caK_{\varphi})$,
and an antiunitary $J_{\varphi}$ on $\caK_{\varphi}$ such that
\begin{align*}
\iota_\varphi\circ \pi_{\varphi_R}\circ\Xi_R\lmk A\rmk
=J_\varphi \lmk \iota_\varphi \circ\pi_{\varphi_R}\lmk A\rmk\rmk J_\varphi^*,\quad
A\in\caA_R.
\end{align*}
Futhermore, $J_\varphi^2=\sigma_\varphi\unit$, with $\sigma_\varphi=1$ or $\sigma_\varphi=-1$.
These $\caK_\varphi$, $\iota_\varphi$, $J_{\varphi}$ are unique in the following sense.: 
If  a Hilbert space  $\tilde \caK_\varphi$, a $*$-isomorphism $\tilde\iota_\varphi : \pi_{\varphi_R}\lmk \caA_R\rmk{''}\to B(\tilde \caK_{\varphi})$,
and an antiunitary $\tilde J_{\varphi}$ on $\tilde \caK_{\varphi}$ satisfy
\begin{align*}
\tilde \iota_\varphi\circ \pi_{\varphi_R}\circ\Xi_R\lmk A\rmk
=\tilde J_\varphi \lmk \tilde \iota_\varphi \circ\pi_{\varphi_R}\lmk A\rmk\rmk {\tilde J_\varphi}^*,\quad
A\in\caA_R,
\end{align*}
then
there is a unitary $W:\caK_\varphi\to \tilde \caK_\varphi$ and $e^{i\theta}\in \bbT$
such that
\begin{align*}
&W\lmk \iota_\varphi \lmk x\rmk\rmk W^*=
\tilde \iota_\varphi \lmk x\rmk,
 \quad x\in \pi_{\varphi_R}\lmk \caA_R\rmk{''},\\
&e^{i\theta} WJ_\varphi W^*=\tilde J_\varphi.
 \end{align*}
In particular, $\tilde {J_\varphi}^2=\sigma_\varphi\unit$.
\end{thm}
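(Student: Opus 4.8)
The plan is to realise the time reversal symmetry concretely inside the GNS representation of $\varphi_R$ as an antiunitary, transport it through the type I structure coming from the split property, and read off the sign from the interplay between a central scalar and the antilinearity of $J_\varphi$. First I would use the split property to fix a $*$-isomorphism $\iota_\varphi\colon\pi_{\varphi_R}\lmk\caA_R\rmk''\to B\lmk\caK_\varphi\rmk$ for a suitable Hilbert space $\caK_\varphi$. Since $\varphi$ is time reversal invariant and $\Xi_R\lmk\caA_R\rmk=\caA_R$, the restriction $\varphi_R$ is time reversal invariant on $\caA_R$, i.e.\ $\varphi_R=\widehat{\varphi_R}$; in particular $\varphi_R\lmk\Xi_R\lmk B^*B\rmk\rmk=\varphi_R\lmk B^*B\rmk$ for all $B\in\caA_R$. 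This last identity makes the antilinear assignment $\pi_{\varphi_R}\lmk A\rmk\Omega_{\varphi_R}\mapsto\pi_{\varphi_R}\lmk\Xi_R\lmk A\rmk\rmk\Omega_{\varphi_R}$ well defined and isometric on the dense subspace $\pi_{\varphi_R}\lmk\caA_R\rmk\Omega_{\varphi_R}$, and since its range is dense it extends to an antiunitary $V$ on $\caH_{\varphi_R}$; a direct computation on the dense subspace then gives $V\pi_{\varphi_R}\lmk A\rmk V^*=\pi_{\varphi_R}\lmk\Xi_R\lmk A\rmk\rmk$ for $A\in\caA_R$, and, using $\Xi_R^2=\id$, also $V^2=\unit$.

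From here $\Ad V$ carries $\pi_{\varphi_R}\lmk\caA_R\rmk$ onto itself and, being normal, restricts to an antilinear $*$-automorphism of $\pi_{\varphi_R}\lmk\caA_R\rmk''$; conjugating by $\iota_\varphi$ yields an antilinear $*$-automorphism $\tau$ of $B\lmk\caK_\varphi\rmk$. Every antilinear $*$-automorphism of $B\lmk\caK_\varphi\rmk$ is spatial (composing with conjugation by a fixed antiunitary on $\caK_\varphi$ reduces this to the classical fact that $*$-automorphisms of $B(\caK)$ are implemented by unitaries), so $\tau=\Ad J_\varphi$ for an antiunitary $J_\varphi$ on $\caK_\varphi$, which is exactly the asserted intertwining relation. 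Finally, $\tau^2=\id$ since $V^2=\unit$, hence $J_\varphi^2$ is central in $B\lmk\caK_\varphi\rmk$ and so equals $c\unit$; because $J_\varphi^2$ is unitary, $\lv c\rv=1$, and applying $J_\varphi$ to $J_\varphi^2=c\unit$ and using antilinearity gives $\bar c J_\varphi=cJ_\varphi$, so $c=\sigma_\varphi\in\ltm 1,-1\rtm$.

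For uniqueness, let $\lmk\tilde\caK_\varphi,\tilde\iota_\varphi,\tilde J_\varphi\rmk$ be another triple with the stated property. Then $\tilde\iota_\varphi\circ\iota_\varphi^{-1}$ is a $*$-isomorphism $B\lmk\caK_\varphi\rmk\to B\lmk\tilde\caK_\varphi\rmk$, automatically normal and hence spatial, so it equals conjugation by a unitary $W\colon\caK_\varphi\to\tilde\caK_\varphi$; this gives the first claimed relation $W\iota_\varphi\lmk x\rmk W^*=\tilde\iota_\varphi\lmk x\rmk$. Substituting $\tilde\iota_\varphi=\Ad W\circ\iota_\varphi$ into the intertwining relation for $\tilde\iota_\varphi$ and using the one for $\iota_\varphi$, I would obtain that $\Ad\lmk WJ_\varphi W^*\rmk$ and $\Ad\tilde J_\varphi$ agree on $\tilde\iota_\varphi\lmk\pi_{\varphi_R}\lmk\caA_R\rmk\rmk$, whose $\sigma$-weak closure is $B\lmk\tilde\caK_\varphi\rmk$; both maps being $\sigma$-weakly continuous, they agree on all of $B\lmk\tilde\caK_\varphi\rmk$, whence $\tilde J_\varphi^*\lmk WJ_\varphi W^*\rmk$ is a central unitary, i.e.\ a scalar of modulus one. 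Rearranging gives $e^{i\theta}WJ_\varphi W^*=\tilde J_\varphi$ for some $e^{i\theta}\in\bbT$, and squaring this --- using the antilinearity of $WJ_\varphi W^*$ to turn the inner $e^{i\theta}$ into $e^{-i\theta}$ --- reduces the last assertion to $WJ_\varphi^2 W^*=\sigma_\varphi\unit$, which is immediate.

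I do not expect any step to be genuinely deep; the argument is careful bookkeeping with antilinear operators, relying on two standard inputs: GNS functoriality for a symmetry-invariant antilinear automorphism, and the (anti)linear Wigner-type description of the $*$-automorphisms of $B(\caK)$. The places needing the most care are the construction of $V$ --- in particular pinning down $V^2=\unit$, which uses $\Xi_R^2=\id$ and not merely that $V$ is antiunitary --- and the two scalar-extraction steps, where the decisive point each time is that conjugating a scalar $c$ by an antilinear map produces $\bar c$, forcing $c$ real (hence $\pm1$) for $J_\varphi^2$ and leaving it only unimodular for the comparison operator.
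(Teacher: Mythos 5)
Your proposal is correct and follows essentially the same route as the paper: extend $\Xi_R$ to an antilinear involutive $*$-automorphism of $\pi_{\varphi_R}(\caA_R)''$, transport it through the type I isomorphism $\iota_\varphi$, apply the antilinear Wigner theorem, and extract the sign from $J_\varphi^2$ being a central scalar fixed by conjugation with the antilinear $J_\varphi$; the uniqueness argument via a spatially implemented comparison isomorphism and a central unimodular scalar is likewise identical. The only difference is cosmetic: you construct the GNS antiunitary $V$ implementing $\Xi_R$ explicitly from the invariance of $\varphi_R$, where the paper cites \cite{ot} for this extension step.
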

From this Theorem we may define the $\bbZ_2$-index for pure time reversal invariant state satisfying the split property.
\begin{defn}\label{indexsp}
From Theorem \ref{sp}, for each time reversal invariant pure state
$\varphi$ on $\caA$ with the split property,
we obtain a $\bbZ_2$-index associated to $\varphi$. We denote this $\bbZ_2$-index by $\sigma_\varphi\in\{-1,1\}$.
\end{defn}
We will prove Theorem \ref{sp} in section \ref{split}.
In section \ref{c1main}, we will see that we may associate this index to time reversal invariant models
with unique gapped ground state.
In section \ref{oshikawa}, we will prove that the index introduced by Pollmann et.al. \cite{po}
is the index in Definition \ref{indexsp}, in the special setting, i.e., for matrix product states.

Having an index, natural question to ask is if it is an invariant of some classification.
We show that this index is an invariant of the classification with respect to the {\it factorizable automorphic equivalence, preserving the time reversal symmetry}. First, let us define the automorphic equivalence preserving the time reversal symmetry.
\begin{defn}\label{aet}
Let $\varphi_1,\varphi_2$ be two time reversal invariant states on $\caA$.
We say $\varphi_2$ is automorphic equivalent to $\varphi_1$
via a time reversal invariant automorphism
if there exists an automorphism $\alpha$ on $\caA$ such that
\begin{align}
\varphi_2=\varphi_1\circ \alpha\quad\text{and}\quad
\alpha\circ\Xi=\Xi\circ\alpha.
\end{align}
\end{defn}
\begin{defn}\label{adt}
We say an automorphism $\alpha$ of $\caA$ is factorizable if there are automorphisms $\alpha_R$,
$\alpha_L$ on $\caA_R$, $\caA_L$ respectively, and a unitary $W$ in $\caA$
such that
\[
\alpha\circ\lmk\alpha_L^{-1}\otimes \alpha_R^{-1}\rmk(A)=
WAW^*,\quad A\in\caA.
\]
We call these $(\alpha_R,\alpha_L,W)$, a factorization of $\alpha$.
When $\alpha$ is time reversal invariant, i.e., $\Xi\circ \alpha=\alpha\circ\Xi$,
we say a factorization $(\alpha_R,\alpha_L,W)$ of $\alpha$
is time reversal invariant, if $\alpha_R\circ\Xi_R=\Xi_R\circ\alpha_R$ and
$\alpha_L\circ\Xi_L=\Xi_L\circ\alpha_L$.
When such factorization exists, we say that $\alpha$ allows a time reversal invariant factorization.
\end{defn}

\begin{thm}\label{aei}
Let $\varphi_1,\varphi_2$ be time reversal invariant pure states satisfying the split property.
Suppose that $\varphi_1$ and $\varphi_2$ 
are automorphic equivalent via a time reversal invariant automorphism
which allows a time reversal invariant factorization.
Then the $\bbZ_2$-indices $\sigma_{\varphi_1}$,  $\sigma_{\varphi_2}$associated to
$\varphi_1$, $\varphi_2$ (in Definition \ref{indexsp}) are equal.
\end{thm}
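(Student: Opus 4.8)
The plan is to transport the data $(\caK_{\varphi_1},\iota_{\varphi_1},J_{\varphi_1})$ attached to $\varphi_1$ by Theorem \ref{sp} onto $\varphi_2$, and then invoke the uniqueness clause of Theorem \ref{sp}. Write $\varphi_{i,R}:=\varphi_i\vert_{\caA_R}$. Let $\alpha$ be a time reversal invariant automorphism with $\varphi_2=\varphi_1\circ\alpha$, and let $(\alpha_R,\alpha_L,W)$ be a time reversal invariant factorization of it, so $\alpha=\Ad(W)\circ(\alpha_L\otimes\alpha_R)$ and $\alpha_R\circ\Xi_R=\Xi_R\circ\alpha_R$. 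Set $\psi:=\varphi_1\circ\Ad(W)$. For $A\in\caA_R$ one has $(\alpha_L\otimes\alpha_R)(A)=\alpha_R(A)\in\caA_R$, hence $\varphi_{2,R}(A)=\varphi_1(W\alpha_R(A)W^*)=\psi(\alpha_R(A))$, i.e.
\begin{align*}
\varphi_{2,R}=\psi_R\circ\alpha_R,\qquad \psi_R:=\psi\vert_{\caA_R}.
\end{align*}
No compatibility of $W$ with $\Xi$ is needed for this.

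Next I claim that $\psi_R$ and $\varphi_{1,R}$ are quasi-equivalent. Since $\psi=\varphi_1\circ\Ad(W)$, the representation $\pi_{\varphi_1}$ itself, with the cyclic vector $\pi_{\varphi_1}(W^*)\Omega_{\varphi_1}$, is a GNS representation of $\psi$, so $\pi_\psi=\pi_{\varphi_1}$. Moreover, for any state $\omega$ on $\caA$, writing $\omega_R:=\omega\vert_{\caA_R}$, the GNS representation $\pi_{\omega_R}$ of $\omega_R$ is quasi-equivalent to the subrepresentation $\pi_\omega\vert_{\caA_R}$ of $\pi_\omega$: indeed, since $\caA_L$ and $\caA_R$ generate $\caA$ and $\pi_\omega(\caA_L)$ commutes with $\pi_\omega(\caA_R)$, the projection onto $\overline{\pi_\omega(\caA_R)\Omega_\omega}$ has central support $\unit$ in $\pi_\omega(\caA_R)''$, so the cut-down $x\mapsto x\vert_{\overline{\pi_\omega(\caA_R)\Omega_\omega}}$ is a $*$-isomorphism. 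Applying this to $\omega=\varphi_1$ and $\omega=\psi$ and using $\pi_\psi=\pi_{\varphi_1}$, we get that $\pi_{\psi_R}$ is quasi-equivalent to $\pi_{\varphi_1}\vert_{\caA_R}$, hence to $\pi_{\varphi_{1,R}}$. Since precomposition with the automorphism $\alpha_R$ of $\caA_R$ preserves quasi-equivalence, $\varphi_{2,R}=\psi_R\circ\alpha_R$ is quasi-equivalent to $\varphi_{1,R}\circ\alpha_R$.

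Realize the GNS triple of $\varphi_{1,R}\circ\alpha_R$ as $(\caH_{\varphi_{1,R}},\pi_{\varphi_{1,R}}\circ\alpha_R,\Omega_{\varphi_{1,R}})$, whose von Neumann algebra is $\pi_{\varphi_{1,R}}(\caA_R)''$. The quasi-equivalence supplies a $*$-isomorphism $\tau:\pi_{\varphi_{1,R}}(\caA_R)''\to\pi_{\varphi_{2,R}}(\caA_R)''$ with $\tau(\pi_{\varphi_{1,R}}(\alpha_R(A)))=\pi_{\varphi_{2,R}}(A)$ for $A\in\caA_R$; equivalently $\tau^{-1}(\pi_{\varphi_{2,R}}(B))=\pi_{\varphi_{1,R}}(\alpha_R(B))$ for $B\in\caA_R$. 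Put $\tilde{\caK}:=\caK_{\varphi_1}$, $\tilde{J}:=J_{\varphi_1}$ and $\tilde{\iota}:=\iota_{\varphi_1}\circ\tau^{-1}:\pi_{\varphi_{2,R}}(\caA_R)''\to B(\caK_{\varphi_1})$, a $*$-isomorphism onto $B(\tilde{\caK})$. Then for $A\in\caA_R$, using $\alpha_R\circ\Xi_R=\Xi_R\circ\alpha_R$ and the intertwining relation of Theorem \ref{sp} for $\varphi_1$ applied to $\alpha_R(A)\in\caA_R$,
\begin{align*}
\tilde{\iota}(\pi_{\varphi_{2,R}}(\Xi_R(A)))
&=\iota_{\varphi_1}(\pi_{\varphi_{1,R}}(\alpha_R(\Xi_R(A))))
=\iota_{\varphi_1}(\pi_{\varphi_{1,R}}(\Xi_R(\alpha_R(A))))\\
&=J_{\varphi_1}\,\iota_{\varphi_1}(\pi_{\varphi_{1,R}}(\alpha_R(A)))\,J_{\varphi_1}^{*}
=\tilde{J}\,\tilde{\iota}(\pi_{\varphi_{2,R}}(A))\,\tilde{J}^{*}.
\end{align*}
Thus $(\tilde{\caK},\tilde{\iota},\tilde{J})$ satisfies the hypothesis of the uniqueness clause of Theorem \ref{sp} for $\varphi_2$, so $\tilde{J}^{2}=\sigma_{\varphi_2}\unit$. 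But $\tilde{J}^{2}=J_{\varphi_1}^{2}=\sigma_{\varphi_1}\unit$, and therefore $\sigma_{\varphi_1}=\sigma_{\varphi_2}$.

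The one substantive step is the quasi-equivalence $\psi_R\sim\varphi_{1,R}$, hence $\varphi_{2,R}\sim\varphi_{1,R}\circ\alpha_R$: this is where the factorization of $\alpha$ is genuinely exploited, and it is exactly the observation singled out in the introduction. Everything after it — the transport of the $*$-isomorphism and the verification of the time reversal intertwining relation — is a routine diagram chase; the split property enters only through the applicability of Theorem \ref{sp} (including its uniqueness clause) to $\varphi_1$ and $\varphi_2$, which is assumed.
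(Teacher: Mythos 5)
Your proof is correct, and the overall architecture coincides with the paper's: reduce the problem to a quasi-equivalence between $\varphi_{2,R}$ and the $\alpha_R$-twist of $\varphi_{1,R}$, then transport the antiunitary through the resulting $*$-isomorphism. The genuine difference lies in how the quasi-equivalence is established. The paper uses the split property of $\varphi_1$ at this point: it invokes (the proof of) Proposition 2.2 of \cite{Matsui1} to get that $\varphi_L\otimes\varphi_R$ is quasi-equivalent to $\varphi_1$, represents $\varphi_2\circ(\alpha_L^{-1}\otimes\alpha_R^{-1})$ by a density matrix on $\caH_L\otimes\caH_R$, takes a partial trace to conclude that $\varphi_{2,R}\circ\alpha_R^{-1}$ is $\varphi_{1,R}$-normal, and then uses factoriality of $\varphi_{1,R}$ to upgrade normality to quasi-equivalence. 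You instead prove the general fact that for \emph{any} state $\omega$ on $\caA$ the GNS representation of $\omega\vert_{\caA_R}$ is quasi-equivalent to the subrepresentation $\pi_\omega\vert_{\caA_R}$, via the central-support argument (the projection onto $\overline{\pi_\omega(\caA_R)\Omega_\omega}$ has central support $\unit$ in $\pi_\omega(\caA_R)'$ because $\pi_\omega(\caA_L)\subset\pi_\omega(\caA_R)'$ and $\Omega_\omega$ is cyclic for $\pi_\omega(\caA)$); applied to $\psi=\varphi_1\circ\Ad(W)$ and $\varphi_1$, which share the representation $\pi_{\varphi_1}$, this yields $\psi_R\sim\varphi_{1,R}$ with no appeal to the split property, to Matsui's result, or to factoriality of $\varphi_{1,R}$. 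This is a clean and slightly more general route; the split property then enters only where it must, namely in making Theorem \ref{sp} applicable. Your final step — invoking the uniqueness clause of Theorem \ref{sp} for the transported triple $(\caK_{\varphi_1},\iota_{\varphi_1}\circ\tau^{-1},J_{\varphi_1})$ — is only cosmetically different from the paper's explicit Wigner-plus-scalar-commutant computation, since that computation is exactly what proves the uniqueness clause.
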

We will prove this theorem in section \ref{split}.
In the next section, we see that this theorem can be applied to the setting of $C^1$-classification of gapped Hamiltonians.
Therefore, the $\bbZ_2$-index is an invariant of the $C^1$-classification.
{\new{Any time reversal unique gapped ground states satisfy the conditions of Theorem \ref{aei}.
In particular, we may apply Theorem \ref{aei} to time reversal invariant matrix product states.
}}

{\new{
The non-trivial index implies a lower bound of the entanglement entropy.
The entanglement entropy in our setting is defined as follows.
Let $(\caH_{\varphi_R},\pi_{\varphi_R},\Omega_{\varphi_R})$, 
$\caK_\varphi$, $\iota_\varphi$, $J_{\varphi}$
be as in Theorem \ref{sp}.
Then there exists a density matrix $\rho$ on $\caK_\varphi$ such that
$\left. \omega\right\vert_{\caA_R}=\Tr_{\caK_\varphi}\lmk
\rho\iota_\varphi\circ \pi_{\varphi_R}\lmk \cdot\rmk\rmk$.
The entanglement entropy of $\varphi$ is the von Neumann entropy 
of $\rho$.
By the time reversal invariance of the system, we have
$J_{\varphi} \rho J_{\varphi}^*=\rho$.
From this, for each eigenvalue $\lambda$ of $\rho$,
the corresponding eigen space is $J_\varphi$-invariant.
When $\sigma_\varphi=-1$, any $J_\varphi$-invariant
subspace have to be of even dimension.
Therefore, any eigenvalue of $\rho$ are evenly degenerated.
This implies that the entanglement entropy  
of $\varphi$ is bounded from below by $\log 2$.
This corresponds to the bound of the entanglement entropy observed in \cite{po}.
}
}

\section{$C^1$-classification of gapped Hamiltonians with the time reversal symmetry.}\label{c1main}
Let us now apply the result in section \ref{index} to the $C^1$-classification of gapped Hamiltonians preserving the time reversal symmetry. 

A mathematical model of a quantum spin chain is fully specified by its interaction $\Phi$.
An interaction is a map $\Phi$ from 
${\mathfrak S}_{\bbZ}$ into ${\caA}_{\rm loc}$ such
that $\Phi(X) \in {\caA}_{X}$ 
and $\Phi(X) = \Phi(X)^*$
for $X \in {\mathfrak S}_{\bbZ}$. 
An interaction $\Phi$ is translation invariant if
$\Phi(X+x)=\beta_x(\Phi(X))$, 
for all $x\in{\mathbb Z}$ and $X\in  {\mathfrak S}_{\bbZ}$, and time reversal invariant
if $\Xi(\Phi(X))=\Phi(X)$
for all $X\in {\mathfrak S}_\bbZ$.
Furthermore, an interaction $\Phi$
 is of finite range if there exists an $m\in {\mathbb N}$ such that
$\Phi(X)=0$ for $X$ with diameter larger than $m$.
We denote by $\caB_{f}$, 
the set of all finite range interactions $\Phi$ which satisfy
\begin{align}\label{fi}
a_\Phi:= \sup_{X\in {\mathfrak S}_{\bbZ}}
\lV
\Phi\lmk X\rmk
\rV<\infty.
\end{align}
We may define addition on $\caB_f$: for $\Phi,\Psi\in\caB_f$,
$\Phi+\Psi$ denotes the interaction given by
$(\Phi+\Psi)(X)=\Phi(X)+\Psi(X)$ for each $X\in{\mathfrak S}_{\bbZ}$.

For an interaction $\Phi$ and a finite set $\Lambda\in{\mathfrak S}_{\bbZ}$, we define the local Hamiltonian on $\Lambda$ by
\begin{equation}\label{GenHamiltonian}
\lmk H_{\Phi}\rmk_{\Lambda}:=\sum_{X\subset{\Lambda}}\Phi(X).
\end{equation}
The dynamics given by this local Hamiltonian is denoted by
\begin{align}\label{taulamdef}
\tau_t^{\Phi,\Lambda}\lmk A\rmk:= e^{it\lmk H_{\Phi}\rmk_{\Lambda}} Ae^{-it\lmk H_{\Phi}\rmk_{\Lambda}},\quad
t\in\bbR.
\end{align}
If $\Phi$ belongs to $\caB_{f}$,
the limit
\begin{align}\label{taudef}
\tau_t^{\Phi}\lmk A\rmk=\lim_{\Lambda\to\bbZ}
\tau_t^{\Phi,\Lambda}\lmk A\rmk
\end{align}
exists for each $A\in \caA$ and $t\in{\mathbb R}$, 
and defines a strongly continuous one parameter group of automorphisms $\tau^\Phi$ on $\caA$. 
(See \cite{BR2}.)
We denote the generator of $C^* $-dynamics $\tau^{\Phi}$ by $\delta_{\Phi}$.


For $\Phi\in\caB_f$, a state $\varphi$ on $\caA$ is called a \mbox{$\tau^{\Phi}$-ground} state
if the inequality
$
-i\,\varphi(A^*{\delta_{\Phi}}(A))\ge 0
$
holds
for any element $A$ in the domain $\caD({\delta_{\Phi}})$ of ${\delta_\Phi}$.
Let $\varphi$ be a $\tau^\Phi$-ground state, with the GNS triple $(\caH_\varphi,\pi_\varphi,\Omega_\varphi)$.
Then there exists a unique positive operator $H_{\varphi,\Phi}$ on $\caH_\varphi$ such that
$e^{itH_{\varphi,\Phi}}\pi_\varphi(A)\Omega_\varphi=\pi_\varphi(\tau_t^\Phi(A))\Omega_\varphi$,
for all $A\in\caA$ and $t\in\mathbb R$.
We call this $H_{\varphi,\Phi}$ the bulk Hamiltonian associated with $\varphi$.
Note that $\Omega_\varphi$ is an eigenvector of $H_{\varphi,\Phi}$ with eigenvalue $0$. See \cite{BR2} for the general theory.

The following definition clarifies what we mean by a model with a unique gapped ground state.
\begin{defn}
We say that a model with an interaction $\Phi\in\caB_f$ has a unique gapped ground state if 
(i)~the $\tau^\Phi$-ground state, which we denote as $\varphi$, is unique, and 
(ii)~there exists a $\gamma>0$ such that
$\sigma(H_{\varphi,\Phi})\setminus\{0\}\subset [\gamma,\infty)$, where  $\sigma(H_{\varphi,\Phi})$ is the spectrum of $H_{\varphi,\Phi}$.
\end{defn}
Note that the uniqueness of $\varphi$ implies that 0 is a non-degenerate eigenvalue of $H_{\varphi,\Phi}$.

If $\varphi$ is a \mbox{$\tau^{\Phi}$-ground} state of time reversal invariant interaction $\Phi\in {\caB}_f$,
then its time reversal
$\hat \varphi$ is also a \mbox{$\tau^{\Phi}$-ground} state.
In particular, if $\varphi$ is a unique \mbox{$\tau^{\Phi}$-ground} state, it is pure and time reversal invariant.

In \cite{Matsui2}, T.Matsui showed that the spectral gap implies the split property.
\begin{thm}[Theorem 1.5, Lemma 4.1, and Proposition 4.2 of \cite{Matsui2}]\label{matsui}
Let $\varphi$ be a pure $\tau^\Phi$-ground state of $\Phi\in{\caB}_f$, and denote by  $H_{\varphi,\Phi}$ the corresponding bulk Hamiltonian. 
Assume that $0$ is a non-degenerate eigenvalue of $H_{\varphi,\Phi}$ and
there exists $\gamma>0$ such that $\sigma(H_{\varphi,\Phi})\setminus\{0\}\subset [\gamma,\infty)$.
Then $\varphi$ satisfies the  split property with respect to $\caA_L$ and $\caA_R$.
\end{thm}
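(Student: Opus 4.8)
This is the theorem of Matsui \cite{Matsui2}; the plan is to reduce the split property to a single quasi-equivalence of states and then feed in, as the only consequence of the gap, exponential decay of correlations.

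First I would set up the reduction. Since $\varphi$ is pure, its GNS representation $(\caH_\varphi,\pi_\varphi,\Omega_\varphi)$ is irreducible, so $\pi_\varphi(\caA)''=B(\caH_\varphi)$. Write $\varphi_L=\left.\varphi\right\vert_{\caA_L}$ and $\varphi_R=\left.\varphi\right\vert_{\caA_R}$, and recall that the GNS representation of the product state $\varphi_L\otimes\varphi_R$ on $\caA=\caA_L\otimes\caA_R$ is $\pi_{\varphi_L}\otimes\pi_{\varphi_R}$, whose weak closure is $\pi_{\varphi_L}(\caA_L)''\,\bar\otimes\,\pi_{\varphi_R}(\caA_R)''$. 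It suffices to prove that $\varphi$ is quasi-equivalent, as a state on $\caA$, to $\varphi_L\otimes\varphi_R$: such a quasi-equivalence provides a $*$-isomorphism $B(\caH_\varphi)=\pi_\varphi(\caA)''\to\pi_{\varphi_L}(\caA_L)''\,\bar\otimes\,\pi_{\varphi_R}(\caA_R)''$, and since a von Neumann algebra tensor product is of type I only when each tensorand is, $\pi_{\varphi_R}(\caA_R)''$ must be a type I factor --- which is exactly the split property (the equivalence with the Doplicher--Longo formulation \cite{dl} is recorded in \cite{Matsui2}).

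Next I would extract the input coming from the gap. As $\Phi\in\caB_f$ has finite range, $\tau^\Phi$ satisfies Lieb--Robinson bounds; together with the non-degeneracy of the eigenvalue $0$ of $H_{\varphi,\Phi}$ and the gap $\sigma(H_{\varphi,\Phi})\setminus\{0\}\subset[\gamma,\infty)$, the exponential clustering theorem (Hastings--Koma; Nachtergaele--Sims) gives constants $C,\mu>0$ with
\[
\abs{\varphi(AB)-\varphi(A)\varphi(B)}\le C\,\nn A\,\nn B\,e^{-\mu\, d(X,Y)},\qquad A\in\caA_X,\ B\in\caA_Y,\ X,Y\in{\mathfrak S}_{\bbZ}.
\]
Taking $X\subset(-\infty,-n-1]$ and $Y\subset[n,\infty)$ and using $\varphi(A)=\varphi_L(A)$, $\varphi(B)=\varphi_R(B)$, this says that $\varphi$ and $\varphi_L\otimes\varphi_R$ agree, up to an error $O(e^{-\mu n})$, on every product observable straddling the window $\Lambda_n$.

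The hard part will be upgrading this into the quasi-equivalence $\varphi\sim_{\mathrm{q.e.}}\varphi_L\otimes\varphi_R$. The structural fact making it possible is that $\caA_L$ and $\caA_R$ are UHF, hence nuclear, so $\pi_\varphi(\caA_L)''$ and $\pi_\varphi(\caA_R)''$ are injective; by a discrete analogue of the Doplicher--Longo criterion, the quasi-equivalence --- equivalently, the normality of the multiplication map $\pi_\varphi(\caA_L)''\otimes_{\min}\pi_\varphi(\caA_R)''\to B(\caH_\varphi)$, equivalently the existence of an interpolating type I factor between $\pi_\varphi(\caA_L)''$ and $\pi_\varphi(\caA_R)'$ --- follows once one shows that the defect functional $\varphi-\varphi_L\otimes\varphi_R$ has norm tending to $0$ on the subalgebra $\caA_{(-\infty,-n-1]}\otimes\caA_{[n,\infty)}$ as $n\to\infty$. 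The clustering bound of the previous paragraph controls this functional only on elementary tensors $AB$, and the genuine obstacle is to pass from elementary tensors to arbitrary elements of that subalgebra without surrendering the gain $e^{-\mu n}$ to the exponential growth of the local dimension of a finite window; resolving this balance is the content of Lemma~4.1 and Proposition~4.2 of \cite{Matsui2}, carried out by an approximation argument that exploits nuclearity and only ever invokes one-sided observables at each scale. Once it is in place, the reduction of the second paragraph closes the argument, and the same reasoning applies to any cut point.
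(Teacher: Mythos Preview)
The paper does not supply its own proof of this theorem; it is quoted verbatim from Matsui \cite{Matsui2}, so there is nothing in the present paper to compare your sketch against except the citation itself. That said, your outline does not match the route Matsui actually takes, and the mismatch sits exactly at the step you flag as the ``genuine obstacle''.

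Your reduction in the second paragraph is correct and is indeed the content of Matsui's earlier work \cite{Matsui1}: for a pure state, the split property is equivalent to the quasi-equivalence $\varphi\sim_{\mathrm{q.e.}}\varphi_L\otimes\varphi_R$. The divergence comes in how that quasi-equivalence is established. You propose to feed in only exponential clustering and then pass from product observables to arbitrary elements of $\caA_{(-\infty,-n-1]}\otimes\caA_{[n,\infty)}$ by ``an approximation argument that exploits nuclearity''. But nuclearity of the UHF algebras together with clustering on elementary tensors does not by itself control $\lV\varphi-\varphi_L\otimes\varphi_R\rV$ on that subalgebra: the obvious attempts lose a factor of the local Hilbert-space dimension, which grows like $(2S+1)^{2n}$ and swamps $e^{-\mu n}$. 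This is not a cosmetic difficulty, and Lemma~4.1 and Proposition~4.2 of \cite{Matsui2} are not a clustering-plus-nuclearity trick of the kind you describe.

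What Matsui actually uses is Hastings' area law \cite{area}: the spectral gap forces the entanglement entropy of $\varphi$ across the cut to be bounded uniformly in the system size. Matsui's contribution is precisely the implication ``uniformly bounded entanglement entropy $\Rightarrow$ split property'' (hence the title of \cite{Matsui2}). Concretely, the entropy bound lets one approximate the reduced state on each half-chain by a density matrix of \emph{fixed} finite rank independent of $n$, and it is this rank control --- not clustering --- that beats the dimension growth and yields the required norm estimate. So the logical chain is
\[
\text{gap}\ \xrightarrow{\ \text{\cite{area}}\ }\ \text{bounded entanglement entropy}\ \xrightarrow{\ \text{\cite{Matsui2}}\ }\ \text{split}.
\]
Your sketch should replace the clustering input by the area law; exponential decay of correlations is a much weaker consequence of the gap and is not, on its own, known to imply split.
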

%
This theorem, combined with Theorem \ref{sp} allows us to define the following $\bbZ_2$-index for
time reversal invariant Hamiltonians with unique gapped ground state.
\begin{defn}
Let $\Phi\in\caB_f$ be a time reversal invariant interaction
which has a unique gapped ground state $\varphi$.
By Theorem \ref{matsui}, $\varphi$ satisfies the split property. Hence we obtain the $\bbZ_2$-index $\sigma_\varphi$
in Definition  \ref{indexsp}.
In this setting, we denote this $\sigma_\varphi$
 by $\hat \sigma_{\Phi}$ and call it the $\bbZ_2$-index associated to $\Phi$.
\end{defn}


As $\hat \sigma_\Phi$ takes discrete values $\{-1,1\}$,
for a continuous path of interactions $\Phi(s)$, we would expect that
$\hat \sigma_{\Phi(s)}$ is constant.
We prove this in the setting of $C^1$-classsification.
\begin{defn}\label{boundary}
We say the map $\Phi:[0,1]\ni s \to \Phi(s):=\{\Phi(X;s)\}_{X\in {\mathfrak S}_\bbZ}\in {\caB_f}$ is a $C^1$-path of time reversal invariant gapped interactions satisfying the {\it Condition B},
if 
there exist 
\begin{description}
\item[(i)] numbers
$M,R\in\nan$, $\gamma>0$  and an increasing sequence $n_k\in\nan$, $k=1,2,\ldots$,
\item[(ii)] $C^1$-functions
$a,b:[0,1]\to \bbR$ such that $a(s)<b(s)$,
\item[(iii)]
a sequence of paths of interactions $\Psi_k:[0,1]\ni s \to \Psi_k(s):=\{\Psi_k(X;s)\}_{X\in {\mathfrak S}_\bbZ}\in {\caB_f}$, $k=1,2,\ldots$,
\end{description}
and the following hold.
\begin{enumerate}
\item For each $X\in{\mathfrak S}_\bbZ$, the map
$[0,1]\ni s\to \Phi(X;s), \Psi_k(X;s)\in\caA_{X}$ are continuous and piecewise $C^1$.
We denote by $\Phi'(X;s)$, $\Psi'_k(X;s)$, the corresponding derivatives.
\item For each $s\in[0,1]$, and $X\in{\mathfrak S}_\bbZ$ with $\diam (X)\ge M$, we have
$\Phi(X;s)=0$.
\item For each $s\in[0,1]$, and $k\in\bbN$, we have $\Psi_k(X;s)=0$ unless $X\subset \Lambda_{n_k}\setminus \Lambda_{n_k-R}$.
\item Interactions are bounded as follows
\begin{align}
C_1:=\sup_{s\in[0,1]}\sup_{k\in\nan}\sup_{X\in {\mathfrak S}_\bbZ}
\lmk
\lV
\Phi\lmk X;s\rmk
\rV+|X|\lV
\Phi' \lmk X;s\rmk
\rV
+
\lV
\Psi_k\lmk X;s\rmk
\rV+|X|\lV
\Psi_k'\lmk X;s\rmk
\rV
\rmk<\infty.
\end{align}
\item For each $s\in[0,1]$, there exists a unique $\tau^{\Phi(s)}$-ground state
$\varphi_s$. 
\item
For each $s\in[0,1]$, $\Phi(s)$ is time reversal invariant.
\item  
 For each $k\in\bbN$ and $s\in[0,1]$, 
 the spectrum $\sigma\lmk \lmk H_{\Phi(s)+\Psi_k(s)}\rmk_{\Lambda_{n_k}}\rmk$
of $ \lmk H_{\Phi(s))+\Psi_k(s)}\rmk_{\Lambda_{n_k}}$ is decomposed into two non-empty disjoint parts
$
\sigma\lmk \lmk H_{\Phi(s)+\Psi_k(s)}\rmk_{\Lambda_{n_k}}\rmk=
\Sigma_1^{(k)}(s)\cup\Sigma_2^{(k)}(s)
$
such that $\Sigma_1^{(k)}(s)\subset [a(s),b(s)]$,
$\Sigma_2^{(k)}(s)\subset [b(s)+\gamma,\infty)$ and 
the diameter of $\Sigma_1^{(k)}(s)$ converges to $0$ as $k\to\infty$.
\end{enumerate}
\end{defn}
The interaction $\Psi_k(s)$ corresponds to a boundary condition. Note that it does not forbid an interaction
between intervals $[-n,-n+R]\cap \bbZ$ and $[n-R,R]\cap \bbZ$.
In particular, the periodic boundary condition is included in this framework.
Also, note that we {\it do not} require that the boundary term $\Psi_k(s)$
to be time reversal invariant. 
{\new{
The treatment including the boundary condition is rather new to this article
compared to other ones
\cite{bmns}\cite{nsy}.
}}

In section \ref{c1}, we will see the following. 
\begin{prop}\label{aep}
Let $\Phi:[0,1]\ni s \to \Phi(s):=\{\Phi(X;s)\}_{X\in {\mathfrak S}_\bbZ}\in {\caB_f}$ be a $C^1$-path of time reversal invariant gapped interactions satisfying the {\it Condition B}.
Let $\varphi_s$ be the unique $\tau^{\Phi(s)}$-ground state, for each $s\in[0,1]$. 
Then
$\varphi_0$ and $\varphi_1$
are automorphic equivalent via a time reversal invariant automorphism, 
which allows a time reversal invariant factorization.
\end{prop}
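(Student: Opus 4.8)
The plan is to obtain $\alpha$ by quasi-adiabatic continuation (spectral flow) from the gapped path, and then to split its generator across the cut $\bbZ=(-\infty,-1]\sqcup[0,\infty)$; the split produces the factorization and, thanks to the time reversal invariance of $\Phi$ and $\Phi'$, a time reversal invariant one. First I would run the finite-volume spectral flow. For each $k$, the spectral condition in Definition~\ref{boundary} gives a spectral projection $P^{(k)}(s)$ of $(H_{\Phi(s)+\Psi_k(s)})_{\Lambda_{n_k}}$ onto $\Sigma_1^{(k)}(s)$, separated from the rest of the spectrum by the gap $\gamma$ uniformly in $s$ and $k$. Let $\mathcal{D}^{(k)}(s)$ be the associated quasi-adiabatic generator, built from the derivative $(H_{\Phi'(s)+\Psi_k'(s)})_{\Lambda_{n_k}}$ against a fixed kernel $w_\gamma$ tuned to $\gamma$, and let $\alpha^{(k)}_s\in\Aut(\caA_{\Lambda_{n_k}})$ be the induced flow, so $\alpha^{(k)}_s(P^{(k)}(0))=P^{(k)}(s)$. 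Using the uniform bound $C_1$, Lieb--Robinson bounds, and the faster-than-polynomial decay of $w_\gamma$ (as in \cite{bmns},\cite{nsy}), one passes to the thermodynamic limit: since the terms involving $\Psi_k,\Psi_k'$ are supported in $\Lambda_{n_k}\setminus\Lambda_{n_k-R}$, they escape to infinity, so $\mathcal{D}^{(k)}(s)\to\mathcal{D}(s)=\sum_Z\mathcal{D}_Z(s)$ with $\mathcal{D}_Z(s)\in\caA_Z$ built from $\Phi'(s)$ alone, $s\mapsto\mathcal{D}(s)$ norm-continuous and piecewise $C^1$, and $\alpha^{(k)}_s\to\alpha_s\in\Aut(\caA)$. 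Setting $\alpha:=\alpha_1$, the ground-state intertwining property of the spectral flow gives $\varphi_1=\varphi_0\circ\alpha$.

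For the time reversal invariance of $\alpha$: differentiating $\Xi(\Phi(X;s))=\Phi(X;s)$ gives $\Xi(\Phi'(X;s))=\Phi'(X;s)$, and since $\Phi(s)$ is time reversal invariant and $\Xi$ is antilinear we have $\Xi\circ\tau^{\Phi(s)}_t=\tau^{\Phi(s)}_{-t}\circ\Xi$. Choosing $w_\gamma$ odd and performing the localization of the $\mathcal{D}_Z(s)$ with trace-preserving conditional expectations, which commute with the on-site map $\Xi$, one gets $\Xi(\mathcal{D}_Z(s))=-\mathcal{D}_Z(s)$ for every $Z$. For an antilinear $\Xi$ with $\Xi^2=\mathrm{id}$, the relation $\Xi\circ\alpha_s=\alpha_s\circ\Xi$ is equivalent to $\Xi(\mathcal{D}(s))\in-\mathcal{D}(s)+\bbC\unit$ along the flow, so this yields $\Xi\circ\alpha=\alpha\circ\Xi$.

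For the factorization, set $\mathcal{D}_R(s):=\sum_{Z\subseteq[0,\infty)}\mathcal{D}_Z(s)$, $\mathcal{D}_L(s):=\sum_{Z\subseteq(-\infty,-1]}\mathcal{D}_Z(s)$, and $\mathcal{D}_{\mathrm{bd}}(s):=\mathcal{D}(s)-\mathcal{D}_L(s)-\mathcal{D}_R(s)$, the last being the sum of those $\mathcal{D}_Z(s)$ with $Z$ meeting both $[0,\infty)$ and $(-\infty,-1]$. The quasi-adiabatic estimates make $\|\mathcal{D}_Z(s)\|$ decay faster than any polynomial in $\diam Z$ uniformly in $s$, so the sum of $\|\mathcal{D}_Z(s)\|$ over these straddling $Z$ is finite uniformly in $s$, whence $\mathcal{D}_{\mathrm{bd}}(s)\in\caA$ with $\sup_s\|\mathcal{D}_{\mathrm{bd}}(s)\|<\infty$; moreover $\{\mathcal{D}_Z(s)\}_{Z\subseteq[0,\infty)}$ and $\{\mathcal{D}_Z(s)\}_{Z\subseteq(-\infty,-1]}$ are interactions on $\caA_R$ and $\caA_L$ with the same decay, hence generate flows $\alpha_{R,s}\in\Aut(\caA_R)$ and $\alpha_{L,s}\in\Aut(\caA_L)$. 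A Duhamel computation then gives $\alpha_s=\Ad(W_s)\circ(\alpha_{L,s}\otimes\alpha_{R,s})$, where $W_s$ is the $s$-ordered exponential generated by $(\alpha_{L,s}\otimes\alpha_{R,s})(\mathcal{D}_{\mathrm{bd}}(s))\in\caA$, a unitary in $\caA$ by the uniform norm bound. With $s=1$ and $\alpha_R:=\alpha_{R,1}$, $\alpha_L:=\alpha_{L,1}$, $W:=W_1$ this reads $\alpha\circ(\alpha_L^{-1}\otimes\alpha_R^{-1})=\Ad(W)$. Finally, since $\Xi$ preserves each $\caA_Z$ and $\Xi(\mathcal{D}_Z(s))=-\mathcal{D}_Z(s)$, we have $\Xi_R(\mathcal{D}_R(s))=-\mathcal{D}_R(s)$ and $\Xi_L(\mathcal{D}_L(s))=-\mathcal{D}_L(s)$, so the argument of the previous paragraph gives $\alpha_R\circ\Xi_R=\Xi_R\circ\alpha_R$ and $\alpha_L\circ\Xi_L=\Xi_L\circ\alpha_L$. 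Hence $(\alpha_R,\alpha_L,W)$ is a time reversal invariant factorization of $\alpha$, which is what the proposition asserts.

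I expect the main obstacle to be the thermodynamic limit in the first paragraph, together with the decay bookkeeping it must supply: one has to show that the finite-volume spectral flows converge to a single automorphism of $\caA$ whose generator is assembled from $\Phi'$ alone --- the boundary interactions $\Psi_k$ being harmless precisely because of the uniform gap $\gamma$ and their localization near $\partial\Lambda_{n_k}$ in Definition~\ref{boundary} --- and at the same time that the local terms $\mathcal{D}_Z(s)$ can be realized in $\caA_Z$ with faster-than-polynomial decay in $\diam Z$ uniformly in $s$, which is what makes the straddling sum finite. Given those Lieb--Robinson/quasi-adiabatic estimates, the $\Xi$-equivariance is just a matter of taking an odd kernel and conditional expectations commuting with $\Xi$, and the factorization is a routine interaction-picture manipulation.
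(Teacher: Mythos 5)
Your proposal is correct and follows essentially the same route as the paper: finite-volume quasi-adiabatic flow for $\Phi(s)+\Psi_k(s)$, a thermodynamic limit in which the non-symmetric boundary terms $\Psi_k$ drop out (the paper's Lemma \ref{aaak}), an odd real kernel $W_\gamma$ for the time reversal equivariance, and a cut of the generator at the origin producing an inner ``straddling'' part (the paper's $V_k(s)=\hat D_{k,L}(s)+\hat D_{k,R}(s)-\hat D_{k,o}(s)$ and Lemma \ref{vvvk}). The only substantive difference is one of implementation: you localize the limiting generator into terms $\mathcal{D}_Z(s)$ via conditional expectations and regroup them across the cut, whereas the paper defines the half-line automorphisms directly as quasi-adiabatic flows of the restricted dynamics $\tau^{\Phi(s),I_{k,L}}$, $\tau^{\Phi(s),I_{k,R}}$ and compares the three global generators, never needing the local decomposition.
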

Note that {\it Condition B} implies
that for each $s\in[0,1]$, $\Phi(s)$ has a unique gapped ground state.
As a corollary of this proposition and Theorem \ref{aei}, we obtain the following.
\begin{thm}\label{c1t}
Let $\Phi:[0,1]\ni s \to \Phi(s):=\{\Phi(X;s)\}_{X\in {\mathfrak S}_\bbZ}\in {\caB_f}$ be a $C^1$-path of time reversal invariant gapped interactions satisfying the {\it Condition B}.
Then
we have
$\hat \sigma_{\Phi(0)}=\hat \sigma_{\Phi(1)}$.
\end{thm}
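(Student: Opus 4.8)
The plan is to obtain Theorem~\ref{c1t} by combining the two results already stated, Proposition~\ref{aep} and Theorem~\ref{aei}, after checking that the hypotheses of each are met along the path. First I would unpack \emph{Condition B}. Item~(5) gives, for every $s\in[0,1]$, a unique $\tau^{\Phi(s)}$-ground state $\varphi_s$; item~(7), via the standard thermodynamic-limit argument (finite-volume gaps uniform in $k$ and $s$, with the diameter of the lower part $\Sigma_1^{(k)}(s)$ shrinking to $0$), forces $0$ to be a non-degenerate eigenvalue of the bulk Hamiltonian $H_{\varphi_s,\Phi(s)}$ with $\sigma(H_{\varphi_s,\Phi(s)})\setminus\{0\}\subset[\gamma,\infty)$, i.e.\ $\Phi(s)$ has a unique gapped ground state. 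By item~(6) each $\Phi(s)$ is time reversal invariant, so $\varphi_s$ is pure and time reversal invariant. Hence Theorem~\ref{matsui} applies to $\varphi_0$ and $\varphi_1$: both satisfy the split property with respect to $\caA_L$ and $\caA_R$, and the indices $\hat\sigma_{\Phi(0)}=\sigma_{\varphi_0}$, $\hat\sigma_{\Phi(1)}=\sigma_{\varphi_1}$ are well defined through Definition~\ref{indexsp}.

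Next I would apply Proposition~\ref{aep} to this path, obtaining an automorphism $\alpha$ of $\caA$ with $\varphi_1=\varphi_0\circ\alpha$ and $\Xi\circ\alpha=\alpha\circ\Xi$, together with a time reversal invariant factorization $(\alpha_R,\alpha_L,W)$ in the sense of Definition~\ref{adt}; that is, $\varphi_0$ and $\varphi_1$ are automorphic equivalent via a time reversal invariant automorphism allowing a time reversal invariant factorization. The hypotheses of Theorem~\ref{aei} — two time reversal invariant pure split states related by exactly such an automorphism — are then precisely what has been verified, so Theorem~\ref{aei} gives $\sigma_{\varphi_0}=\sigma_{\varphi_1}$, i.e.\ $\hat\sigma_{\Phi(0)}=\hat\sigma_{\Phi(1)}$, which is the assertion.

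So at the level of this statement the argument is pure bookkeeping; the substance has been relocated to Proposition~\ref{aep}, proved in Section~\ref{c1}, and that is where the real obstacle lies. There one must (i)~produce a quasi-local automorphism $\alpha$ connecting $\varphi_0$ to $\varphi_1$ out of the gapped $C^1$-path, a spectral-flow / quasi-adiabatic continuation construction in the spirit of \cite{bmns}, \cite{nsy}, but now requiring that the boundary interactions $\Psi_k(s)$ be carried along and controlled uniformly in $k$; (ii)~establish the \emph{factorization} property, namely that $\alpha\circ(\alpha_L^{-1}\otimes\alpha_R^{-1})$ is inner for suitable automorphisms $\alpha_L$ of $\caA_L$ and $\alpha_R$ of $\caA_R$ — this splitting of a quasi-local automorphism across the cut is the genuinely new technical point advertised in the introduction, and I expect it to be the hardest step; and (iii)~using the time reversal invariance of each $\Phi(s)$, arrange that $\alpha$ commutes with $\Xi$ and that $\alpha_R,\alpha_L$ commute with $\Xi_R,\Xi_L$, which should follow by symmetrizing the spectral-flow generator under $\Xi$ and appealing to uniqueness. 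Granting Proposition~\ref{aep}, Theorem~\ref{c1t} follows exactly as above.
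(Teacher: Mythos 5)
Your proposal is correct and is exactly the paper's argument: the theorem is stated there as an immediate corollary of Proposition~\ref{aep} (which supplies the time reversal invariant, factorizable automorphic equivalence between $\varphi_0$ and $\varphi_1$) and Theorem~\ref{aei}, with Condition~B and Theorem~\ref{matsui} guaranteeing that both endpoint states are pure, time reversal invariant, split, and hence have well-defined indices. Your identification of where the real work lives (the proof of Proposition~\ref{aep} in Section~\ref{c1}) also matches the paper's structure.
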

Namely, the $\bbZ_2$-index is invariant along the $C^1$-path of time reversal invariant gapped interactions, satisfying the {\it Condition B}.


\section{Proof of Theorem \ref{sp} and Theorem \ref{aei}}\label{split}

In order to introduce the $\bbZ_2$-index, let us note the following fact.
\begin{lem}\label{Wegner}
Let $\caH$ be a Hilbert space and $\Theta$ an antilinear  $*$-automorphism on $B(\caH)$
such that $\Theta^2=\id$.
Then there is an antiunitary operator $J$ on $\caH$ such that
\[
\Theta\lmk x\rmk=J xJ^*,\quad x\in B(\caH).
\]
Furthermore, $J^2=\sigma\unit $ with either $\sigma=1$ or $\sigma=-1$.
\end{lem}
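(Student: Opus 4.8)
The plan is to realize the antilinear $*$-automorphism $\Theta$ by an antiunitary via a Wigner-type argument, and then exploit $\Theta^2 = \id$ to pin down $J^2 = \pm\unit$. First I would recall that every $*$-automorphism (linear or antilinear) of $B(\caH)$ is spatial: concretely, since $\Theta$ is antilinear and multiplicative, the composition of $\Theta$ with a fixed antiunitary $J_0$ on $\caH$ (say the complex conjugation with respect to some orthonormal basis, acting by $J_0 x J_0^*$) is a \emph{linear} $*$-automorphism $\Theta' := \Theta \circ \Ad J_0$ of $B(\caH)$. By the standard fact that all $*$-automorphisms of $B(\caH)$ are inner, there is a unitary $U$ with $\Theta'(x) = U x U^*$. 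Hence $\Theta(x) = \Theta'(J_0 x J_0^*) = U J_0 x J_0^* U^* = (UJ_0)\, x\, (UJ_0)^*$, and $J := UJ_0$ is an antiunitary implementing $\Theta$.

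Next I would use the hypothesis $\Theta^2 = \id$. For all $x \in B(\caH)$ we have $x = \Theta^2(x) = J(J x J^*)J^* = J^2\, x\, (J^2)^*$, so $J^2$ is a unitary commuting with every element of $B(\caH)$, hence a scalar: $J^2 = \lambda \unit$ with $\lambda \in \bbT$. Finally, to see $\lambda = \pm 1$, apply $J$ on both sides of $J^2 = \lambda\unit$: since $J$ is antilinear, $J(J^2) = J^2 J = \lambda\unit \cdot J$ on one hand, and $J(\lambda \unit) = \bar\lambda J$ on the other, giving $\lambda J = \bar\lambda J$, so $\lambda = \bar\lambda$, i.e. $\lambda \in \{-1,1\}$. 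Set $\sigma := \lambda$.

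There is essentially no serious obstacle here; the lemma is a folklore packaging of Wigner's theorem for $B(\caH)$. The only point requiring a little care is invoking the correct form of "every $*$-automorphism of $B(\caH)$ is inner" — this holds for arbitrary Hilbert spaces $\caH$ (not just finite-dimensional or separable), via the fact that $B(\caH)$ acts irreducibly and any automorphism sends the rank-one projections to rank-one projections, inducing a unitary on $\caH$. I would cite \cite{takesaki} for this. The scalar-commutant step and the reality argument for $\lambda$ are then immediate.
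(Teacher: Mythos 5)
Your proof is correct and follows essentially the same route as the paper: the paper simply cites Wigner's theorem for the existence of the antiunitary $J$ (you supply the standard reduction to the linear inner case by composing with a fixed conjugation, which is a fine way to prove that cited fact), and your argument that $J^2=\lambda\unit$ with $\lambda=\bar\lambda$ via $J^3=J\cdot J^2=J^2\cdot J$ is exactly the paper's computation. No gaps.
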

\begin{proof}
First part is the Wigner's Theorem. 
For the second part, note that $J^2$ is a linear unitary operator which commute with
all the elements in $B(\caH)$.
Therefore, there is some $\sigma\in\bbT$ such that $J^2=\sigma\unit$.
We then have
\[
\sigma J=J^2\cdot J=J^3=J \cdot J^2=J\cdot \sigma\unit=\bar\sigma J.
\]
From this, we obtain $\sigma\in\{-1,1\}$.
\end{proof}
As a type I factor is isomorphic to $B(\caK)$ with some Hilbert space $\caK$, we may apply the Lemma \ref{Wegner} to obtain Theorem \ref{sp}.
\begin{proofof}[Theorem \ref{sp}]
From the time reversal invariance of $\varphi$, as in \cite{ot},
$\Xi_R$ has an extension $\hat \Xi_R$ to the von Neumann algebra $\pi_{\varphi_R}\lmk \caA_R\rmk{''}$,
as an antilinear  $*$-automorphism so that
\[
\hat\Xi_R\circ\pi_{\varphi_R}\lmk A\rmk=\pi_{\varphi_R}\circ\Xi_R\lmk A\rmk,\quad
A\in \caA_R,\quad\hat\Xi_R^2=\id.
\]
On the other hand, because $\pi_{\varphi_R}\lmk \caA_R\rmk{''}$ is a type I factor
 there exists a Hilbert space ${\caK_\varphi}$ and
a $*$-isomorphism $\iota_\varphi:\pi_{\varphi_R}\lmk \caA_R\rmk{''}\to B\lmk {\caK_\varphi}\rmk$
(\cite{takesaki}).
Then
\[
\Theta:=\iota_\varphi\circ \hat\Xi_R\circ\iota_\varphi^{-1}:
B\lmk {\caK_\varphi}\rmk\to B\lmk {\caK_\varphi}\rmk
\]
defines an antilinear  $*$-automorphism on $B({\caK_\varphi})$
such that $\Theta^2=\id$.
Applying Lemma \ref{Wegner},
we obtain an antiunitary $J_{\varphi}$ on $\caK_{\varphi}$ satisfying
$\Theta\lmk x\rmk=J_\varphi xJ_\varphi^*$, $x\in B(\caK_{\varphi})$
and $J_{\varphi}^2=\sigma_\varphi\unit$, with
$\sigma_\varphi\in\{-1,1\}$.
By the definition of $\Theta$ and $J_\varphi $, we obtain
\begin{align*}
\iota_\varphi\circ \pi_{\varphi_R}\circ\Xi_R\lmk A\rmk
=J_\varphi \lmk \iota_\varphi \circ\pi_{\varphi_R}\lmk A\rmk\rmk J_\varphi^*,\quad
A\in\caA_R.
\end{align*}
This proves the first half of Theorem \ref{sp}.
To prove the latter half of the statement,
suppose that $(\tilde \caK_\varphi, \tilde \iota_\varphi,\tilde  J_\varphi,\tilde\sigma_\varphi )$
satisfies the same conditions.
Then $\tilde \iota_\varphi\circ \iota_\varphi^{-1}:
B(\caK_\varphi)\to B(\tilde \caK_\varphi)$ is a linear
$*$-isomorphism. Therefore, by the Wigner's theorem, there exists a unitary 
$W:\caK_\varphi\to \tilde \caK_\varphi$
such that
$W xW^*=\tilde \iota_\varphi\circ \iota_\varphi^{-1}\lmk x\rmk$, $x\in B(\caK_{\varphi})$.
Using this $W$, 
\begin{align*}
&\tilde J_\varphi W\lmk \iota_\varphi \circ\pi_{\varphi_R}\lmk A\rmk\rmk W^*\tilde J_\varphi^*
=\tilde J_\varphi \lmk \tilde\iota_\varphi \circ\pi_{\varphi_R}\lmk A\rmk\rmk \tilde J_\varphi^*
=\tilde \iota_\varphi\circ \pi_{\varphi_R}\circ\Xi_R\lmk A\rmk\\
&=W\lmk
 \iota_\varphi\circ \pi_{\varphi_R}\circ\Xi_R\lmk A\rmk
 \rmk
W^*
=W\lmk J_\varphi \lmk \iota_\varphi \circ\pi_{\varphi_R}\lmk A\rmk\rmk J_\varphi^*
\rmk W^*,
\end{align*}
for all $A\in\caA_R$.
From this, we obtain
\[
\tilde J_\varphi W x W^*\tilde J_\varphi^*
=W J_\varphi  x J_\varphi^*
 W^*,\quad x\in B(\caK_\varphi).
\]
Hence, $J_\varphi^*W^*\tilde J_\varphi W: B(\caK_\varphi)\to B(\caK_\varphi)$ is a
linear unitary operator on $\caK_\varphi$ which commutes with any element of
$B(\caK_\varphi)$.
Therefore, there is $e^{i\theta}\in \bbT$ such that
$J_\varphi^*W^*\tilde J_\varphi W=e^{-i\theta}\unit$.
Hence we have $\tilde J_\varphi=e^{i\theta} WJ_\varphi W^*$ and
\begin{align}
\tilde J_\varphi^2=e^{i\theta} WJ_\varphi W^*e^{i\theta} WJ_\varphi W^*
=e^{i\theta-i\theta}WJ_\varphi^2W^*
=\sigma_\varphi\unit.
\end{align}
\end{proofof}
Hence we have defined the $\bbZ_2$-index for time reversal invariant pure states satisfying the pure split property.
Next we show that this $\bbZ_2$-index is an invariant of the 
automorphic equivalence via a time reversal invariant automorphism which allows time reversal invariant factorization, Theorem \ref{aei}.
\begin{proofof}[Thorem \ref{aei}]
Let $\varphi_1,\varphi_2$ be time reversal invariant pure states satisfying the split property.
Assume $\varphi_2$ is
 automorphic equivalent to $\varphi_1$ via a time reversal invariant automorphism $\alpha$, i.e.,
$\varphi_2=\varphi_1\circ\alpha$.
Assume that $\alpha$ allows a time reversal invariant factorization, i.e.,
there exist automorphisms $\alpha_R$,
$\alpha_L$ on $\caA_R$, $\caA_L$ and a unitary $W$ in $\caA$
such that
\begin{align}
&\alpha\circ\lmk\alpha_L^{-1}\otimes \alpha_R^{-1}\rmk(A)=
WAW^*,\quad A\in\caA,\label{aaw}\\
&\alpha_R\circ\Xi_R=\Xi_R\circ\alpha_R,\quad
\alpha_L\circ\Xi_L=\Xi_L\circ\alpha_L\label{aral}.
\end{align}

Let $\varphi_L$, 
$\varphi_R$, be the restriction of $\varphi_1$ to
$\caA_L$, $\caA_R$, respectively.We claim that $\varphi_2\vert_{\caA_R}\circ \alpha_R^{-1}$ and $\varphi_R$ are quasi-equivalent.
By (\ref{aaw}), the states
\[
\varphi_2\circ \lmk\alpha_L^{-1}\otimes \alpha_R^{-1}\rmk
=\varphi_1\circ\alpha\circ \lmk\alpha_L^{-1}\otimes \alpha_R^{-1}\rmk
=\varphi_1\circ\Ad W
\]
and $\varphi_1$ are quasi-equivalent.
As $\varphi_1$ satisfies the split property, by the proof of Proposition 2.2 of \cite{Matsui1},
$\varphi_L\otimes\varphi_R$ is quasi-equivalent to $\varphi_1$. (In Proposition 2.2 of \cite{Matsui1}, it is assumed that the state to be translationally invariant because of the first equivalent condition (i). However, the proof for the equivalence
(ii) and (iii) does not require translation invariance.)
Hence $\varphi_2\circ \lmk\alpha_L^{-1}\otimes \alpha_R^{-1}\rmk$ and $\varphi_L\otimes\varphi_R$
are quasi-equivalent.
Let $(\caH_L,\pi_L,\Omega_L)$, $(\caH_R,\pi_R,\Omega_R)$
be the GNS triple of $\varphi_L$, 
$\varphi_R$, respectively.
Note that $(\caH_L\otimes\caH_R,\pi_L\otimes\pi_R,\Omega_L\otimes\Omega_R)$
is the GNS triple of $\varphi_L\otimes\varphi_R$.
As $\varphi_2\circ \lmk\alpha_L^{-1}\otimes \alpha_R^{-1}\rmk$ and $\varphi_L\otimes\varphi_R$
are quasi-equivalent, there is a density matrix $\rho$ on $\caH_L\otimes\caH_R$
such that 
\begin{align}\label{rho}
\varphi_2\circ \lmk\alpha_L^{-1}\otimes \alpha_R^{-1}\rmk(A)=
\Tr_{\caH_L\otimes\caH_{R}}\lmk \rho\lmk \pi_L\otimes\pi_R\rmk (A)
\rmk,\quad A\in\caA.
\end{align}
Let
\begin{align}
\sigma:=\Tr_{\caH_L}\lmk\rho\rmk
\end{align}
be the reduced density matrix of $\rho$ on $\caH_R$.
Here $\Tr_{\caH_L}$ deontes the partial trace over $\caH_L$.
Substituting $A=\unit\otimes B$ with $B\in\caA_R$ in (\ref{rho}),
we obtain
\begin{align}
\varphi_2\vert_{\caA_R}\circ\alpha_R^{-1}(B)
=\Tr_{\caH_R}\lmk
\sigma\pi_R(B)
\rmk,\quad B\in\caA_R.
\end{align}
Hence, $\varphi_2\vert_{\caA_R}\circ\alpha_R^{-1}$ is $\varphi_R$-normal
.
As $\varphi_R$ is a factor state, $\varphi_2\vert_{\caA_R}\circ\alpha_R^{-1}$ and $\varphi_R$
and quasi-equivalent, proving the claim.

For $i=1,2$, let $(\caH_i,\pi_i,\Omega_i)$, be the GNS triple of $\left.\varphi_i\right\vert_{\caA_R}$,
and $\caK_{\varphi_i}$, $\iota_{\varphi_i}$, $\caJ_{\varphi_i}$, $\sigma_{\varphi_i}$ the objects given in
Theorem \ref{sp}.
Note that $(\caH_2,\pi_2\circ\alpha_R^{-1},\Omega_2)$ is the GNS triple of 
$\left.\varphi_2\right\vert_{\caA_R}\circ \alpha_R^{-1}$.
As $\varphi_R=\varphi_1\vert_{\caA_R}$ and $\left.\varphi_2\right\vert_{\caA_R}\circ \alpha_R^{-1}$ are quasi-equivalent,
by Theorem 2.4.26 of \cite{BR1}, there exists an $*$-isomorphism
$\tau: \pi_1\lmk \caA_R\rmk''\to \lmk \pi_2\circ\alpha_R^{-1}\lmk \caA_R\rmk\rmk''
= \lmk \pi_2\lmk\caA_R\rmk\rmk''$
such that
\begin{align}\label{tau}
\tau\lmk\pi_1\lmk A\rmk\rmk
=\pi_2\circ\alpha_R^{-1}\lmk A\rmk,\quad A\in\caA_R.
\end{align}
By Wigner's theorem, for the $*$-isomorphism
$\iota_{\varphi_2 }\circ\tau\circ \iota_{\varphi_1}^{-1}: B(\caK_{\varphi_1})\to B(\caK_{\varphi_2})$
there exists a unitary $U:\caK_{\varphi_1}\to \caK_{\varphi_2}$
such that 
\begin{align}\label{ut}
U xU^*=\iota_{\varphi_2 }\circ\tau\circ\iota_{\varphi_1}^{-1}\lmk x\rmk,\quad x\in B(\caK_{\varphi_1}).
\end{align}
By  (\ref{aral}), (\ref{tau}) and (\ref{ut}) and Theorem \ref{sp},
for any $A\in\caA_R$, we have
\begin{align*}
&J_{\varphi_2}U\lmk
\iota_{\varphi_1}\circ \pi_{1}\lmk A\rmk
\rmk
U^*J_{\varphi_2}^*
=
J_{\varphi_2}\lmk
\iota_{\varphi_2}\circ\tau\circ \pi_{1}\lmk A\rmk
\rmk
J_{\varphi_2}^*
=
J_{\varphi_2}\lmk
\iota_{\varphi_2}\circ \pi_{2}\circ\alpha_R^{-1}\lmk A\rmk
\rmk
J_{\varphi_2}^*
=
\iota_{\varphi_2}\circ \pi_{2}\circ\Xi_R\circ\alpha_R^{-1}\lmk A\rmk\\
&=
\iota_{\varphi_2}\circ \pi_{2}\circ\alpha_R^{-1}\circ\Xi_R\lmk A\rmk
= U
\lmk
\iota_{\varphi_1}\circ \pi_{1}\circ\Xi_R\lmk A\rmk
\rmk
U^*
= UJ_{\varphi_1}
\lmk
\iota_{\varphi_1}\circ \pi_{1}\lmk A\rmk
\rmk
J_{\varphi_1}^*
U^*.
\end{align*}
Multiplying $J_{\varphi_1}^*U^*$ from the left
and $J_{\varphi_2} U$ from the right of this equation,
we obtain
\begin{align*}
 \lmk J_{\varphi_1}^*U^*J_{\varphi_2} U\rmk \lmk \iota_{\varphi_1}\circ \pi_{1}\lmk A\rmk\rmk
=\lmk
\iota_{\varphi_1}\circ \pi_{1}\lmk A\rmk
\rmk \lmk J_{\varphi_1}^*U^*J_{\varphi_2} U\rmk,\quad
A\in\caA_R.
\end{align*}
Hence,  $J_{\varphi_1}^*U^*J_{\varphi_2} U$ is a untary operator on $\caK_{\varphi_1}$
which commutes with any bounded operator on $\caK_{\varphi_1}$.
Therefore, there exists $c\in\bbT$ such that $J_{\varphi_1}^*U^*J_{\varphi_2} U=c\unit$.
We then have
$cU^*J_{\varphi_2} U=J_{\varphi_1}$, and we obtain $\sigma_{\varphi_1}=\sigma_{\varphi_{2}}$.
\end{proofof}

\section{Proof of Proposition \ref{aep}}\label{c1}

In order to prove Proposition \ref{aep}, we use the tools provided in 
\cite{bmns}, which is based on Hastings's quasi-adiabatic continuation \cite{h1}.
Let $P_{k}(s)$ be
the spectral projection of $(H_{\Phi(s)+\Psi_k(s)})_{\Lambda_{n_k}}$
corresponding to the $\Sigma^{(k)}_{1}$ part in Definition \ref{boundary}.
From \cite{bmns} (Proposition 2.4 and Corollary 2.8),
there is a one parameter family of unitaries
$U_{k}(s)\in \caA_{\Lambda_{n_k}}$ such that $P_k(s)=U_k(s)P_k(0)U_k^*(s)$.
This $U_k$ is the solution of the differential equation
\begin{align}\label{udf}
-i\frac{d}{ds} U_{k}(s)=D_{k}(s) U_{k}(s),\quad U_k(0)=\unit.
\end{align}
Here, $D_{k}(s)$ is defined by
\begin{align}\label{dkdef}
D_{k}(s):=
\int_{-\infty}^\infty dt\;
W_\gamma(t)\;\; \tau_t^{\Phi(s)+\Psi_k(s),\Lambda_{n_k}} 
\lmk \frac{d}{ds}\lmk H_{\Phi(s)+\Psi_k(s)}\rmk_{\Lambda_{n_k}}\rmk,\quad s\in[0,1],
\end{align}
with $W_\gamma\in L^1(\bbR)$ being a real-valued odd function
such that $|W_\gamma(t)|$ is continuous, monotone decreasing for $t\ge 0$, and
\begin{align}\label{wg}
\int_0^\infty dt \int_t^\infty  \lv W_\gamma(t)\rv<\infty.
\end{align}
We set
\begin{align}\label{ig}
I_\gamma(t):=\int_t^\infty \lv W_\gamma(s)\rv ds,\quad t>0.
\end{align}
Similarly, we consider a one parameter family of unitaries 
$\hat U_{k,i}(s)$, $i=o,L,R$ 
which is the solution of
the differential equation
\begin{align}\label{hudf}
-i\frac{d}{ds} \hat U_{k,i}(s)=\hat D_{k,i}(s) \hat U_{k,i}(s),\quad \hat U_{k,i}(0)=\unit.
\end{align}
Here, $\hat D_{k,i}(s)$ is defined by
\begin{align}\label{hd}
\hat D_{k,i}(s):=
\int_{-\infty}^\infty dt\;
W_\gamma(t)\;\; \tau_t^{\Phi(s),I_{k,i}} 
\lmk \frac{d}{ds}\lmk H_{\Phi(s)}\rmk_{I_{k,i}}\rmk,\quad s\in[0,1],
\end{align}
with same $W_\gamma(t)$ as in (\ref{dkdef}).
In (\ref{hd}), we set  $I_{k,o}:=\Lambda_{n_k}$, $I_{k,L}:=\Lambda_{n_k}\cap(-\infty,-1]$,  and 
$I_{k,R}:=\Lambda_{n_k}\cap[0,\infty)$.
Let $\alpha_{s}^{(k)},\hat\alpha_s^{(k,i)}$ for $k\in\nan$, $i=o,L,R$, $s\in[0,1]$ 
 be automorphisms on $\caA$ given by
\begin{align*}
\alpha_s^{(k)} (A)=U_{k}(s)^* AU_k(s),\quad
\hat \alpha_s^{(k,i)} (A)=\hat U_{k,i}(s)^* A\hat U_{k,i}(s),\quad A\in\caA.
\end{align*}
From Definition \ref{boundary} {\it 6.}, the definitions of these automorphisms,
and the oddness of $W_\gamma(t)$,
$\hat \alpha_s^{(k,i)} $ commute with $\Xi$.
As we did not assume the time reversal invariance of $\Psi_k(s)$,
$\alpha_s^{(k)}$ does not need to commute with $\Xi$.
By \cite{bmns} proof of Theorem 5.2, for each $s\in[0,1]$,
 there exists the thermodynamic limits 
 $\alpha_{s,i}$
 of
$\hat \alpha_s^{(k,i)}$ for $i=o,L,R$:
\begin{align}\label{hal}
\lim_{k\to \infty}
\sup_{s\in[0,1]}\lV
\alpha_{s,i}(A)-\hat \alpha_s^{(k,i)}(A)
\rV=0,\quad
A\in\caA,\quad i=o,L,R,\quad s\in[0,1].
\end{align}
The limits $\alpha_{s,i}$ also commute with $\Xi$.

The automorphism
$\alpha_s^{(k)} $ also strongly converges to $\alpha_{s,o}$. Note that the difference between $\alpha_{s}^{(k)}$ and $\hat \alpha_s^{(k,o)}$
is just the boundary terms which goes to infinity far away as $k\to \infty$.
\begin{lem} \label{aaak}For any $A\in\caA$, we have
\begin{align}\label{bng}
\lim_{k\to \infty}
\lV
\alpha_{s,o}(A)-\alpha_s^{(k)}(A)
\rV=0.
\end{align}
\end{lem}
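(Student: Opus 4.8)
\textbf{Proof proposal for Lemma \ref{aaak}.}

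The plan is to compare the two flows $\alpha_s^{(k)}$ and $\hat\alpha_s^{(k,o)}$ directly via the standard interpolation identity for solutions of the Heisenberg-type ODEs \eqref{udf} and \eqref{hudf}, and then use the already-established convergence \eqref{hal} for $i=o$. First I would recall that for fixed $A\in\caA$ and fixed $s$, writing $U_k(s)$ and $\hat U_{k,o}(s)$ for the two unitary propagators, the difference $\alpha_s^{(k)}(A)-\hat\alpha_s^{(k,o)}(A)$ can be written as an integral over the interpolation parameter $u\in[0,s]$ of the commutator of $A$ (conjugated appropriately) with $D_k(u)-\hat D_{k,o}(u)$; this is the usual Duhamel / fundamental-theorem-of-calculus computation, and it reduces the estimate to controlling $\lV \bigl[D_k(u)-\hat D_{k,o}(u),\,\alpha_u^{(k)}(A)\bigr]\rV$ uniformly in $u$. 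The key point is that $D_k(u)$ and $\hat D_{k,o}(u)$ differ only through the boundary interaction $\Psi_k(u)$ and the fact that one uses the finite-volume dynamics $\tau_t^{\Phi(u)+\Psi_k(u),\Lambda_{n_k}}$ and the other $\tau_t^{\Phi(u),\Lambda_{n_k}}$: by Condition B {\it 3.}, $\Psi_k(u)$ is supported in $\Lambda_{n_k}\setminus\Lambda_{n_k-R}$, i.e. within distance $R$ of the boundary at $\pm n_k$, which recedes to infinity as $k\to\infty$.

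The key steps, in order: (1) write $D_k(u)-\hat D_{k,o}(u)=\int_{-\infty}^\infty dt\,W_\gamma(t)\,\bigl(\tau_t^{\Phi(u)+\Psi_k(u),\Lambda_{n_k}}(\tfrac{d}{du}(H_{\Phi(u)+\Psi_k(u)})_{\Lambda_{n_k}})-\tau_t^{\Phi(u),\Lambda_{n_k}}(\tfrac{d}{du}(H_{\Phi(u)})_{\Lambda_{n_k}})\bigr)$, and split this into a piece coming from the difference of derivatives ($\sum_X\Psi_k'(X;u)$, localized near the boundary) and a piece coming from the difference of dynamics applied to $\tfrac{d}{du}(H_{\Phi(u)})_{\Lambda_{n_k}}$; (2) for a fixed local $A\in\caA_{\rm loc}$ supported in some $\Lambda_m$, use Lieb-Robinson bounds for the finite-volume dynamics (uniform in $k$ and $s$ by the bound $C_1$ in Condition B {\it 4.}) together with the weight $I_\gamma(t)$ from \eqref{ig}–\eqref{wg} to show that $D_k(u)$, acting on $\caA_{\Lambda_m}$ through a commutator, is well-approximated by the contribution of terms of $\tfrac{d}{du}H$ localized within a slowly growing region around $\Lambda_m$, so that the boundary-localized discrepancies between the two generators contribute a commutator with $A$ of norm $O(I_\gamma(n_k-m-R))+O(\text{(Lieb-Robinson tail)})\to 0$; (3) integrate the resulting uniform-in-$u$ bound over $u\in[0,s]$, concluding $\lV\alpha_s^{(k)}(A)-\hat\alpha_s^{(k,o)}(A)\rV\to 0$ for local $A$; (4) combine with \eqref{hal} (with $i=o$) and the triangle inequality to get $\lV\alpha_{s,o}(A)-\alpha_s^{(k)}(A)\rV\to 0$ for local $A$; (5) extend to all $A\in\caA$ by density, since all maps involved are unital $*$-automorphisms hence norm-contractive.

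The main obstacle is step (2): making precise, with $k$-uniform constants, the statement that conjugation of a fixed local observable by $D_k(s)$ only "sees" the interaction terms within a controlled distance, so that the boundary perturbation $\Psi_k$ — located at distance $\sim n_k$ — has vanishing effect. This is exactly where the Lieb-Robinson bound and the decay condition \eqref{wg} on $W_\gamma$ (through $I_\gamma$) must be invoked carefully; the argument is essentially contained in the proof of Theorem 5.2 of \cite{bmns}, and I would emulate that estimate, with the only new twist being that one is comparing the perturbed ($+\Psi_k$) and unperturbed finite-volume generators rather than the finite-volume and infinite-volume ones. Everything else is routine: the interpolation identity, the triangle inequality with \eqref{hal}, and the density argument.
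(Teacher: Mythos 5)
Your proposal is correct and follows essentially the same route as the paper: a Duhamel/interpolation identity reducing the difference of the two flows to the commutator of $D_k(s)-\hat D_{k,o}(s)$ with a quasi-locally evolved observable, then Lieb--Robinson bounds together with the decay of $W_\gamma$ (via $I_\gamma$) and the localization of $\Psi_k$ near $\partial\Lambda_{n_k}$ to kill that commutator, and finally the triangle inequality with (\ref{hal}) and a density argument. The only point the paper makes more explicit than you do is the treatment of the quasi-local tail of the evolved observable --- it splits $\lmk\hat\alpha_s^{(k,o)}\rmk^{-1}(A)$ into a strictly localized piece $\Pi_{L_k}(\cdot)$ plus a small remainder and separately proves a $k$-uniform bound on $\lV -D_k(s)+\hat D_{k,o}(s)\rV$ to control that remainder --- but this is exactly the \cite{bmns}-style estimate you say you would emulate.
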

The proof of this Lemma is in Appendix \ref{lemmas}.
In the setting of Definition \ref{boundary}, 
let $\caS_{k}(s)$ be a set of states on $\caA_{\Lambda_{n_k}}$
whose support is under $P_k(s)$, $s\in[0,1]$, $k\in\nan$.
Because of the weak*-compactness of the state space, any sequence of extensions of
$\omega_{k,0}\in \caS_{k}(0)$ to $\caA$ has a weak*-accumulation point.
Due to the Definition \ref{boundary} {\it 7.},
any weak*-accumulation point of such sequence is
the $\tau^{\Phi(0)}$ ground state. From Definition \ref{boundary} {\it 5.}, it  is equal to $\varphi_0$.
As this holds for any weak*-accumulation point, we conclude that any extensions of $\omega_{k,0}$
converges to $\varphi_0$ in the weak*-topology.
By \cite{bmns} Corollary 2.8 , $\omega_{k,0}\circ \alpha_s^{(k)}$
is an element of $\caS_{k}(s)$, for each $s\in[0,1]$.
By the same reasoning as above, their extensions converges to $\varphi_s$ in the weak*-topology.
Using (\ref{bng}), as in \cite{bmns} Theorem 5.5, we conclude that $\varphi_s=\varphi_0\circ\alpha_{s,o}$.
Hence $\varphi_s$ is  automorphic equivalent to $\varphi_0$ via the time reversal invariant
automorphism $\alpha_{s,o}$.

Now let us prove that $\alpha_{s,o}$ is factorizable.
For an interaction $\Psi$, we introduce a new interaction $\tilde \Psi$ which is defined by
\begin{align}
\tilde \Psi(X):=\left\{
\begin{gathered}
\Psi(X),\quad \text{if}\quad  X\subset[0,\infty)\quad\text{or} \quad X\subset(-\infty,-1]\\
0,\quad \text{otherwise}
\end{gathered}
\right.
.
\end{align}
Namely, we remove the interaction between the left-infinite chain and the right infinite chain.
We set
\begin{align}
V_k(s):=&\hat D_{k,L}(s)+\hat D_{k,R}(s)-\hat D_{k,o}(s)\nonumber\\
=&
\int_{-\infty}^\infty dt\;
W_\gamma(t)\;\;
\lmk
\begin{gathered}
\sum_{X\subset \Lambda_{n_k}}
\lmk
 \tau_t^{\tilde \Phi(s),\Lambda_{n_k}} \lmk \tilde \Phi'(X;s)\rmk
- \tau_t^{\Phi(s),\Lambda_{n_k}} \lmk  \Phi'(X;s)\rmk
\rmk
\end{gathered}\rmk,
\end{align}
for $k\in\nan$ and 
$s\in[0,1]$. 
These $V_k(s)\in\caA_{\Lambda_{n_k}}$, $k\in\bbN$
converge to some self-adjoint opearator $V(s)$,
uniformly in $s\in[0,1]$, as $k\to\infty$.:
\begin{lem}\label{vvvk}
For each $s\in[0,1]$, there exists a self-adjoint element $V(s)\in\caA$
such that
\begin{align}\label{vcon}
\lim_{k\to\infty}\sup_{s\in [0,1]}
\lV
V_k(s)-V(s)
\rV=0.
\end{align}
\end{lem}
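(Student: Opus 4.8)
The plan is to identify $V(s)$ as the natural infinite-volume version of $V_k(s)$, obtained by replacing the finite-volume dynamics $\tau_t^{\,\cdot\,,\Lambda_{n_k}}$ by the corresponding infinite-volume dynamics (note $\tilde\Phi(s)\in\caB_f$, so $\tau_t^{\tilde\Phi(s)}$ exists) and the range of summation $X\subset\Lambda_{n_k}$ by all of ${\mathfrak S}_\bbZ$, namely
\begin{align*}
V(s):=\int_{-\infty}^\infty dt\; W_\gamma(t)\;\sum_{X\in{\mathfrak S}_\bbZ}\lmk \tau_t^{\tilde\Phi(s)}\lmk\tilde\Phi'(X;s)\rmk-\tau_t^{\Phi(s)}\lmk\Phi'(X;s)\rmk\rmk ,
\end{align*}
and then to show that this series of integrals converges in norm and that $V_k(s)\to V(s)$ uniformly in $s$, using Lieb--Robinson bounds and the locality estimates for quasi-adiabatic generators from \cite{bmns} (see also \cite{h1}, \cite{nsy}). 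The starting observation is that $\tilde\Phi(s)$ and $\Phi(s)$ differ only on the sets $X$ meeting both $[0,\infty)$ and $(-\infty,-1]$; since $\Phi(s)$ is supported on sets of diameter $<M$, there are only finitely many such ``bridging'' sets and all of them lie in the fixed interval $\Lambda_{M}$, on which $\tilde\Phi'(X;s)=0$.

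To see that $V(s)$ is a well-defined self-adjoint element of $\caA$ one splits the sum: the finitely many bridging sets contribute only the bounded operators $-\int W_\gamma(t)\,\tau_t^{\Phi(s)}(\Phi'(X;s))\,dt$, while for every other $X$ one has $\tilde\Phi'(X;s)=\Phi'(X;s)$ and the summand is $\int W_\gamma(t)\lmk\tau_t^{\tilde\Phi(s)}-\tau_t^{\Phi(s)}\rmk(\Phi'(X;s))\,dt$, a difference of two dynamics that agree outside $\Lambda_{M}$ applied to the \emph{same} local operator supported on $X$. Here the key estimate is the Lieb--Robinson comparison of the two dynamics together with the approximate localization of $\int W_\gamma(t)\,\tau_t(\cdot)\,dt$: for $X$ at distance $L$ from $\Lambda_{M}$ one gets a bound
\[
\lV\int_{-\infty}^\infty W_\gamma(t)\lmk\tau_t^{\tilde\Phi(s)}-\tau_t^{\Phi(s)}\rmk(\Phi'(X;s))\,dt\rV\le \lV\Phi'(X;s)\rV\,\varepsilon(L),
\]
where $\varepsilon(L)$ is built from the tail $\int_{|t|\ge L/v}|W_\gamma(t)|\,dt$ and an $\int_{|t|<L/v}|W_\gamma(t)|F(L-v|t|)\,dt$ term, the decay of $W_\gamma$ and the moment condition (\ref{wg}) making $\varepsilon$ decay fast enough that the series over $X$ converges absolutely. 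Since $\lV\Phi'(X;s)\rV\le C_1/|X|$ and the number of bridging-free $X$ at a given distance is controlled by $M$, and $C_1$, $M$ are uniform, the resulting bound on $\lV V(s)\rV$ is uniform in $s$; self-adjointness of $V(s)$ follows because each $V_k(s)$ is self-adjoint and the convergence is in norm.

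For the uniform convergence, I would write $V_k(s)-V(s)$ as the tail $-\sum_{X\not\subset\Lambda_{n_k}}(\cdots)$ of the absolutely convergent series above — which is $o(1)$ uniformly in $s$ — plus, for $X\subset\Lambda_{n_k}$, the finite-volume error $\int W_\gamma(t)\lmk\tau_t^{\tilde\Phi(s),\Lambda_{n_k}}-\tau_t^{\tilde\Phi(s)}-\tau_t^{\Phi(s),\Lambda_{n_k}}+\tau_t^{\Phi(s)}\rmk$ applied to $\Phi'(X;s)$ (or to $0$ for the bridging sets). This last term is bounded in two complementary ways: grouping $\tau_t^{\Phi(s),\Lambda_{n_k}}$ with $\tau_t^{\Phi(s)}$ and $\tau_t^{\tilde\Phi(s),\Lambda_{n_k}}$ with $\tau_t^{\tilde\Phi(s)}$ gives $\le 2\lV\Phi'(X;s)\rV\,\varepsilon(d(X,\Lambda_{n_k}^c))$, since finite and infinite volume dynamics differ only near $\Lambda_{n_k}^c$; grouping instead $\tau_t^{\Phi(s),\Lambda_{n_k}}$ with $\tau_t^{\tilde\Phi(s),\Lambda_{n_k}}$ and the infinite-volume pair together gives $\le 2\lV\Phi'(X;s)\rV\,\varepsilon(d(X,\Lambda_{M}))$. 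Since for every $X\subset\Lambda_{n_k}$ of diameter $<M$ at least one of $d(X,\Lambda_{n_k}^c)$ and $d(X,\Lambda_{M})$ is comparable to $n_k$, taking the minimum of the two bounds and summing (again via $\lV\Phi'(X;s)\rV\le C_1/|X|$, the finite range, and the fast decay of $\varepsilon$) shows this error tends to $0$ as $k\to\infty$, uniformly in $s\in[0,1]$. Combined with the tail estimate, this yields $\sup_{s\in[0,1]}\lV V_k(s)-V(s)\rV\to 0$.

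The main obstacle I anticipate is the quantitative bookkeeping in these Lieb--Robinson/localization estimates: making the decay function $\varepsilon$ explicit from the properties of $W_\gamma$, and in particular checking it decays fast enough to beat the near-boundary combinatorial factor $\sim n_k$ in the last step, while keeping every constant dependent only on $M$, $R$, $\gamma$ and the uniform bound $C_1$ of {\it Condition B}, so that all estimates are genuinely uniform on $[0,1]$. It is also worth stressing that the cancellation of the ``bulk'' contributions between $\hat D_{k,L}(s)+\hat D_{k,R}(s)$ and $\hat D_{k,o}(s)$ is essential here: the individual generators $\hat D_{k,o}(s)$ have norms growing like $n_k$ and do not converge, and only their combination $V_k(s)$ has a norm limit.
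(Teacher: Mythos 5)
Your proposal is correct and matches the paper's own proof in all essentials: the same definition of $V(s)$ as the infinite-volume sum, the same two complementary comparisons (finite vs.\ infinite volume dynamics for $X$ deep inside $\Lambda_{n_k}$, and $\tilde\Phi$ vs.\ $\Phi$ dynamics for $X$ far from the origin), and the same splitting of the time integral at a scale proportional to the relevant distance, with the tail controlled by $I_\gamma$ and condition (\ref{wg}). The only cosmetic difference is that the paper implements your ``take the minimum of the two bounds'' step by explicitly splitting the sum over $X$ at the half-scale box $\Lambda_{m_k}$, $m_k\approx n_k/2$, and your worry about a combinatorial factor $\sim n_k$ does not materialize since the number of relevant $X$ at each distance is bounded by $2^M$.
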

The proof of this Lemma is shown in Appendix \ref{lemmas}.
Furthermore, combining Lemma \ref{vvvk} with (\ref{hal}), 
we obtain
\begin{align}\label{avcon}
\lim_{k\to \infty}\sup_{s\in[0,1]}\lV\hat \alpha_s^{(k,o)}\lmk V_k(s)\rmk -
\alpha_{s,o}\lmk V(s)\rmk
\rV=0.
\end{align}
As a uniform limit of continuous functions, $V(s)$ and $\alpha_{s,o}\lmk V(s)\rmk$ are continuous in $s\in[0,1]$.

For each $k\in\nan$, let $W_k:[0,1]\to \caA_{\Lambda_{n_k}}$ be the solution of the differential equation
\begin{align}\label{wdrf}
\frac{dW_k(s)}{ds}
=i\hat \alpha_s^{(k,o)}\lmk V_k(s)\rmk W_k(s),\quad W_k(0)=\unit.
\end{align}
Then $W_{k}(s)$ is unitary and 
from (\ref{hudf}) and (\ref{wdrf}), 
we can check
\begin{align}
\hat \alpha_s^{(k,o)}\circ\lmk
\lmk \hat \alpha_{s}^{(k,L)}\rmk^{-1}\otimes\lmk \hat \alpha_s^{(k,R)}\rmk^{-1}
\rmk(A)
=W_k(s)AW_k(s)^*.
\end{align}
Because of the uniform convergence of $\hat \alpha_s^{(k,o)}\lmk V_k(s)\rmk$ from (\ref{avcon}), $W_k(s)$ also converges to 
a unitary $W(s)\in\caA$, uniformly in $s\in[0,1]$.
Combining this with the convergence of $\hat \alpha_s^{k,o}$,
$\alpha_{s}^{k,L}$,  $\alpha_s^{k,R}$, we obtain
\begin{align}
\alpha_{s,o}\circ\lmk
\lmk \alpha_{s,L}\rmk^{-1}\otimes\lmk \alpha_{s,R}\rmk^{-1}
\rmk(A)
=W(s)AW(s)^*,\quad A\in\caA,\quad s\in[0,1].
\end{align}
Hence $\alpha_{s,o}$ is factorizable with a time reversal invariant factorization
$(\alpha_{s,R},\alpha_{s,L},W(s))$, completing the proof of Proposition \ref{aep}.

\section{$\bbZ_2$-index for Matrix product states}\label{oshikawa}
In this section, we prove that the $\bbZ_2$-index $\sigma_\varphi$
for a matrix product state $\varphi$ is same as the $\bbZ_2$-index found in \cite{po}.
Throughout this section $S$ is an integer.(See \cite{ot}.)
First let us recall matrix product states.
Let $k\in\bbN$ be a number
and $\vv=(v_{\mu})_{\mu\in\SSS}\in \Mat_k^{\times (2S+1)}$ a $2S+1$-tuple of
$k\times k$ matrices.
For each $l\in\nan$, we set
\begin{equation}
{\mathcal K}_l(\vv) :=\spn\left\{v_{\mu_0}v_{\mu_{1}}\ldots v_{\mu_{l-1}}\mid
(\mu_0,\mu_1,\ldots,\mu_{l-1})\subset\SSS^{\times l}\right\}.
\end{equation}
We say $\vv$ is primitive if ${\mathcal K}_l(\vv)=\Mat_k$ for $l$ large enough.
We denote by $\Primz_u(2S+1,k)$ the set of 
all primitive $2S+1$-tuples $\vv$ of $k\times k$ matrices
such that 
\[
\sum_{\mu\in\SSS} v_\mu v_\mu^*=1.
\]

For $\vv\in\Primz_u(2S+1,k)$, there exists a unique $T_\vv$-invariant state
$\rho_\vv$. (See \cite{Wolf:2012aa} for example.)
Each $\vv\in \Primz_u(2S+1,k)$ generates a translationally invariant state $\omega_\vv$ by
\begin{align}
\omega_{\vv}\lmk
\bigotimes_{i=1}^{l}
e_{\mu_i,\nu_i}
\rmk=
\rho_\vv\lmk v_{\mu_1}\cdots v_{\mu_l} v_{\nu_l}^*\cdots v_{\nu_1}^*\rmk,\quad
\mu_i,\nu_i\in\SSS,\quad i=1,\ldots,l,\quad l\in\nan. 
\end{align}
A translationally invariant state which has this representation is called a matrix product state.
For a matrix product state, this representation is unique up to
unitary and phase:
If both of $\vv^{(1)}\in \Primz_u(2S+1,k_1)$ and $\vv^{(2)}\in \Primz_u(2S+1,k_2)$
generate the same matrix product state, then $k_1=k_2$
and there exist a unitary $U:\cc^{k_1}\to\cc^{k_2}$ and $e^{i\theta}\in\bbT$
such that
\begin{align}\label{unique}
U v_{\mu}^{(1)}=e^{i\theta} v_{\mu}^{(2)}U,\quad
\mu\in\SSS.
\end{align}

Let $\omega$ be a time reversal invariant matrix product state generated by $\vv\in \Primz_u(2S+1,k)$.
It is a unique ground state of some translation invariant finite range interaction.
That is, there is an interaction
$\Phi_\vv$ given by some fixed local positive element $h_\vv\in\caA_{[0,m-1]}$ with some
$m\in\nan$ as
\begin{align}\label{hamdef}
\Phi_{\vv}(X):=\left\{
\begin{gathered}
\beta_x\lmk h_\vv\rmk,\quad \text{if}\quad  X=[x,x+m-1]\cap\bbZ \quad \text{for some}\quad  x\in\bbZ\\
0,\quad\text{otherwise}
\end{gathered}\right.
\end{align}
for each $X\in {\mathfrak S}_{\bbZ}$ and $\omega$ is a unique $\tau^{\Phi_\vv}$-ground state. (See \cite{Fannes:1992vq} and \cite{Ogata3}.)
For this interaction $h_\vv$, $1-h_{\vv}$ is equal to 
the support of $\omega\vert_{\caA_{[0,m-1]}}$. (See  the proof of Lemma 3.19 of \cite{Ogata1} equation (48). Note that primitive $\vv$ belongs to $\ClassA$, Remark 1.16 of \cite{Ogata1}).
Therefore, from the time reversal invariance of $\omega$, $h_\vv$ satisfies 
\begin{align}\label{hinv}
\Xi(h_\vv)=h_\vv.
\end{align}
The Hamiltonian given by this interaction is frustration-free,
i.e.,
for each finite interval $I$ with $|I|\ge m$,
the local Hamiltonian $\lmk H_{\Phi_\vv}\rmk_{I}$ has a nontrivial kernel, which is the ground state space
of $\lmk H_{\Phi_\vv}\rmk_{I}$.
We denote by $G_{I,\vv}$, the orthogonal projection onto this kernel.
By Lemma 3.19 of \cite{Ogata1}, and its proof (equation (48)),
the support of the restriction $\left.\omega\right\vert_{\caA_I}$
is equal to $G_{I,\vv}$ and  there exists some constant $d_\vv>0$ such that
\begin{align}\label{fb}
\psi\le d_\vv \cdot \omega,
\end{align}
for any frustration free state $\psi$ on $\caA_R$, i.e., a state $\psi$ satisfying $\psi(\beta_x(h_\vv))=0$ for any
$0\le x\in\bbZ$.

%
%
%
%

We represent the statement of \cite{po}, in the way formulated by Tasaki \cite{TasakiBook}.
Let $\omega$ be a time reversal invariant matrix product state.
Let $\vv\in  \Primz_u(2S+1,k)$ be a generator of $\omega$, and $c$
a complex conjugation on $\bbC^k$ (i.e., an arbitrary anti-unitary with $c^2=\unit$).
By the time reversal invariance, one can see that $\tilde v_\mu:=(-1)^{S+\mu}cv_{-\mu} c$, $\mu\in\SSS$
also generates $\omega$. Note that state $ \rho_{\tilde \vv}(A):=\rho_\vv(cA^*c)$, $A\in\Mat_k$
is the $T_{\tilde\vv}$-invariant state.

From the uniqueness (\ref{unique}), 
there is a unitary $U$ on $\bbC^k$ and $e^{i\theta}\in\bbT$ such that
\begin{align}\label{cvu}
(-1)^{S+\mu}cv_{-\mu} c=e^{i\theta} U v_\mu U^*,\quad \mu\in\SSS.
\end{align}
In \cite{po}, it is shown that
\begin{align}\label{zetadef}
cUcU=\zeta_\omega\unit,\quad \text{with some} \quad \zeta_\omega\in\{-1,1\},
\end{align}
using the primitivity of $\vv$ and $S\in\nan$. (See \cite{TasakiBook}.)

We claim that this $\zeta_\omega$ does not depend on the choice of
$(\vv, c, U, e^{i\theta})$. To see this, suppose that
$(\vv_j, c_j, U_j, e^{i\theta_j})$, $j=1,2$ satisfy the above conditions.
By the uniqueness, there is a unitary $W$ and $e^{ir}\in \bbT$ such that
$Wv_{1\mu}=e^{ir}v_{2\mu}W$, $\mu\in\SSS$.
From this and (\ref{cvu}) (for $(\vv_j, c_j, U_j, e^{i\theta_j})$, $j=1,2$),
we have
\begin{align}
e^{i\theta_2-ir}U_2Wv_{1\mu}W^* U_2^*
=
(-1)^{S+\mu}c_2v_{2,-\mu}c_2
=e^{ir+i\theta_1} c_2Wc_1U_1v_{1\mu}U_1^*c_1W^*c_2.
\end{align}
Hence we obtain
\begin{align}\label{vvv}
e^{2ir+i\theta_1-i\theta_2}V v_{1\mu}V^*=v_{1\mu},\quad\mu\in\SSS,
\end{align}
with unitary $V:=W^*U_2^*c_2Wc_1U_1$.
From the primitivity of $\vv_{1}$, there are coefficients $c_{\mu_1,\ldots,\mu_l}\in\bbC$,
$\mu_i\in\SSS$, $i=1,\ldots,l$,
such that $\sum_{\mu_1,\ldots,\mu_l\in\SSS} c_{\mu_1,\ldots,\mu_l} v_{1\mu_1}\cdots v_{1\mu_l}=1$, for $l$ large enough.
From this and (\ref{vvv}), we see that $e^{2ir+i\theta_1-i\theta_2}=1$.
Substituting this to (\ref{vvv}), and from the primitivity of $\vv_1$, we obtain 
$V=e^{i\eta}\unit$, with scalar $e^{i\eta}\in \bbT$.
By the definition of $V$, we obtain $Wc_1U_1W^*=e^{-i\eta}c_2U_2$.
From this, we obtain
$Wc_1U_1c_1U_1W^*=c_2U_2c_2U_2$,
proving the claim.

This $\zeta_\omega$ is the $\bbZ_2$-index of \cite{po}.
As a matrix product state $\omega$
is pure and a unique gapped ground state by \cite{Fannes:1992vq},
 it satisfies the split property.
Therefore, we can associate $\omega$, our $\bbZ_2$-index $\sigma_\omega$ in Definition \ref{indexsp}.
We then have the following theorem.
\begin{thm}\label{poo}
For a time reversal invariant matrix product state $\omega$, we have
\[
\sigma_\omega=\zeta_\omega.
\]
\end{thm}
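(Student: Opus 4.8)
The plan is to build an explicit model of the GNS triple $(\caH_{\varphi_R},\pi_{\varphi_R},\Omega_{\varphi_R})$ of $\omega_R:=\omega\vert_{\caA_R}$ out of the matrix product data, to read off the objects $\caK_\varphi$, $\iota_\varphi$, $J_\varphi$ of Theorem~\ref{sp} from it, and then to compute $J_\varphi^2$ in terms of $cU$ using (\ref{cvu}) and (\ref{zetadef}).

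First I would pin down the concrete structure of $\pi_{\varphi_R}$. Since $\omega$ is the unique gapped ground state of the frustration-free interaction $\Phi_\vv$, the frustration-free states on $\caA_R$ are exactly the states $\omega_\sigma$ determined by a ``boundary'' density matrix $\sigma$ on $\bbC^k$ through the matrix product formula, and the bound (\ref{fb}) says precisely that every such $\omega_\sigma$ is $\omega_R$-normal, hence corresponds to a positive element of the commutant $\pi_{\varphi_R}(\caA_R)'$ bounded by $d_\vv\unit$. Thus $\pi_{\varphi_R}(\caA_R)'$ contains a copy of $\Mat_k$, the ``edge algebra'' of boundary matrices; this subalgebra is preserved by time reversal because $\Xi(h_\vv)=h_\vv$ by (\ref{hinv}). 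On the other hand the Schmidt rank of the pure state $\omega$ across the cut between $\caA_L$ and $\caA_R$ is at most the bond dimension $k$, so the multiplicity of $\pi_{\varphi_R}$ is at most $k$; together this forces $\caH_{\varphi_R}\cong\caK_\varphi\otimes\bbC^k$ with $\pi_{\varphi_R}(\caA_R){''}=B(\caK_\varphi)\otimes\unit$, $\pi_{\varphi_R}(\caA_R)'=\unit\otimes\Mat_k$, the second tensor leg $\bbC^k$ being the virtual space of the matrix product state and $\pi_{\varphi_R}(e_{\mu,\nu})$ acting on it through $v_\mu,v_\nu^*$. (This is essentially the content of Lemma~3.19 of \cite{Ogata1} and its proof, specialised to primitive $\vv$; recall a primitive $\vv$ lies in $\ClassA$.) We take $\iota_\varphi:\pi_{\varphi_R}(\caA_R){''}\to B(\caK_\varphi)$ to be $x\otimes\unit\mapsto x$.

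Next, time reversal invariance of $\omega$ (and $\Xi(\caA_R)=\caA_R$) makes $\omega_R$ time reversal invariant, so there is an antiunitary $V$ on $\caH_{\varphi_R}$ with $V\Omega_{\varphi_R}=\Omega_{\varphi_R}$, $V\pi_{\varphi_R}(A)V^*=\pi_{\varphi_R}(\Xi_R(A))$ for $A\in\caA_R$, and $V^2=\unit$. As $V$ implements $\hat\Xi_R$ on $\pi_{\varphi_R}(\caA_R){''}=B(\caK_\varphi)\otimes\unit$, comparison with the commutant $\unit\otimes\Mat_k$ shows $V=J_\varphi\otimes\tilde u_0$ with $J_\varphi$ an antiunitary on $\caK_\varphi$ implementing the transported automorphism --- that is, the antiunitary of Theorem~\ref{sp} for $\omega$ with the isomorphism $\iota_\varphi$ above, so that $J_\varphi^2=\sigma_\omega\unit$ --- and $\tilde u_0$ an antiunitary on $\bbC^k$; then $V^2=\unit$ forces $\tilde u_0^2=\sigma_\omega\unit$. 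It remains to compute $\tilde u_0$, i.e.\ the action $\Ad_{\tilde u_0}$ of time reversal on the edge algebra $\Mat_k$. Writing out $V\pi_{\varphi_R}(e_{\mu,\nu})V^*=\pi_{\varphi_R}(\Xi_R(e_{\mu,\nu}))$ with $\Xi_R(e_{\mu,\nu})=(-1)^{\mu+\nu}e_{-\mu,-\nu}$ (time reversal on a single spin, $S\in\bbN$) and using the explicit action of $\pi_{\varphi_R}(e_{\mu,\nu})$ on the $\bbC^k$ leg through $v_\mu,v_\nu^*$ together with $\tilde v_\mu=(-1)^{S+\mu}cv_{-\mu}c=e^{i\theta}Uv_\mu U^*$ from (\ref{cvu}), one identifies $\Ad_{\tilde u_0}$ with $\Ad_{cU}$; hence $\tilde u_0=\lambda\,cU$ for some $\lambda\in\bbT$ and $\tilde u_0^2=(cU)^2=cUcU=\zeta_\omega\unit$ by (\ref{zetadef}). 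Combining the two expressions for $\tilde u_0^2$ gives $\sigma_\omega=\zeta_\omega$.

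The main obstacle is the first step: establishing the concrete form of the GNS representation of $\omega_R$ for a matrix product state and, in particular, identifying its multiplicity space with the virtual space $\bbC^k$ carrying the $v_\mu$-action --- this is where the frustration-free machinery of \cite{Ogata1} and the domination bound (\ref{fb}) do the real work. Once that picture is in place the remaining steps are essentially bookkeeping; the only delicate point is the antilinearity, i.e.\ correctly tracking the complex conjugation $c$, since it is precisely $c$ that converts the computation of $\Ad_{\tilde u_0}$ into a statement about the antiunitary $cU$ of \cite{po}. A variant that avoids naming the canonical triple of Theorem~\ref{sp} is to exhibit directly some $(\caK_\varphi,\iota_\varphi,\tilde J)$ satisfying the intertwining relation of Theorem~\ref{sp} with $\tilde J$ built from $cU$, so that $\tilde J^2=\zeta_\omega\unit$, and then invoke the uniqueness clause of Theorem~\ref{sp}; but checking the intertwining relation still requires the concrete realisation above.
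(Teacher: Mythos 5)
Your overall strategy is genuinely different from the paper's. The paper never touches the multiplicity space: it constructs isometries $\cs_\mu$ in $\pi_1(\caA_R)''=B(\caH_1)$ with $\pi_1(e_{\mu\nu})=\cs_\mu\cs_\nu^*$, restricts them and the antiunitary $K_1$ (your $J_\varphi$) to the finite-dimensional frustration-free subspace $\caK=\bigcap_{x\ge 0}\ker\pi_1(\beta_x(h_\vv))\subset\caH_1$, proves that $B_\mu=(\cs_\mu^*P_\caK)^*$ is a \emph{primitive generating tuple for $\omega$} satisfying $B_\mu=e^{-i\theta}(-1)^{S+\mu}K_2B_{-\mu}K_2^*$ with $K_2=K_1P_\caK$, and then reads off $\zeta_\omega$ from the quadruple $(\bbB,c,cK_2,e^{i\theta})$ together with the previously established independence of $\zeta_\omega$ of the choice of generator; the chain $\sigma_\omega P_\caK=K_1^2P_\caK=K_2^2=cUcU=\zeta_\omega P_\caK$ finishes it. You instead work on the commutant side, writing $V=J_\varphi\otimes\tilde u_0$ and transferring the sign through $V^2=\unit$. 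That transfer is sound (the factorization of $V$ into a tensor product of two antiunitaries is standard, and $V^2=\unit$ does force $\tilde u_0^2=\sigma_\omega\unit$), and $\dim\caH_2=k$ is correct for primitive $\vv$.

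The gap is at the decisive step, the identification $\Ad_{\tilde u_0}=\Ad_{cU}$. The relation $V\pi_{\varphi_R}(e_{\mu\nu})V^*=\pi_{\varphi_R}(\Xi_R(e_{\mu\nu}))$ cannot yield it: in the splitting $\caH_{\varphi_R}\cong\caK_\varphi\otimes\bbC^k$ every $\pi_{\varphi_R}(A)$ acts as $(\,\cdot\,)\otimes\unit$, so this relation constrains only $J_\varphi$ and carries no information about $\tilde u_0$ --- there is no ``action of $\pi_{\varphi_R}(e_{\mu,\nu})$ on the $\bbC^k$ leg.'' To compute $\tilde u_0$ you must determine how $V$ acts on the commutant $\unit\otimes\Mat_k$, i.e.\ how time reversal permutes the $\omega_R$-normal frustration-free states and hence their boundary density matrices; setting that up requires exactly the machinery you deferred to ``the first step'': the bound (\ref{fb}) gives a positive injection of boundary matrices into $\pi_{\varphi_R}(\caA_R)'$, not yet a $*$-isomorphism onto it, and comparing the transported antiunitary with the $U$ of (\ref{cvu}) still needs the primitivity/uniqueness arguments. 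So as written the key computation is asserted rather than carried out, and the stated mechanism for it is wrong. A way to repair it with least extra work: $V\Omega_{\varphi_R}=\Omega_{\varphi_R}$ forces $\tilde u_0$ to be the ``Schmidt transpose'' of the restriction of $J_\varphi$ to the support of the reduced density matrix on $\caK_\varphi$; that support is the frustration-free subspace $\caK$, and computing $J_\varphi$ there --- via the isometries $\cs_\mu$ and the induced tuple $\bbB$ --- is precisely what the paper does.
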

\begin{rem}
 {\new{It was shown in \cite{po} that 
that
$\zeta_{\varphi_{\rm AKLT}}=-1$ and $\zeta_{\varphi_{\rm trivial}}=1$.
From Theorem \ref{poo}, we conclude 
$\sigma_{\varphi_{\rm AKLT}}=-1$ and $\sigma_{\varphi_{\rm trivial}}=1$.
This is what we stated in the introduction.
}} 
\end{rem}

\begin{proof}Let $\omega$ be a time reversal invariant matrix product state generated by $\vv\in \Primz_u(2S+1,k)$.
Let $\omega_R$ be the restriction of $\omega$ to $\caA_R$,
 and $(\caH,\pi,\Omega)$ 
its GNS triple.
As $\omega$ is pure and split, 
$\pi(\caA_R)''$ is a type I factor.
Therefore, by Chapter V Theorem 1.31 of \cite{takesaki},
there are
separable Hilbert spaces $\caH_1,\caH_2$,
a representation $\pi_1$ of $\caA_R$ on $\caH_1$,
a unitary $W:\caH\to \caH_1\otimes \caH_2$
such that
\begin{align}
\hat\pi_1(A):=\pi_1(A)\otimes \unit=W\pi(A)W^*,\quad A\in \caA_R,
\end{align}
and $\pi_1(\caA_R)''=B(\caH_1)$.
Note that $(\caH_1\otimes \caH_2,\hat \pi_1,W\Omega)$ is a GNS representation of
$\omega_R$.
We denote by $\rho$, the reduced density matrix of $\ket{W\Omega}\bra{W\Omega}$,i.e.,
\begin{align}\label{rhod}
\Tr_{\caH_2}\lmk
\ket{W\Omega}\bra{W\Omega}
\rmk=
\rho.
\end{align}
Here $\Tr_{\caH_2}$ denotes the partial trace over $\caH_2$.
As in the proof of Theorem \ref{sp}, there exists an anti-unitary $K_1$ on $\caH_1$
such that
\begin{align}\label{kpk}
\pi_1\lmk
\Xi_R\lmk
A
\rmk
\rmk
=K_1\pi_1(A)K_1^*,\quad A\in\caA_R.
\end{align}
For this $K_1$, we have $K_1^2=\sigma_\omega\unit$, by Theorem \ref{sp}.

As $\omega$ is translation invariant,
there exist operators $\cs_\mu\in \pi_1(\caA_R)''=B(\caH_1)$ with $\mu\in\SSS$
satisfying the following:
\begin{align}
&\cs_{\mu}^*\cs_{\nu}=\delta_{\mu\nu}\unit,\label{s1}\\
&\sum_{\mu\in\SSS} \cs_{\mu} \pi_{1}(A) \cs_{\mu}^*=\pi_1\circ\beta_{1}(A),\quad A\in \caA_R \label{st}.\\
&\pi_1\lmk e_{\mu\nu}\otimes\unit_{[1,\infty)}\rmk
=\cs_\mu \cs_\nu^*\quad
\text{for all} \quad \mu,\nu\in\SSS.\label{sr}
\end{align}
(See \cite{arv,bjp,BJ}, Proof of Proposition 3.5 of \cite{Matsui1} and Lemma 3.5 of \cite{Matsui3}.)
Here $ e_{\mu\nu}\otimes\unit_{[1,\infty)}$ indicates an element $e_{\mu\nu}$ in $\caA_{\{0\}}=\Mat_{2S+1}$
embedded into $\caA_R$.
From (\ref{st}) and (\ref{sr}), we have
\begin{align}\label{ssss}
\pi_1\lmk\bigotimes_{k=0}^{l-1}e_{\mu_k,\nu_k}\rmk
=\cs_{\mu_0}\cdots \cs_{\mu_{l-1}}\cs_{\nu_{l-1}}^*\cdots \cs_{\nu_0}^*,
\end{align}for all $l\in\nan$, $\mu_k,\nu_k\in\SSS$.

By the same argument as in the proof of Theorem 2.2 of \cite{ot},
we see that there is some $e^{i\theta}\in\bbT$ such that
\begin{align}\label{stran}
\cs_\mu= e^{-i\theta}(-1)^{S+\mu}K_1 \cs_{-\mu}K_1^*,\quad \mu\in\SSS.
\end{align}

Now we restrict these $\cs_\mu$ to {\it a frustration-free subspace}
$\caK$ of $\caH_1$.
Recall that $\omega$ is the frustration free ground state of the translation invariant finite range interaction $\Phi_\vv$ (\ref{hamdef}). Namely, there is a self-adjoint element $h_\vv\in \caA_{[0,m-1]}$, 
such that
$\omega(\beta_x(h_\vv))=0$ for all $x\in\bbZ$.
We consider the following frustration-free
subspace of $\caH_1$:
\begin{align*}
\caK:=\cap_{\bbZ\ni x\ge 0}\ker \pi_1\lmk
\beta_x\lmk h_\vv\rmk
\rmk.
\end{align*}
Note that the support of $\rho$, (\ref{rhod}),  is in $\caK$, because $\omega$ is frustration-free.
Let $P_\caK$ be the orthogonal projection onto $\caK$.
As in \cite{Matsui3} (Lemma 3.2 and the argument in the proof of Lemma 3.6), $\caK$ is a finite dimensional space,
and 
$\cs_\mu^*$ preserves $\caK$:
\begin{align}\label{psb}
\cs_\mu^* P_\caK=P_{\caK}\cs_\mu^* P_\caK ,\quad \mu\in\SSS.
\end{align}
We denote $(\cs_\mu^* P_\caK)^*$ by $B_\mu$, $\mu\in\SSS$.

We claim that $\bbB=(B_\mu)_{\mu\in\SSS}$ is primitive. To prove this, it suffices to show that
$\rho$ is faithful on $\caK$ and for the completely positive unital map $T_\bbB$ defined
by $T_\bbB(x)=\sum_{\mu\in\SSS} B_\mu x B_\mu^*$, $x\in B(\caK)$,
we have $T_\bbB^N(x)\to \rho(x)\unit $, as $N\to\infty$,
for each $x\in B(\caK)$. (See Lemma C.5 of \cite{Ogata1}.)
First we show that $\rho$ is faithful on $\caK$.
If $\rho$ is not faithful on $\caK$, 
then there exists a unit vector $\xi\in\caK$ which is orthogonal to
the support of $\rho$.
By the definition of $\caK$, this $\xi$ defines a frustration free state
$\psi=\braket{\xi}{\pi_1\lmk\cdot\rmk\xi}$.
Let $p$ be the orthogonal projection onto the one-dimensional space $\bbC\xi$.
As $\pi_1(\caA_R)''=B(\caH_1)$, by Kaplansky's density Theorem, there exists
a net $\{x_\alpha\}_{\alpha}$ of positive elements in the unit ball of $\caA_{R}$
such that $\pi_1\lmk x_\alpha\rmk\to p$ in the $\sigma w$-topology.
For this net, we have $\lim_{\alpha}\omega(x_\alpha)=0$
and $\lim_\alpha\psi(x_\alpha)=1$.
This contradicts to (\ref{fb}).
Hence $\rho$ is faithful on $\caK$.
Next we show $T_\bbB^N(x)\to \rho(x)\unit $, as $N\to\infty$ for all $x\in B(\caK)$.
By $\pi_1(\caA_R)''=B(\caH_1)$ and the finite dimensionality of $\caK$,
we have $B(\caK)=P_\caK \pi_1\lmk \caA_{R}\cap\caA_{\rm loc}\rmk P_\caK$.
Therefore, for each $x\in B(\caK)$, there is an element $A\in \caA_{R}\cap\caA_{\rm loc}$
such that $x=P_{\caK}\pi_1\lmk A \rmk P_\caK$.
As $\omega$ is a factor state and translation invariant, we have
$\sigma w-\lim_{N\to\infty}\pi_1\circ \beta_N(A)=\omega(A)\unit$.
Therefore, for any $\eta\in\caK$, we have
\begin{align}
\braket{\eta}{T_{\bbB}^N\lmk x\rmk\eta}
=
\braket{\eta}{T_{\bbB}^N\lmk P_{\caK}\pi_1 \lmk A\rmk P_{\caK}\rmk\eta}
=\braket{\eta}{\pi_1\circ\beta_N\lmk A\rmk\eta}
\to \omega(A)\lV \eta\rV^2
=\rho(x)\lV \eta\rV^2,\quad N\to\infty.
\end{align}
Hence $\bbB$ is primitive.

The above proof for the primitivity also tells us that $\rho$ is the $T_\bbB$-invariant state.
From (\ref{ssss}) and the definition of $\bbB$ and (\ref{psb}), we see that 
$\bbB$ is a $2S+1$-tuple generating $\omega$.

By (\ref{kpk}), (\ref{hinv}) and $\Xi\circ \beta_x=\beta_x\circ \Xi$, we obtain
\begin{align}
\pi_1\lmk \beta_x(h_\vv)\rmk K_1^*P_{\caK}
=K_1^*K_1 \pi_1\lmk \beta_x(h_\vv)\rmk K_1^*P_{\caK}
=K_1^* \pi_1\lmk\Xi_R\circ \beta_x(h_\vv)\rmk P_{\caK}
=K_1^* \pi_1\lmk \beta_x(h_\vv)\rmk P_{\caK}=0,
\end{align}
for any $0\le x\in\bbZ$.
From this, we obtain
\begin{align}\label{pkpkp}
P_\caK  K_1P_\caK=P_\caK K_1 .
\end{align}
Similarly, from 
\begin{align}
\pi_1\lmk \beta_x(h_\vv)\rmk K_1P_{\caK}
=K_1K_1^* \pi_1\lmk \beta_x(h_\vv)\rmk K_1P_{\caK}
= K_1\pi_1\lmk\Xi_R\circ \beta_x(h_\vv)\rmk P_{\caK}
=K_1 \pi_1\lmk \beta_x(h_\vv)\rmk P_{\caK}=0,
\end{align}
for any $0\le x\in\bbZ$,
we obtain 
\begin{align}\label{pkppk}
P_\caK  K_1P_\caK=K_1P_\caK .
\end{align}
Hence $P_\caK$ and $K_1$ commute.
Because of this, we may define an anti-unitary $K_2:=P_\caK  K_1P_\caK=P_\caK K_1=K_1 P_\caK$
on $\caK$.

Multiplying $P_\caK$ from left of (\ref{stran}),
and using (\ref{psb}), (\ref{pkpkp}), (\ref{pkppk}) and the definition of $\bbB$,
we obtain
\begin{align}\label{bkb}
B_\mu= e^{-i\theta}(-1)^{S+\mu}K_2 B_{-\mu}K_2^*,\quad \mu\in\SSS.
\end{align}
Choose some complex conjugation $c$ on $\caK$ and
define $U:=cK_2$. Then $U$ is an unitary on $\caK$ and multiplying
$c$ from left and right of
(\ref{bkb}), we obtain
\begin{align*}
(-1)^{S+\mu}cB_{-\mu}c=e^{i\theta} UB_\mu U^*,\quad
\mu\in\SSS.
\end{align*}
Namely, ($\bbB$, $c$, $U$, $e^{i\theta}$) satisfies the 
condition of the quadrapret to define the $\zeta_\omega$ (\ref{zetadef}).
Therefore, we have
$cUcU=\zeta_\omega P_\caK$.
We then get
\begin{align*}
\sigma_\omega P_\caK=K_1^2P_\caK=K_2^2=cUcU=\zeta_\omega P_\caK.
\end{align*}
Hence we obtain $\zeta_\omega=\sigma_\omega$

\end{proof}

{\bf Acknowledgment.}\\
{The author is grateful to Hal Tasaki for fruitful discussion which was essential for the present work.
This work was supported by JSPS KAKENHI Grant Number 16K05171. 
Part of this paper was written during the visit of the author to
 CRM, 
 with the support of CRM-Simons program “Mathematical challenges in many-body physics 
 and quantum information”.
}
\bigskip

\appendix
\section{Proof of Lemma \ref{aaak} and Lemma \ref{vvvk}}\label{lemmas}
In this section we prove Lemma \ref{aaak} and Lemma \ref{vvvk}.
The proof is based on arguments and tools in \cite{bmns}.
For $M$ in {\it Condition B}, we may and will assume that $M>2$ .
Let us first recall the Lieb-Robinson bound.Fix some $a>0$ (throughout this appendix), and
define a positive function $F_a(r)$ on $\bbR_{\ge 0}$ by $F_a(r):=(1+r)^{-2}e^{-ar}$.
For a path of interactions satisfying Definition \ref{boundary},there exist
positive constants
$C_{1,a}$, $v_a$ satisfying the following.:
For any $X,Y\in{\mathfrak S}_{\bbZ}$, $A\in\caA_X$,
$B\in\caA_Y$, $k\in\nan$, $s\in[0,1]$ and $t\in\bbR$, we have
\begin{align}\label{lr}
&\lV
\left[
\tau_{t}^{\Phi(s)}(A),B
\right]
\rV,
\lV
\left[
\tau_{t}^{\tilde \Phi(s)}(A),B
\right]
\rV,
\lV
\left[
\tau_{t}^{\Phi(s),\Lambda_{n_k}}(A),B
\right]
\rV,
\lV
\left[
\tau_{t}^{\tilde \Phi(s),\Lambda_{n_k}}(A),B
\right]
\rV,
\lV
\left[
\tau_{t}^{\Phi(s)+\Psi_{k}(s),\Lambda_{n_k}}(A),B
\right]
\rV
\nonumber\\
&\le
C_{1,a}e^{v_a|t|}
\sum_{x\in X,y\in Y}F_a(|x-y|)\lV A\rV \lV B\rV.
\end{align}
{\new{(The inequality means that each of the left hand side can be bounded by the same
value written on the right hand side. We use this way of writing below as well.)}}
As in the proof of Theorem 2.2 \cite{nos}, 
perturbation of dynamics can be estimated by the use of the Lieb-Robinson bound.
In particular, by the Lieb-Robinson bound (\ref{lr})
and {\it 2.} of Defintion \ref{boundary},
for the fixed $a>0$ above, there exists a constant $C_{2,a}$ such that
\begin{align}\label{tpl}
&\lV
\tau_t^{\tilde\Phi(s),\Lambda_{n}}\lmk A\rmk
-\tau_t^{\Phi(s),\Lambda_{n}}\lmk A\rmk
\rV
=\lV
\int_0^t du
\frac{d}{du}\lmk
\tau_{t-u}^{\tilde\Phi(s),\Lambda_{n}}\circ \tau_u^{\Phi(s),\Lambda_{n}}\lmk A\rmk
\rmk
\rV\nonumber\\
&=\lV
\int_0^t du
 \;\tau_{t-u}^{\tilde\Phi(s),\Lambda_{n}}\lmk
\sum_{\substack{X\cap[0,\infty)\neq\emptyset,\; 
X\cap (-\infty,-1]\neq\emptyset}}
i\left[
\Phi(X;s),\tau_u^{\Phi(s),\Lambda_{n}}\lmk A\rmk
\right]
\rmk
\rV\nonumber\\
&\le
C_{2,a}\sum_{y\in Y}e^{v_a|t|-a|y|}\lV A\rV
,
\end{align}
for all $t\in\bbR$, $s\in[0,1]$, $n\in\nan$,
$Y\in{\mathfrak S}_{\bbZ}$, and $A\in\caA_Y$.
Here, $v_a$ is the same constant as in (\ref{lr}).
Similarly, we have
\begin{align}\label{ppp}
&\lV
\tau_t^{\Phi(s),\Lambda_{n_k}}\lmk A\rmk
-\tau_t^{\Phi(s)+\Psi_k(s),\Lambda_{n_k}}\lmk A\rmk
\rV
\le
C_{3,a}\sum_{y\in Y}e^{v_a|t|-a\cdot d(y,\lmk \Lambda_{n_k-R}\rmk^c)}\lV A\rV
,
\end{align}
for all $t\in\bbR$, $s\in[0,1]$, $k\in\nan$,
$Y\in{\mathfrak S}_{\bbZ}$, and $A\in\caA_Y$.

Taking $n\to\infty$ limit in (\ref{tpl}), we obtain
\begin{align}\label{tppd}
\lV
\tau_t^{\tilde\Phi(s)}\lmk A\rmk
-\tau_t^{\Phi(s)}\lmk A\rmk
\rV
\le C_{2,a}\sum_{y\in Y}e^{v_a|t|-a|y|}\lV A\rV,
\end{align}
for all $t\in\bbR$, $s\in[0,1]$, 
$Y\in{\mathfrak S}_{\bbZ}$, and $A\in\caA_Y$.
This estimate tells us that if $A$ is far away from the origin of 
$\bbZ$ compared to $|t|$, the difference between
the dynamics given by $\Phi(s)$ and $\tilde\Phi(s)$ is small.

By the same argument as in (\ref{tpl}), for the fixed $a>0$, there exists
a positive constant $C_{3,a}$ such
\begin{align}\label{nos}
&\lV
\tau_t^{\tilde\Phi(s),\Lambda_{m}}\lmk A\rmk
-\tau_t^{\tilde \Phi(s),\Lambda_{n}}\lmk A\rmk
\rV,\;
\lV
\tau_t^{\Phi(s),\Lambda_{m}}\lmk A\rmk
-\tau_t^{\Phi(s),\Lambda_{n}}\lmk A\rmk
\rV,\nonumber\\
&\lV
\tau_t^{\tilde\Phi(s),\Lambda_{m}}\lmk A\rmk
-\tau_t^{\tilde \Phi(s)}\lmk A\rmk
\rV,\;
\lV
\tau_t^{\Phi(s),\Lambda_{m}}\lmk A\rmk
-\tau_t^{\Phi(s)}\lmk A\rmk
\rV
\nonumber\\
&\le
C_{3,a}e^{v_a|t|}
\sum_{y\in Y}\sum_{x\in\Lambda_{m}^c}
F_a\lmk
|x-y|
\rmk\lV A\rV
\end{align}
for all $n,m\in\nan$, $n>m$,
$t\in\bbR$, $s\in[0,1]$, 
$Y\in{\mathfrak S}_{\bbZ}$, and $A\in\caA_Y$.
Here, $v_a$ is the same constant as in (\ref{lr}).
For each $k\in\nan$, we denote by $m_k$ the
the smallest integer less than or equal to $n_k/2$.

\begin{proofof}[Lemma \ref{aaak}]
We first show that
\begin{align}\label{ibng}
\lim_{k\to \infty}
\lV
\lmk \hat \alpha_{s}^{(k,o)}\rmk^{-1}(A)-\lmk \alpha_s^{(k)}\rmk^{-1}(A)
\rV=0,
\end{align}
for any $l\in\nan$ and $A\in\caA_{\Lambda_l}$. 
Fix any  $l\in\nan$ and $A\in\caA_{\Lambda_l}$. 
We may and we will assume that $n_k\ge 4(M+R+l)$ for each $k\in\nan$.
For each $k\in\nan$, we have
\begin{align}
\frac{d}{ds}\alpha_s^{(k)}\circ\lmk \hat \alpha_s^{(k,o)}\rmk^{-1}\lmk A\rmk
=\alpha_s^{(k)}
\lmk
i\left[
-D_k(s)+\hat D_{k,o}(s),\;
\lmk \hat \alpha_s^{(k,o)}\rmk^{-1}\lmk A\rmk
\right]
\rmk.
\end{align}
We claim
\begin{align}\label{claim}
\varepsilon_k(A):=\sup_{s\in[0,1]}\lV\left[
-D_k(s)+\hat D_{k,o}(s),\;
\lmk \hat \alpha_s^{(k,o)}\rmk^{-1}\lmk A\rmk
\right]\rV\to0,\quad k\to\infty.
\end{align}
To show this, we split $\lmk \hat \alpha_s^{(k,o)}\rmk^{-1}\lmk A\rmk$ into two parts.
For each $k$, we denote by $L_k$, the smallest integer less than or equal to
$\frac{n_k}4$.
Recall also that $m_k$ is the smallest integer less than or equal to $\frac{n_k}2$.
From \cite{bmns} proof of Theorem 4.5 and Lemma 3.2, $\lmk \hat \alpha_s^{(k,o)}\rmk^{-1}\lmk A\rmk$ can be decomposed into
an element $\Pi_{L_k}\lmk \lmk \hat \alpha_s^{(k,o)}\rmk^{-1}\lmk A\rmk\rmk$
in $\caA_{L_k}$ 
with $\lV
\Pi_{L_k}\lmk \lmk \hat \alpha_s^{(k,o)}\rmk^{-1}\lmk A\rmk\rmk
\rV\le\lV A\rV$,
and the rest, which is bounded from above as
\begin{align}\label{dfr}
\lV
\lmk \hat \alpha_s^{(k,o)}\rmk^{-1}\lmk A\rmk-
\Pi_{L_k}\lmk \lmk \hat \alpha_s^{(k,o)}\rmk^{-1}\lmk A\rmk\rmk
\rV
\le
C_1\;(2l+1)\;\tilde u\lmk d\lmk
\Lambda_l,\Lambda_{n_k}\setminus \Lambda_{L_k}
\rmk\rmk\lV A\rV.
\end{align}
The function $\tilde u(r)$, $r>0$ on the right hand side satisfies $\tilde u(r)\to 0$,
as $r\to\infty$.

The difference $-D_k(s)+\hat D_{k,o}(s)$ is localized at the boundary of $\Lambda_{n_k}$. Therefore,
by Lieb-Robinson bound, it almost commutes with $\Pi_{L_k}\lmk \lmk \hat \alpha_s^{(k,o)}\rmk^{-1}\lmk A\rmk\rmk$ for $k$ large enough.
For simplicity, let us introduce a notation
\begin{align}
B(X,s,t,k):=
\tau_{t}^{\Phi(s),\Lambda_{n_k}} \lmk \Phi'(X;s)\rmk
-\tau_{t}^{\Phi(s)+\Psi_{k}(s),\Lambda_{n_k}} \lmk \Phi'(X;s)\rmk,
\end{align}
for $X\in{\mathfrak S}_{\bbZ}$, $t\in\bbR$, $s\in [0,1]$, and $k\in\bbN$.
We have
\begin{align}\label{bn}
-D_k(s)+\hat D_{k,o}(s)
=\sum_{X\subset \Lambda_{n_k}}
\int_{-\infty}^\infty dt \; W_\gamma(t)
B(X,s,t,k)
-\sum_{\substack{X\subset \Lambda_{n_k}\\X \subset\Lambda_{n_k}\setminus \Lambda_{n_k-R}}}
\int_{-\infty}^\infty dt \; W_\gamma(t)
\tau_{t}^{\Phi(s)+\Psi_{k}(s),\Lambda_{n_k}} \lmk \Psi_k'(X;s)\rmk.
\end{align}
Set
\begin{align}
T_X^k:=\frac a{2v_a} \cdot d\lmk X,\lmk \Lambda_{n_k-R}\rmk^c\rmk,\quad
S_X^k:=\frac a{2v_a} \cdot d\lmk X,\Lambda_{L_k}\rmk
\end{align}
for each $k\in\nan$ and $X\in{\mathfrak S}_\bbZ$.
We split the summation of $X\subset\Lambda_{n_k}$ in the first term of (\ref{bn}) into $X\subset \Lambda_{m_k}$ and
$X\cap\lmk \Lambda_{m_k}\rmk^c\neq\emptyset$.
For  $X\subset \Lambda_{m_k}$, we split the integration into
$|t|\le T_X^{k}$ part and $|t|\ge T_X^{k}$ part.
For $X\cap\lmk \Lambda_{m_k}\rmk^c\neq\emptyset$, we split the integration into
$|t|\le S_X^{k}$ part and $|t|\ge S_X^{k}$ part.

First we consider $X\subset \Lambda_{m_k}$ and
$|t|\le T_X^{k}$ part.
From (\ref{ppp}), and Definition \ref{boundary} {\it 2.}, we have 
\begin{align}\label{two}
&\lV
\sum_{X\subset \Lambda_{m_k}}
\int_{|t|\le T_X^k} dt W_\gamma(t) B(X,s,t,k)
\rV
\le
\sum_{\substack{X\subset \Lambda_{m_k}\\ \diam X<M}}\lV W_\gamma\rV_1
C_1C_{3,a}\sum_{y\in X}e^{v_aT_X^k-a\cdot d\lmk y,\lmk \Lambda_{n_k-R}\rmk^c\rmk}\nonumber\\
&\le
\sum_{\substack{X\subset \Lambda_{m_k}\\ \diam X<M}}\lV W_\gamma\rV_1
C_1C_{3,a}Me^{-\frac a2\cdot d\lmk X,\lmk \Lambda_{n_k-R}\rmk^c\rmk}
=C_1C_{3,a}M\lV W_\gamma\rV_1\sum_{j=n_k-m_k-R}^\infty\sum_{\substack{X\subset \Lambda_{m_k}\\ \diam X<M\\d\lmk X,\lmk \Lambda_{n_k-R}\rmk^c\rmk=j}}e^{-\frac a2 j}\nonumber\\
&\le
C_1C_{3,a}M2^{M}\lV W_\gamma\rV_1\sum_{j=n_k-m_k-R}^\infty 
e^{-\frac a2 j}
\end{align}
Note that for $X\subset \Lambda_{m_k}$, the distance between $X$ and
$\lmk \Lambda_{n_k-R}\rmk^c$ is at least $n_k-R-m_k$.
This is used in the equality in the second line.
Recall that $n_k-m_k-R\ge 1$ as we assumed $n_k\ge 4(M+R+l)$ in the beginning of the proof.
In the last inequality, we used the fact 
that for any $j\ge 1$,
the number of $X\subset\Lambda_{m_k}$ with $\diam(X)<M$
such that $d\lmk X,\lmk \Lambda_{n_k-R}\rmk^c\rmk=j$
is at most $2^{M}$.
Note that the last line of (\ref{two}) is independent of $s\in[0,1]$ and goes to $0$
as $k\to\infty$.

Next we estimate the first term of (\ref{bn}) corresponding to 
$X\cap\lmk \Lambda_{m_k}\rmk^c\neq\emptyset$ and
$|t|\le S_X^{k}$ part. The corresponding part of 
$-D_k(s)+\hat D_{k,o}(s)$ is not necessarily small, but
it is localized at the edge of $\Lambda_{n_k}$. Therefore, the commutator with
$\Pi_{L_k}\lmk \lmk \hat \alpha_s^{(k,o)}\rmk^{-1}\lmk A\rmk\rmk$
is small.
From the Lieb-Robinson bound (\ref{lr}), by the same kind of argument as in (\ref{two})
\begin{align}\label{four}
&\lV
\sum_{\substack{X\subset \Lambda_{n_k}\\
X\cap\Lambda_{m_k}^c\neq\emptyset
}}
\int_{|t|\le S_X^k} dt W_\gamma(t)
\left[ B(X,s,t,k),\;\Pi_{L_k}\lmk \lmk \hat \alpha_s^{(k,o)}\rmk^{-1}\lmk A\rmk\rmk
\right]
\rV\nonumber\\
&\le
\sum_{\substack{X\subset \Lambda_{n_k}\\
X\cap\Lambda_{m_k}^c\neq\emptyset
}}
\int_{|t|\le S_X^k} dt \lv W_\gamma(t)\rv
\lmk
\begin{gathered}
\lV
\left[\tau_{t}^{\Phi(s),\Lambda_{n_k}} \lmk \Phi'(X;s)\rmk
 ,\;\Pi_{L_k}\lmk \lmk \hat \alpha_s^{(k,o)}\rmk^{-1}\lmk A\rmk\rmk
\right]
\rV\\
+\lV
\left[\tau_{t}^{\Phi(s)+\Psi_{k}(s),\Lambda_{n_k}} \lmk \Phi'(X;s)\rmk ,\;\Pi_{L_k}\lmk \lmk \hat \alpha_s^{(k,o)}\rmk^{-1}\lmk A\rmk\rmk
\right]
\rV
\end{gathered}
\rmk\nonumber
\\
&\le
2C_{1,a}C_1\lV W_\gamma\rV_1\lV A\rV
\sum_{\substack{X\subset \Lambda_{n_k}\\
X\cap\Lambda_{m_k}^c\neq\emptyset\\\diam X<M
}}
e^{v_aS_X^k}
\sum_{x\in X,y\in \Lambda_{L_k}}F_a(|x-y|)\nonumber\\
&\le
2C_{1,a}C_1\lV W_\gamma\rV_1\lV A\rV M\sum_{y\in \bbZ}F(|y|)
\sum_{\substack{X\subset \Lambda_{n_k}\\
X\cap\Lambda_{m_k}^c\neq\emptyset\\\diam X<M
}}
e^{v_aS_X^k-a\cdot d\lmk X,\Lambda_{L_k}\rmk}\nonumber\\
&=
2C_{1,a}C_1\lV W_\gamma\rV_1\lV A\rV M\sum_{y\in \bbZ}F(|y|)
\sum_{j=m_k-M-L_k}^\infty 
\sum_{\substack{X\subset \Lambda_{n_k}\\
X\cap\Lambda_{m_k}^c\neq\emptyset\\\diam X<M\\
d\lmk
X,\Lambda_{L_k}
\rmk=j
}}
e^{v_aS_X^k-a\cdot d\lmk X,\Lambda_{L_k}\rmk}\nonumber\\
&\le
2C_{1,a}C_1\lV W_\gamma\rV_1\lV A\rV M\sum_{y\in \bbZ}F(|y|)2^{M}
\sum_{j=m_k-M-L_k}^\infty 
e^{-\frac{aj}2}.
\end{align}
As we assumed that $k$ is large enough so that $n_k\ge 4(M+R+l)$, we have $m_k-M-L_k\ge 1$.
Therefore, in the last inequality,
the number of $X\cap\lmk \Lambda_{m_k}\rmk^c\neq\emptyset$ 
with $\diam X<M$  and $d(X,\Lambda_{L_k})=j\ge 1$ is bounded by $2^{M}$.
The last line is independent of $s\in[0,1]$ and goes to $0$
as $k\to\infty$.

For $X\cap\lmk \Lambda_{m_k}\rmk^c\neq\emptyset$, and
$|t|\ge S_X^{k}$ part, 
we have
\begin{align}\label{three}
&\lV
\sum_{\substack{X\cap\lmk \Lambda_{m_k}\rmk^c\neq\emptyset\\X\subset\Lambda_{n_k}}}
\int_{|t|\ge S_X^k} dt W_\gamma(t) B(X,s,t,k)
\rV
\le 4C_1\sum_{\substack{X\cap\lmk \Lambda_{m_k}\rmk^c\neq\emptyset\\X\subset\Lambda_{n_k}\\
\diam X<M}} I_\gamma(S_{X}^k)
\nonumber\\
&\le
2^{M+2}C_1\sum_{j=m_k-M-L_k}^\infty I_\gamma\lmk \frac{aj}{2v_a}\rmk.
\end{align}
In the first inequality we used $B(X,s,t,k)\le 2C_1$ and (\ref{ig}) and the oddness of $W_\gamma(t)$.
As we assumed that $k$ is large enough so that $n_k\ge 4(M+R+l)$, we have $m_k-M-L_k\ge 1$.
Therefore, in the second inequality,
the number of $X\cap\lmk \Lambda_{m_k}\rmk^c\neq\emptyset$ 
with $\diam X<M$  and $d(X,\Lambda_{L_k})=j\ge 1$ is bounded by $2^{M}$.
The right hand side is independent of $s\in[0,1]$ and goes to $0$
as $k\to\infty$.

Similary, we may estimate  $X\subset \Lambda_{m_k}$
and $|t|\ge T_X^{k}$ part.  
\begin{align}\label{three}
&\lV
\sum_{X\subset \Lambda_{m_k}}
\int_{|t|\ge T_X^k} dt \; W_\gamma(t) B(X,s,t,k)
\rV
\le 4C_1\sum_{\substack{X\subset \Lambda_{m_k}\\\diam X<M}}I_\gamma(T_{X}^k)
\nonumber\\
&\le
2^{M+2}C_1\sum_{j=n_k-R-m_k}^\infty I_\gamma\lmk \frac{aj}{2v_a}\rmk.
\end{align}
The last line is independent of $s\in[0,1]$ and goes to $0$
as $k\to\infty$.

Hence we have shown
\begin{align}
\sup_{s\in[0,1]}
\lV
\left[
\sum_{X\subset \Lambda_{n_k}}
\int_{-\infty}^\infty dt W_\gamma(t)
B(X,s,t,k),\;
\Pi_{L_k}\lmk \lmk \hat \alpha_s^{(k,o)}\rmk^{-1}\lmk A\rmk\rmk
\right]
\rV\to 0,\quad k\to\infty.
\end{align}
The latter part of (\ref{bn}) can be estimated analogously.
We divide the integral into $|t|\le S_{X}^k$ part and $|t|\ge S_X^{k}$ part.
The $|t|\le S_{X}^k$ part can be treated as in (\ref{four}) and we have
\begin{align}
&\sum_{\substack{X\subset \Lambda_{n_k}\\X \subset\Lambda_{n_k}\setminus \Lambda_{n_k-R}}}
\int_{|t|\le S_{X}^k} dt \left|W_\gamma(t)\right|
\lV
\left[
\tau_{t}^{\Phi(s)+\Psi_{k}(s),\Lambda_{n_k}} \lmk \Psi_k'(X;s)\rmk,
\;\Pi_{L_k}\lmk \lmk \hat \alpha_s^{(k,o)}\rmk^{-1}\lmk A\rmk\rmk
\right]
\rV\nonumber\\
&\le
\sum_{\substack{X\subset \Lambda_{n_k}\\X \subset\Lambda_{n_k}\setminus \Lambda_{n_k-R}}}
C_1C_{1a}\lV W_\gamma\rV_1e^{v_aS_X^k-ad(X,\Lambda_{L_k})} (2R)\sum_{y\in\bbZ} F(|y|)\lV A\rV
\le C_1C_{1a}\lV W_\gamma\rV_1 2^{2R}(2R)\sum_{y\in\bbZ} F(|y|)
\sum_{l=(n_k-R-L_k)}^\infty e^{-\frac{al}2}\lV A\rV.
\end{align}
The last line is independent of $s\in[0,1]$ and goes to $0$
as $k\to\infty$.
The $|t|\ge S_{X}^k$ part can be treated as in (\ref{three}) and we have
\begin{align}
&\sum_{\substack{X\subset \Lambda_{n_k}\\X \subset\Lambda_{n_k}\setminus \Lambda_{n_k-R}}}
\int_{|t|\ge S_{X}^k} dt \;
\left|W_\gamma(t)\right|
\lV
\tau_{t}^{\Phi(s)+\Psi_{k}(s),\Lambda_{n_k}} \lmk \Psi_k'(X;s)\rmk
\rV\nonumber\\
&
\le
\sum_{\substack{X\subset \Lambda_{n_k}\\X \subset\Lambda_{n_k}\setminus \Lambda_{n_k-R}}}
2C_1I_\gamma(S_X^k)
\le 2^{2R+1}C_1\sum_{j=n_k-R-L_k}^\infty I_\gamma\lmk \frac{aj}{2v_a}\rmk.
\end{align}
The last line is independent of $s\in[0,1]$ and goes to $0$
as $k\to\infty$.

Hence we have shown
\begin{align}\label{zenhan}
\sup_{s\in[0,1]}\lV \left[
-D_k(s)+\hat D_{k,o}(s), \;\Pi_{L_k}\lmk \lmk \hat \alpha_s^{(k,o)}\rmk^{-1}\lmk A\rmk\rmk
\right]\rV
\to 0.
\end{align}

We also bound $-D_k(s)+\hat D_{k,o}(s)$ itself.
From (\ref{ppp})
\begin{align}
&\lV-D_k(s)+\hat D_{k,o}(s)\rV\nonumber\\
&\le
\sum_{\substack{X\subset \Lambda_{n_k}\\\diam X<M}}
 \int_{|t|\le T_X^k} dt |W_\gamma(t)|C_1C_{3,a}\sum_{x\in X}e^{v_a|t|-a\cdot d(x,\lmk \Lambda_{n_k-R}\rmk^c)}
 +2C_1\sum_{\substack{X\subset \Lambda_{n_k}\\ \diam X<M}}\int_{|t|\ge T_X^k} dt |W_\gamma(t)|\nonumber\\
& +\sum_{X \subset\Lambda_{n_k}\setminus \Lambda_{n_k-R}}
C_1\int_{-\infty}^\infty dt |W_\gamma(t)|
\nonumber\\
&\le\lmk
 C_1C_{3,a}M\lV W_\gamma\rV_1
\sum_{l=1}^\infty
\sum_{\substack{X\subset \Lambda_{n_k}\\\diam X<M\\ d(X,\lmk \Lambda_{n_k-R}\rmk^c)=l}}
 e^{-\frac{al}2}
 \rmk
 +\lmk
 4C_1
 \sum_{l=0}^\infty I_\gamma\lmk \frac{al}{2v_a}\rmk
 \sum_{\substack{X\subset \Lambda_{n_k}\\ \diam X<M\\ d(X,\lmk \Lambda_{n_k-R}\rmk^c)=l}}
 1
 \rmk
  +
  \lmk
  2^{2R}C_1
\lV W_\gamma\rV_1\rmk
\nonumber\\
&\le \lmk
C_1C_{3,a}2^{M}M\lV W_\gamma\rV_1
\sum_{l=1}^\infty
 e^{-\frac{al}2}
 \rmk
 +\lmk
 2^{M+3}C_1R
 \sum_{l=0}^\infty I_\gamma\lmk \frac{al}{2v_a}\rmk\rmk
 +2^{2R}C_1
\lV W_\gamma\rV_1
\end{align}
In the second inequality, we used the fact that $T_X^k=0$ if $d(X,\lmk \Lambda_{n_k-R}\rmk^c))=0$.
The last line is finite and independent of $s\in[0,1]$ and $k\in\nan$.
Combining this with (\ref{dfr}),
we obtain
\begin{align}\label{kouhan}
\sup_{s\in[0,1]}\lV\left[
-D_k(s)+\hat D_{k,o}(s),\;
\lmk \hat \alpha_s^{(k,o)}\rmk^{-1}\lmk A\rmk-
\Pi_{L_k}\lmk \lmk \hat \alpha_s^{(k,o)}\rmk^{-1}\lmk A\rmk\rmk
\right ]\rV
\to
0,\quad k\to\infty.
\end{align}
From (\ref{zenhan}) and (\ref{kouhan}), we obtain (\ref{claim}).

From (\ref{claim}),
we prove (\ref{ibng}), 
\begin{align}
&\lV 
\lmk \alpha_s^{(k)}\rmk^{-1}\lmk A\rmk-\lmk \hat \alpha_{s}^{(k,o)}\rmk^{-1}(A)
\rV
=\lV 
A-\alpha_s^{(k)}\circ \lmk \hat \alpha_{s}^{(k,o)}\rmk^{-1}(A)
\rV
=\lV
\int_0^s du\;
\frac{d}{du}\alpha_u^{(k)}\circ \lmk \hat \alpha_{u}^{(k,o)}\rmk^{-1}(A)
\rV\nonumber\\
&=\lV
\int_0^s du\;
\alpha_u^{(k)}\lmk
i\left[-D_k(u)+\hat D_{k,o}(u),\;
\lmk \hat \alpha_{u}^{(k,o)}\rmk^{-1}(A)
\right]
\rmk
\rV
\le \varepsilon_k(A)\to 0,\quad k\to\infty,
\end{align}
for any $l\in\nan$ and $A\in\caA_{\Lambda_l}$.
Hence we have
\begin{align}
\lim_{k\to\infty}\lV 
\lmk \alpha_s^{(k)}\rmk^{-1}\lmk A\rmk-\lmk \hat \alpha_{s}^{(k,o)}\rmk^{-1}(A)
\rV=0,
\end{align}
 for any $A\in\caA$.
As we also have
\begin{align}
\lV 
\lmk \hat \alpha_{s}^{(k,o)}\rmk^{-1}(A)-\lmk \alpha_{s,o}\rmk^{-1}(A)
\rV\to 0,\quad k\to \infty,
\end{align} 
for any $A\in\caA$
from \cite{bmns},
we obtain
\begin{align}
\lV 
\lmk \alpha_s^{(k)}\rmk^{-1}\lmk A\rmk-\lmk \alpha_{s,o}\rmk^{-1}(A)
\rV\to 0,\quad k\to \infty,
\end{align}
for any $A\in\caA$.
From this, we have
\begin{align}
\lV
\alpha_{s,o}(A)-\alpha_s^{(k)}(A)
\rV
=
\lV
\alpha_s^{(k)}\lmk
\lmk\alpha_s^{(k)}\rmk^{-1}- \lmk \alpha_{s,o}\rmk^{-1}
\rmk
\alpha_{s,o}(A)
\rV
\to 0,\quad k\to\infty,
\end{align}
for any $A\in\caA$.
Hence we have proven the Lemma.

\end{proofof}

\begin{proofof}[Lemma \ref{vvvk}]

First we prove
\begin{align}\label{ni}
\sup_{s\in[0,1]}\lmk
\sum_{X\subset\Lambda_{m_k}}\int_{-\infty}^\infty dt\; \lv W_\gamma(t)\rv
\lV
-\tau_t^{\Phi(s),\Lambda_{n_k}}\lmk\Phi'(X;s)\rmk
+\tau_t^{\Phi(s)}\lmk\Phi'(X;s)\rmk
\rV
\rmk
\to 0,\quad
k\to\infty.
\end{align}
To prove this, for each $X\in{\mathfrak S}_{\bbZ}$ and $k\in\nan$
we set
\begin{align}\label{sx}S_X^{(k)}:=\frac{a}{2v_a}d(\Lambda_{n_k}^c,X).
\end{align}
With this $S_X^{(k)}$, we divide the integral into $|t|\le S_X^{(k)}$ part and
$|t|\ge S_X^{(k)}$ part.
By (\ref{nos}) and Definition \ref{boundary} {\it 2.},
 $|t|\le S_X^{(k)}$ part is bounded as
\begin{align}
&\sum_{X\subset\Lambda_{m_k}}\int_{|t|\le S_X^{(k)}} dt\; \lv W_\gamma(t)\rv
\lV
-\tau_t^{\Phi(s),\Lambda_{n_k}}\lmk\Phi'(X;s)\rmk
+\tau_t^{\Phi(s)}\lmk\Phi'(X;s)\rmk
\rV
\le 
C_{4,a}
e^{-\frac{a}{2}(n_k-m_k)}.
\end{align}
Here $C_{4,a}$ is a positive constant which is independent of $k,s$.
The right hand side is indepenednt of $s\in[0,1]$ and converges to $0$ as $k\to\infty$.
The $|t|\ge S_X^{(k)}$ part 
\begin{align}\label{sxi}
&\sum_{X\subset\Lambda_{m_k}}\int_{|t|\ge S_X^{(k)}} dt\; \lv W_\gamma(t)\rv
\lV
-\tau_t^{\Phi(s),\Lambda_{n_k}}\lmk\Phi'(X;s)\rmk
+\tau_t^{\Phi(s)}\lmk\Phi'(X;s)\rmk
\rV\nonumber\\
&\le 
2\sum_{X\subset\Lambda_{m_k}}\int_{|t|\ge S_X^{(k)}} dt\; \lv W_\gamma(t)\rv
\lV
\Phi'(X;s)
\rV
\le
4C_1\sum_{\substack{X\subset\Lambda_{m_k}\\
\diam(X)<M}}I_\gamma(S_X^{(k)})\nonumber\\
&=4C_1\sum_{l=n_k-m_k}^\infty\sum_{\substack{X\subset\Lambda_{m_k}\\
\diam(X)<M\\d(X,\Lambda_{n_k}^c)=l}
}I_\gamma(S_X^{(k)})
\le
4C_1\cdot 2^{M}\sum_{l=n_k-m_k}^\infty
I_\gamma(\frac{a}{2v_a}l).
\end{align}
Here, we used Definition \ref{boundary} {\it 2.} for the second inequality.
In the third line, we recalled the definition of $S_X^{(k)}$ (\ref{sx}) and
used the fact that for any finite set $X$ in $\Lambda_{m_k}$
with $\diam(X)<M$, the distance between $X$ and $\Lambda_{n_k}^c$
is at least $n_k-m_k$.
We also used the fact 
that for any $l\ge n_k-m_k$,
the number of $X\subset\Lambda_{m_k}$ with $\diam(X)<M$
such that $d(X,\Lambda_{n_k}^c)=l$
is at most $2^{M}$.
The right hand side of (\ref{sxi}) is independent of
$s\in[0,1]$ goes to $0$
as $k\to\infty$, because of (\ref{wg}).
Hence we have shown (\ref{ni}).
Similarly,
we have
\begin{align}\label{ichi}
\sup_{s\in[0,1]}\lmk
\sum_{X\subset\Lambda_{m_k}}\int_{-\infty}^\infty dt\; \lv W_\gamma(t)\rv
\lV
-\tau_t^{\tilde\Phi(s),\Lambda_{n_k}}\lmk\tilde\Phi'(X;s)\rmk
+\tau_t^{\tilde\Phi(s)}\lmk\tilde\Phi'(X;s)\rmk
\rV
\rmk
\to 0,\quad
k\to\infty.
\end{align}
Next we show
\begin{align}\label{104}
\sup_{s\in[0,1]}\lmk
\int_{-\infty}^\infty dt\; \lv W_\gamma(t)\rv
\sum_{\substack{X\in {\mathfrak S}_\bbZ\\
X\cap \Lambda_{m_k}^c\neq \emptyset}}
\lV
\tau_t^{\tilde\Phi(s)}\lmk\tilde\Phi'(X;s)\rmk
-\tau_t^{ \Phi(s)}\lmk\Phi'(X;s)\rmk
\rV
\rmk
\to 0,\quad
k\to\infty.
\end{align}
To prove this, for each $X\in{\mathfrak S}_{\bbZ}$,
we set
\begin{align}\label{tx}
&R_X:=\min\left\{
d(X,Y)\mid Y\cap[0,\infty)\neq\emptyset,\; 
Y\cap(-\infty,-1]\neq\emptyset,\;\diam Y<M\right\},\\
&T_X:=\frac{a}{2v_a}R_X.
\end{align}
With this $T_X$, we divide the integral into $|t|\le T_X$ part and
$|t|\ge T_X$ part.
We then have
\begin{align}
&\sum_{\substack{X\in {\mathfrak S}_\bbZ\\
X\cap \Lambda_{m_k}^c\neq \emptyset}}
\int_{-\infty}^\infty dt\; 
\lv W_\gamma(t)\rv
\lV
\tau_t^{\tilde\Phi(s)}\lmk\tilde\Phi'(X;s)\rmk
-\tau_t^{ \Phi(s)}\lmk\Phi'(X;s)\rmk
\rV\\
&\le
\sum_{\substack{X\in {\mathfrak S}_\bbZ\\
X\cap \Lambda_{m_k}^c\neq \emptyset}}
\int_{|t|\le T_X} dt\; 
\lv W_\gamma(t)\rv
\lV
\tau_t^{\tilde\Phi(s)}\lmk\Phi'(X;s)\rmk
-\tau_t^{ \Phi(s)}\lmk\Phi'(X;s)\rmk
\rV\label{7lb}\\
&+
\sum_{\substack{X\in {\mathfrak S}_\bbZ\\
X\cap \Lambda_{m_k}^c\neq \emptyset}}
\int_{|t|\le T_X} dt\; 
\lv W_\gamma(t)\rv
\lV
\tau_t^{\tilde\Phi(s)}\lmk\tilde\Phi'(X;s)-\Phi'(X;s)\rmk
\rV\label{72b}\\
&+
\sum_{\substack{X\in {\mathfrak S}_\bbZ\\
X\cap \Lambda_{m_k}^c\neq \emptyset}}
\int_{|t|\ge T_X} dt\; 
\lv W_\gamma(t)\rv
\lV
\tau_t^{\tilde\Phi(s)}\lmk\tilde\Phi'(X;s)\rmk
-\tau_t^{ \Phi(s)}\lmk\Phi'(X;s)\rmk
\rV.\label{73b}
\end{align}
The first part (\ref{7lb}) is bounded by use of the (\ref{tppd}) as
\begin{align*}
|(\ref{7lb})|
\le
\lV W_\gamma\rV_1 C_1C_{2,a}M
\sum_{l=m_k-M}^\infty
\sum_{\substack{X\in {\mathfrak S}_\bbZ\\
X\cap \Lambda_{m_k}^c\neq \emptyset\\
\diam X<M\\
d(X,\{0\})=l}}
e^{\frac a 2 (-l+M)}
\le C_{5,a}\sum_{l=m_k-M}^\infty e^{-\frac{al}2}.
\end{align*}
In the last line, we used 
$R_X\le d(X,\{0,-1\})\le d(\{x\},\{0\})$ for all $x\in X$ and 
$d(X,\{0\})-M\le d(X,[-M,M])\le R_X$. 
(Recall we assumed $M>2$ in the beginning of this section.)
We also used
the fact that 
the number of $X$ with
$\diam X<M$ 
and $d(X,\{0\})=l$ is bounded by $2^{M}$,
and introduced a new constant $C_{5,a}:=2^{M}M \lV W_\gamma\rV_1 C_1C_{2,a}e^{\frac{a}2{M}}$.
The right hand side is independent of $s\in[0,1]$ and goes to $0$
as $k\to\infty$.
The second term (\ref{72b}) is $0$ for $k$ large enough.
The third term (\ref{73b}) can be evaluated as in (\ref{sxi}).
We have for $m_k>2M$,
\begin{align}
|(\ref{73b})|
\le
4C_1
\sum_{\substack{X\in {\mathfrak S}_\bbZ\\
X\cap \Lambda_{m_k}^c\neq \emptyset\\\diam X<M}}
I_\gamma(T_X)
\le
4C_1
\sum_{l=m_k-M}^\infty
\sum_{\substack{X\in {\mathfrak S}_\bbZ\\
X\cap \Lambda_{m_k}^c\neq \emptyset\\\diam X<M\\
d(X,\{0\})=l}}I_\gamma\lmk \frac{a}{2v_a}(l-M)\rmk
\le 4C_1 2^{M}
\sum_{l=m_k-2M}^\infty
I_\gamma\lmk \frac{a}{2v_a}l\rmk.
\end{align}
Here we used 
$d(X,\{0\})-M\le R_X$, for the second inequality.

The right hand side is independent of $s\in[0,1]$ and goes to $0$
as $k\to\infty$.
Hence we have shown (\ref{104}).
Similarly, we obtain
\begin{align}\label{san}
\sup_{s\in[0,1]}\lmk
\int_{-\infty}^\infty dt\; \lv W_\gamma(t)\rv
\sum_{\substack{X\subset\Lambda_{n_k}\\
 X\cap \Lambda_{m_k}^c\neq \emptyset}}
\lV
\tau_t^{\tilde\Phi(s),\Lambda_{n_k}}\lmk\tilde\Phi'(X,s)\rmk
-\tau_t^{\Phi(s),\Lambda_{n_k}}\lmk\Phi'(X,s)\rmk
\rV
\rmk
\to 0,\quad
k\to\infty.
\end{align}

From (\ref{104}),
we have
\begin{align}
\int_{-\infty}^\infty dt\;
\lv W_\gamma(t)\rv\;\;
\lmk
\sum_{X\in{\mathfrak S}_\bbZ}
\lV
\tau_t^{\tilde\Phi(s)} \lmk \tilde \Phi'(X;s)\rmk
- \tau_t^{\Phi(s)} \lmk  \Phi'(X;s)\rmk
\rV
\rmk<\infty.
\end{align}
Therefore, we may define
\begin{align}
V(s):=
\int_{-\infty}^\infty dt\;
 W_\gamma(t)\;\;
\lmk
\sum_{X\in{\mathfrak S}_\bbZ}
 \tau_t^{\tilde\Phi(s)} \lmk \tilde \Phi'(X;s)\rmk
- \tau_t^{\Phi(s)} \lmk  \Phi'(X;s)\rmk
\rmk\in\caA,
\end{align}
and from (\ref{ni}), (\ref{ichi}), (\ref{104}), (\ref{san}),
we obtain (\ref{vcon}).
\end{proofof}

\section{On-site group symmetry}\label{onsite}
For a Hilbert space $\caK$, we denote by $\caU(\caK)$ the set of all unitaries on $\caK$.
Let $G$ be a {\new {finite}} group and $w:G\to \caU(\bbC^{2S+1})$ a unitary representation of
$G$ on $\bbC^{2S+1}$.
Then there is an action $T:G\to \Aut \caA$ of $G$ on $\caA$ 
such that
\begin{align}
T_g\lmk
A
\rmk
=\lmk
\bigotimes_I w(g)\rmk
A
\lmk
\bigotimes_I w(g)^*\rmk,\quad g\in G,\quad A\in\caA_I,
\end{align}
for any finite interval $I$ of $\bbZ$.
A state $\varphi$ on $\caA$ is $G$-invariant if $\varphi\circ T_g=\varphi$ for any $g\in G$.
As $T_g (\caA_R)=\caA_R$, the restriction $T_{g,R}:=\left. T_g\right \vert_{\caA_R}$
is a $*$-automorphism on $\caA_R$. 

In \cite{Matsui1}, Matsui introduced the projective representation of $G$ associated 
to pure split $G$-invariant states.
As in Theorem \ref{sp}, it is unique up to unitary conjugacy and a phase, and the cohomology class is independent of the choice of the projective representation.
\begin{thm}
Let $\varphi$ be a $G$-invariant pure state on $\caA$, which satisfies the split property. Let $\varphi_R$ be the restriciton of
$\varphi$ to $\caA_R$, and $(\caH_{\varphi_R},\pi_{\varphi_R},\Omega_{\varphi_R})$ be the GNS triple of $\varphi_R$.
Then there are a Hilbert space $\caK_\varphi$, a $*$-isomorphism $\iota_\varphi : \pi_{\varphi_R}\lmk \caA_R\rmk{''}\to B(\caK_{\varphi})$,
and a projective unitary representation $U_\varphi:G\to\caU(\caK_\varphi)$ on $\caK_{\varphi}$ such that
\begin{align*}
\iota_\varphi\circ \pi_{\varphi_R}\circ T_{g,R}\lmk A\rmk
=U_\varphi(g) \lmk \iota_\varphi \circ\pi_{\varphi_R}\lmk A\rmk\rmk U_\varphi(g)^*,\quad
A\in\caA_R,\quad g\in G.
\end{align*}
These $\caK_\varphi$, $\iota_\varphi$, $U_{\varphi}$ are unique in the following sense.: 
If  a Hilbert space  $\tilde \caK_\varphi$, a $*$-isomorphism $\tilde\iota_\varphi : \pi_{\varphi_R}\lmk \caA_R\rmk{''}\to B(\tilde \caK_{\varphi})$,
and a projective unitary representation $\tilde U_\varphi:G\to\caU(\tilde \caK_\varphi)$ on $\tilde\caK_{\varphi}$ satisfy
\begin{align*}
\tilde \iota_\varphi\circ \pi_{\varphi_R}\circ T_{g,R}\lmk A\rmk
=\tilde U_\varphi(g) \lmk \tilde \iota_\varphi \circ\pi_{\varphi_R}\lmk A\rmk\rmk {\tilde U_\varphi(g)}^*,\quad
A\in\caA_R,\quad g\in G,
\end{align*}
then
there is a unitary $W:\caK_\varphi\to \tilde \caK_\varphi$ and $c: G\to  \bbT$
such that
\begin{align*}
&W\lmk \iota_\varphi \lmk x\rmk\rmk W^*=
\tilde \iota_\varphi \lmk x\rmk,
 \quad x\in \pi_{\varphi_R}\lmk \caA_R\rmk{''},\\
& c(g) WU_\varphi(g) W^*=\tilde U_\varphi(g),\quad g\in G.
 \end{align*}
In particular, the cohomology class of $U_\varphi$ is equal to that of $\tilde U_{\varphi}$.
\end{thm}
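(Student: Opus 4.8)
The plan is to transcribe the proof of Theorem \ref{sp}, with Lemma \ref{Wegner} replaced by its linear analogue: every $*$-automorphism of $B(\caK)$ is inner, with an implementing unitary that is unique up to a phase in $\bbT$, and every $*$-isomorphism $B(\caK_1)\to B(\caK_2)$ is spatial. The only structural difference is that the role of "the sign $\sigma_\varphi$" is now played by "the class of a $\bbT$-valued $2$-cocycle".

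First I would use $G$-invariance to produce a \emph{genuine} unitary representation of $G$ on the GNS space. Since $T_g(\caA_R)=\caA_R$ and $\varphi\circ T_g=\varphi$, the restriction $\varphi_R$ is $T_{g,R}$-invariant, so $(\caH_{\varphi_R},\pi_{\varphi_R}\circ T_{g,R},\Omega_{\varphi_R})$ is again a GNS triple for $\varphi_R$; by uniqueness there is a unique unitary $V_g$ on $\caH_{\varphi_R}$ with $V_g\pi_{\varphi_R}(A)\Omega_{\varphi_R}=\pi_{\varphi_R}(T_{g,R}(A))\Omega_{\varphi_R}$ for $A\in\caA_R$ and $V_g\Omega_{\varphi_R}=\Omega_{\varphi_R}$. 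Because $\Omega_{\varphi_R}$ is a fixed cyclic vector, $g\mapsto V_g$ is a genuine unitary representation, and conjugation by $V_g$ carries $\pi_{\varphi_R}(\caA_R)$ — hence $\pi_{\varphi_R}(\caA_R)''$ — onto itself, giving a normal $*$-automorphism $\hat T_{g,R}$ of $\pi_{\varphi_R}(\caA_R)''$ with $\hat T_{g,R}\circ\pi_{\varphi_R}=\pi_{\varphi_R}\circ T_{g,R}$. As the unique normal extension of $\pi_{\varphi_R}\circ T_{g,R}$, the family $\{\hat T_{g,R}\}_{g\in G}$ is a homomorphism $G\to\Aut(\pi_{\varphi_R}(\caA_R)'')$. (This is essentially the construction in \cite{Matsui1}.) Next, since $\varphi$ is split, $\pi_{\varphi_R}(\caA_R)''$ is a type I factor, so there are a Hilbert space $\caK_\varphi$ and a $*$-isomorphism $\iota_\varphi:\pi_{\varphi_R}(\caA_R)''\to B(\caK_\varphi)$, and $\Theta_g:=\iota_\varphi\circ\hat T_{g,R}\circ\iota_\varphi^{-1}$ is a homomorphism $G\to\Aut(B(\caK_\varphi))$. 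Implementing each $\Theta_g$ by a unitary $U_\varphi(g)$ on $\caK_\varphi$, normalized by $U_\varphi(e)=\unit$, the residual phase freedom gives $U_\varphi(g)U_\varphi(h)=c(g,h)U_\varphi(gh)$ with $c(g,h)\in\bbT$; comparing $(U_\varphi(g)U_\varphi(h))U_\varphi(k)$ with $U_\varphi(g)(U_\varphi(h)U_\varphi(k))$ forces $c$ to be a $2$-cocycle, so $U_\varphi$ is a projective unitary representation, and unwinding the definitions yields the stated covariance relation. This gives existence.

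For uniqueness, given a second triple $(\tilde\caK_\varphi,\tilde\iota_\varphi,\tilde U_\varphi)$, the $*$-isomorphism $\tilde\iota_\varphi\circ\iota_\varphi^{-1}:B(\caK_\varphi)\to B(\tilde\caK_\varphi)$ is implemented by a unitary $W:\caK_\varphi\to\tilde\caK_\varphi$ by Wigner's theorem. Exactly as in the uniqueness part of the proof of Theorem \ref{sp}, both $\tilde U_\varphi(g)$ and $WU_\varphi(g)W^*$ implement $\tilde\iota_\varphi\circ\hat T_{g,R}\circ\tilde\iota_\varphi^{-1}$ on $B(\tilde\caK_\varphi)$, so $\tilde U_\varphi(g)^*WU_\varphi(g)W^*$ commutes with $B(\tilde\caK_\varphi)$ and is therefore a scalar; hence $\tilde U_\varphi(g)=c(g)WU_\varphi(g)W^*$ for some $c(g)\in\bbT$. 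This is the asserted relation, and, comparing the $2$-cocycles of $\tilde U_\varphi$ and $U_\varphi$ through it, they differ by the coboundary $(g,h)\mapsto c(g)c(h)c(gh)^{-1}$, so the cohomology classes coincide.

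I do not expect a genuine obstacle: the one point needing care is that $\{\hat T_{g,R}\}_{g\in G}$ must respect the group law, since this is precisely what makes $\Theta_g\Theta_h=\Theta_{gh}$ hold and hence makes the $2$-cocycle of $U_\varphi$ — and its class in $H^2(G,\bbT)$ — well defined. This is what the passage through the genuine GNS representation $V$, whose phases are pinned down by the invariant vector $\Omega_{\varphi_R}$, secures; everything else is the obvious linear rewrite of the proofs of Theorem \ref{sp} and Theorem \ref{aei}.
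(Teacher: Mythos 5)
Your proof is correct and is exactly the route the paper intends: the appendix theorem is stated as the linear analogue of Theorem \ref{sp} (deferring to Matsui \cite{Matsui1} for the construction), and your transcription — genuine GNS unitaries $V_g$ pinning down the group law of the normal extensions $\hat T_{g,R}$, inner implementation on $B(\caK_\varphi)$ yielding the $2$-cocycle, and the spatial isomorphism $W$ yielding the coboundary in the uniqueness part — matches the paper's proofs of Theorems \ref{sp} and \ref{aei} point for point. No gaps.
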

The same argument as the proof of Theorem \ref{aei} shows that the cohomology class is an invariant
of
factorizable automorphic equivalence, preserving $G$-symmetry. 
\begin{thm}
Let $\varphi_1,\varphi_2$ be $G$-invariant pure states satisfying the split property.
Suppose that there 
 exists an automorphism $\alpha$ on $\caA$ such that
\begin{align}
\varphi_2=\varphi_1\circ \alpha\quad\text{and}\quad
\alpha\circ T_g=T_g\circ\alpha,\quad g\in G.
\end{align}
Furthermore, assume that there are automorphisms
$\alpha_R$,
$\alpha_L$ on $\caA_R$, $\caA_L$ respectively, and a unitary $W$ in $\caA$
such that
\begin{align}
\alpha_R\circ T_{g,R}=T_{g,R}\circ\alpha_R,\quad g\in G
\end{align}
and
\[
\alpha\circ\lmk\alpha_L^{-1}\otimes \alpha_R^{-1}\rmk(A)=
WAW^*,\quad A\in\caA.
\]
Then the 
 the cohomology class of the associated projective representations of $\varphi_1$ and 
 $\varphi_2$ are equal.
 \end{thm}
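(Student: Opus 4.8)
The plan is to transcribe the proof of Theorem~\ref{aei}, replacing the antiunitary $J_\varphi$ by the projective unitary representation $U_\varphi(\cdot)$ and running the diagram chase separately for each $g\in G$; the only new feature is algebraic, namely that the phases produced by Wigner's theorem assemble into a function $c:G\to\bbT$ whose coboundary relates the two $2$-cocycles. First I would reduce to a quasi-equivalence statement on the right half-chain. Let $\varphi_L,\varphi_R$ be the restrictions of $\varphi_1$ to $\caA_L,\caA_R$. Exactly as in the proof of Theorem~\ref{aei}: from $\alpha\circ(\alpha_L^{-1}\otimes\alpha_R^{-1})=\Ad W$ the state $\varphi_2\circ(\alpha_L^{-1}\otimes\alpha_R^{-1})=\varphi_1\circ\Ad W$ is quasi-equivalent to $\varphi_1$; the split property of $\varphi_1$ and (the proof of) Proposition~2.2 of \cite{Matsui1} give that $\varphi_L\otimes\varphi_R$ is quasi-equivalent to $\varphi_1$; hence $\varphi_2\circ(\alpha_L^{-1}\otimes\alpha_R^{-1})$ is given by a density matrix $\rho$ on the GNS space $\caH_L\otimes\caH_R$ of $\varphi_L\otimes\varphi_R$. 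Putting $A=\unit\otimes B$ with $B\in\caA_R$ and taking $\sigma=\Tr_{\caH_L}\rho$ shows that $\varphi_2|_{\caA_R}\circ\alpha_R^{-1}$ is $\varphi_R$-normal; since $\varphi_R$ is a factor state (being pure and split), $\varphi_2|_{\caA_R}\circ\alpha_R^{-1}$ and $\varphi_R$ are quasi-equivalent.

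Next I would transport the type I structure. Let $(\caH_i,\pi_i,\Omega_i)$ be the GNS triple of $\varphi_i|_{\caA_R}$ and $(\caK_{\varphi_i},\iota_{\varphi_i},U_{\varphi_i})$ the data of the preceding theorem. Since $(\caH_2,\pi_2\circ\alpha_R^{-1},\Omega_2)$ is a GNS triple of $\varphi_2|_{\caA_R}\circ\alpha_R^{-1}$, the quasi-equivalence and Theorem~2.4.26 of \cite{BR1} give a $*$-isomorphism $\tau:\pi_1(\caA_R)''\to\pi_2(\caA_R)''$ with $\tau(\pi_1(A))=\pi_2\circ\alpha_R^{-1}(A)$, and Wigner's theorem applied to $\iota_{\varphi_2}\circ\tau\circ\iota_{\varphi_1}^{-1}$ produces a unitary $U:\caK_{\varphi_1}\to\caK_{\varphi_2}$ with $UxU^*=\iota_{\varphi_2}\circ\tau\circ\iota_{\varphi_1}^{-1}(x)$. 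Using the intertwining identities of the preceding theorem together with $\alpha_R\circ T_{g,R}=T_{g,R}\circ\alpha_R$ (hence $\alpha_R^{-1}\circ T_{g,R}=T_{g,R}\circ\alpha_R^{-1}$), the same string of equalities as in the proof of Theorem~\ref{aei} gives, for every $g\in G$ and $A\in\caA_R$,
\[
U_{\varphi_2}(g)\,U\lmk\iota_{\varphi_1}\circ\pi_1(A)\rmk U^*U_{\varphi_2}(g)^*
=U\,U_{\varphi_1}(g)\lmk\iota_{\varphi_1}\circ\pi_1(A)\rmk U_{\varphi_1}(g)^*U^*,
\]
so $U_{\varphi_1}(g)^*U^*U_{\varphi_2}(g)U$ commutes with all of $B(\caK_{\varphi_1})$, hence equals $c(g)\unit$ for some $c(g)\in\bbT$; equivalently $U^*U_{\varphi_2}(g)U=c(g)U_{\varphi_1}(g)$.

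Finally I would extract the cohomological conclusion. Set $V(g):=U^*U_{\varphi_2}(g)U$. Conjugation by the fixed unitary $U$ leaves the $2$-cocycle unchanged, so $V$ carries the same cocycle as $U_{\varphi_2}$; on the other hand $V(g)=c(g)U_{\varphi_1}(g)$ shows that the cocycle of $V$ equals that of $U_{\varphi_1}$ multiplied by the coboundary $(g,h)\mapsto c(g)c(h)c(gh)^{-1}$. Therefore the cocycles of $U_{\varphi_1}$ and $U_{\varphi_2}$ differ by a coboundary, i.e.\ their classes in $H^2(G,\bbT)$ agree, which is the assertion. I expect no genuine analytic obstacle here: the hard input (factorizability of the quasi-local automorphism from Proposition~\ref{aep}, the split property, and the $G$-equivariant type I structure from the preceding theorem) is already supplied by the hypotheses, so the only point requiring care is that the phase $c(g)$ is the sole remaining ambiguity and that it is recorded with the correct cocycle convention so that the coboundary relation comes out as stated.
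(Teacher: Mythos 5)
Your proposal is correct and is exactly the argument the paper intends: the paper itself only remarks that "the same argument as the proof of Theorem \ref{aei}" applies, and your transcription — replacing $J_\varphi$ by $U_\varphi(g)$, deriving $U^*U_{\varphi_2}(g)U=c(g)U_{\varphi_1}(g)$ for each $g$, and observing that the resulting phases form a coboundary — is that argument carried out faithfully.
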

From this, we can show that the cohomology class is invariant of $C^1$-classification.
\begin{thm}
Let $\Phi:[0,1]\ni s \to \Phi(s):=\{\Phi(X;s)\}_{X\in {\mathfrak S}_\bbZ}\in {\caB_f}$ be a $C^1$-path of
interactions, satisfying the {\it Condition B} 
with
\begin{center}
{\it 6'.} For each $s\in[0,1]$, $\Phi(s)$ is $G$-invariant i.e.,
\[
T_g\lmk \Phi(X;s)\rmk=\Phi(X;s),\quad g\in G,\quad X\in\mathfrak S_\bbZ,
\]
\end{center}
instead of {\it 6}.
Then
the cohomology class of the associated representation of the ground state
 does not change along the path.
\end{thm}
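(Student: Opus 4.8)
The plan is to reduce the statement, exactly as Theorem \ref{c1t} was reduced to Proposition \ref{aep} together with Theorem \ref{aei}, to two ingredients: (a) the $G$-analogue of Proposition \ref{aep}, namely that the unique $\tau^{\Phi(0)}$- and $\tau^{\Phi(1)}$-ground states $\varphi_0$, $\varphi_1$ are automorphic equivalent via a $G$-invariant automorphism which allows a $G$-invariant factorization; and (b) the $G$-analogue of Theorem \ref{aei} stated above in this appendix, which then yields that the cohomology classes of the associated projective representations agree. Only (a) requires new input, and I would obtain it by rerunning the construction of Section \ref{c1}.

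Concretely, I would carry out the quasi-adiabatic construction of Section \ref{c1} word for word, keeping the same $P_k(s)$, $U_k(s)$, $D_k(s)$, $\hat D_{k,i}(s)$, $\alpha_s^{(k)}$, $\hat\alpha_s^{(k,i)}$ for $i=o,L,R$, their thermodynamic limits $\alpha_{s,i}$, the operators $V_k(s)$, $V(s)$, and the unitaries $W_k(s)$, $W(s)$. None of Lemma \ref{aaak}, Lemma \ref{vvvk}, the Lieb-Robinson estimates of the appendix, or the weak*-compactness argument identifying any weak* accumulation point of the low-energy states with $\varphi_s$ makes any reference to a symmetry, so all of this goes through verbatim; in particular one obtains $\varphi_s=\varphi_0\circ\alpha_{s,o}$ and the factorization $\alpha_{s,o}\circ\lmk\alpha_{s,L}^{-1}\otimes\alpha_{s,R}^{-1}\rmk=\Ad W(s)$ exactly as in Proposition \ref{aep}.

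The one place where Condition \textit{6'} enters is the $G$-covariance of the boundary-free flows, and here the on-site case is if anything simpler than the time reversal case. Since every $T_g$ is \emph{linear}, $G$-invariance of $\Phi(s)$ gives $T_g\lmk\lmk H_{\Phi(s)}\rmk_{I_{k,i}}\rmk=\lmk H_{\Phi(s)}\rmk_{I_{k,i}}$, hence $T_g\circ\tau_t^{\Phi(s),I_{k,i}}=\tau_t^{\Phi(s),I_{k,i}}\circ T_g$, and therefore $T_g\lmk\hat D_{k,i}(s)\rmk=\hat D_{k,i}(s)$ directly from (\ref{hd}); in contrast to the time reversal argument, the oddness of $W_\gamma$ is not needed. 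Consequently each $\hat\alpha_s^{(k,i)}$ commutes with every $T_g$, and by (\ref{hal}) so do the limits $\alpha_{s,i}$; likewise $T_g(V_k(s))=V_k(s)$ gives $T_g(W_k(s))=W_k(s)$ and hence $T_g(W(s))=W(s)$. As in the time reversal case, $\alpha_s^{(k)}$ itself need not commute with $T_g$ because $\Psi_k(s)$ is not assumed $G$-invariant, but this is harmless: $\alpha_s^{(k)}$ is used only through Lemma \ref{aaak} to identify $\varphi_s=\varphi_0\circ\alpha_{s,o}$, while the symmetry is read off from the boundary-free $\hat\alpha_s^{(k,o)}$. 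Taking $s=1$, this shows $\alpha_{1,o}$ is $G$-invariant, $\varphi_1=\varphi_0\circ\alpha_{1,o}$, and $(\alpha_{1,R},\alpha_{1,L},W(1))$ is a $G$-invariant factorization of $\alpha_{1,o}$, which is (a).

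I do not expect a genuine obstacle: the architecture of Sections \ref{index}--\ref{c1} was arranged precisely so that replacing the antilinear $\Xi$ (together with the oddness of $W_\gamma$) by the linear $T_g$ costs nothing, and the only point demanding a little bookkeeping is to track which objects must be $G$-covariant ($\hat\alpha_s^{(k,i)}$, $\alpha_{s,i}$, $W(s)$) and which need not be ($\alpha_s^{(k)}$), exactly as in the discussion following (\ref{hal}).
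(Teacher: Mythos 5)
Your proposal is correct and follows essentially the same route as the paper, which likewise reduces the appendix theorem to the $G$-analogues of Proposition \ref{aep} and Theorem \ref{aei} and observes that the construction of Section \ref{c1} carries over with $\Xi$ replaced by the linear $T_g$. Your identification of the single symmetry-sensitive step --- the $G$-invariance of $\hat D_{k,i}(s)$ and hence of $\hat\alpha_s^{(k,i)}$, $\alpha_{s,i}$ and $W(s)$, with the oddness of $W_\gamma$ no longer needed because $T_g$ is linear, while $\alpha_s^{(k)}$ need not be covariant since $\Psi_k(s)$ is not assumed symmetric --- is exactly the point the paper relies on.
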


\end{document}